\newtheorem{theorem}{Theorem}
\newtheorem{corollary}{Corollary}
\newtheorem{lemma}{Lemma}
\newtheorem{definition}{Definition}
\newtheorem{definition-proposition}{Definition and Proposition}
\newtheorem{proposition}{Proposition}
\newtheorem{assumption}{Assumption}
\begin{document}
\title{Synthetic topology in Homotopy Type Theory for probabilistic
  programming}

\author{Martin E.\@ Bidlingmaier \and Florian Faissole \and Bas Spitters}
\maketitle

\begin{abstract}
The ALEA Coq library formalizes measure theory based on a variant of the Giry monad on the category of sets.
This enables the interpretation of a probabilistic programming language with primitives for sampling from discrete distributions.
However, continuous distributions have to be discretized because the corresponding measures cannot be defined on all subsets of their carriers.

This paper proposes the use of synthetic topology to model continuous distributions for probabilistic computations in type theory.
We study the initial $\sigma$-frame and the corresponding induced topology on arbitrary sets.
Based on these intrinsic topologies we define valuations and lower integrals on sets, and prove versions of the Riesz and Fubini theorems.
We then show how the Lebesgue valuation, and hence continuous distributions, can be constructed.
\end{abstract}

\section{Introduction}
Monads on Cartesian closed categories are a semantics for a large class of effectful functional programming languages \citep{moggi1991notions}.
The ALEA Coq library \citep{ALEA} provides an interpretation of $\mathcal{R}$ml, a functional programming language with primitives for random choice, by constructing a version of the Giry monad~\citep{giry1982categorical} on the category of Coq's types.
Giry monads generally assign to a suitable class of spaces their spaces of measures or valuations, and in ALEA's case it is the class of discrete spaces.
ALEA's monad is suitable for embedding programming languages with discrete sampling constructs into the ambient logic of Coq, as for example in applications to cryptography \citep{Bguelin2010FormalCO}.
But continuous distributions are essential in statistics, machine learning and differential privacy, and these distributions have to be discretized in ALEA because they cannot be defined on discrete spaces.
For example, the Lebesgue measure is only defined on Borel sets, and hence is not directly definable in ALEA.

We propose the use of \emph{synthetic topology} \citep{lesnikphd} as a principled way of resolving the problem of continuous distributions.
In synthetic topology, one works in constructive (in our case even predicative) mathematics to which one adds axioms that make sets behave much like topological spaces.
The precise mathematical foundation we have in mind is \emph{Homotopy Type Theory (HoTT)} as it is used in modern proof assistants, on top of which we assume the necessary axioms of synthetic topology.

HoTT has a number of advantages over standard intensional type theory, even when one is only interested in sets, i.e.\@ types with trivial higher structure.
ALEA can only prove its version of the Giry monad to adhere to the
monad laws pointwise and resorts to setoids because neither function
extensionality nor quotients are part of standard Coq.
This is not a problem in HoTT, where function extensionality is provable and quotients of sets can be constructed as a special case of higher inductive types.
We refer to Section \ref{sec:preliminaries} for a detailed discussion of our mathematical foundations.

Our main contribution is a development of the theory of valuations (which play the role of measures) and lower integrals on sets in synthetic topology.
We show that a version of the Riesz theorem holds in this setting:
Valuations are in one-to-one correspondence with lower integrals.
This is then used to define a Giry monad $\mathfrak{G}$ on the category of sets in terms of the continuation monad, and we prove a version of the Fubini theorem.
Assuming the \emph{metrizability} of the real numbers $\mathbb{R}$, which asserts that the intrinsic topology on the set $\mathbb{R}$ agrees with the metric topology, we then define the Lebesgue valuation as an element of $\mathfrak{G}(\mathbb{R})$.
Finally, we obtain an interpretation of $\mathcal{R}$ml, a call-by-value PCF with probabilistic effects, via the restriction of the Giry monad $\mathfrak{G}$ to sub-probability valuations.

In non-classical measure theory (which is required because the
metrizability of $\mathbb{R}$ is contradictory with classical logic),
the Dedekind or Cauchy real numbers have to be replaced by the lower
reals $\mathbb{R}_l$ because the former are not closed under
enumerable suprema.
A lower real is a lower closed rounded inhabited subset of $\mathbb{Q}$, and in synthetic topology it is natural to require that this subset is furthermore an open subset.
An analogous construction for Dedekind reals in synthetic topology is studied by \citet{lesnikphd} in great generality. 
The HoTT book \citep{hottbook} also proposes this in the special case of $S$ equal to the initial $\sigma$-frame, and a formalization inspired by Coq's Math Classes \citep{math-classes} using the HoTT library \citep{HoTTlibrary} has been carried out by \citet{Gilbert}.
We develop the theory of lower reals valued in the initial $\sigma$-frame and construct an isomorphism $\mathbb{R}_l \cong \mathbb{Q}_\omega$ with the $\omega$-cpo completion of the rationals $\mathbb{Q}$.

The initial $\sigma$-frame is itself the $\omega$-cpo completion of the partial order of booleans $\bot \leq \top$, or, equivalently, the pointed $\omega$-cpo completion of the unit set $1 = \{ * \}$.
Pointed $\omega$-cpo completions of sets are studied by \citet{Altenkirch} in HoTT using quotient inductive inductive types \citep{altenkirch2018quotient}.
We explain how their construction can be adapted to $\omega$-cpo completions of preorders with respect to covers.
This generality is needed to define $\omega$-cpo completions of the rationals and the definition of a formal $\sigma$-frame of opens in the Dedekind reals $\mathbb{R}$.

In concurrent work with our initial work on this
topic \citep{faissole:hal-01654459}, \citet{Huang:thesis}
developed the semantics of a probabilistic programming language
targeted at machine learning with semantics in topological domains. 
Meanwhile, \citet{HuangMorrisettSpitters} have connected the two approaches
by showing that the
interpretation of a valuation in the internal logic of the
K2-realizability topos indeed gives the notion of valuation on
topological domains as defined by \citet{Huang:thesis}.

Some of the results presented in this paper are formalized\footnote{\url{https://github.com/FFaissole/Valuations/tree/d06d2c8c9cce3ddf6137}} in Coq on top of the HoTT library.
The core of the formalization consists of a proof of the dominance property of the Sierpinski space (Theorem \ref{th:dominance}), most of our discussion of the lower reals (Section \ref{sec:lower-reals}), and the definition of the Giry monad (Definition and Proposition \ref{def-prop:giry}).

The paper is structured as follows.
Section \ref{sec:preliminaries} contains some of the order-theoretic preliminaries and notation used throughout the paper.
Section \ref{sec:presentations} discusses the construction and properties of $\omega$-cpo completions.
Section \ref{sec:synth-top} studies the initial $\sigma$-frame as a set of truth values in synthetic topology.
Section \ref{sec:lower-reals} constructs the lower reals and contains a proof of their universal property (Theorem \ref{th:lower-reals-as-completion}).
Section \ref{sec:valuations-integrals} defines valuations and integrals and proves their equivalence (Theorem \ref{th:riesz}, a variant of the Riesz theorem).
Section \ref{sec:giry} constructs the Giry monad and proves a variant of the Fubini theorem (Theorem \ref{th:fubini}).
Section \ref{sec:lebesgue} discusses the metrizability of $\mathbb{R}$ and constructs the Lebesgue valuation.
Section \ref{sec:interpretation} provides an interpretation of $\mathcal{R}$ml based on the Giry monad that can account for continuous distributions.
Section \ref{sec:conclusion} concludes.

\section{Preliminaries}
\label{sec:preliminaries}

\paragraph{Logical foundations}

Our logical foundation is predicative and constructive mathematics.
\emph{Constructivity} means that we do not assume classical principles such as the lemma of the excluded middle, and that we do not assume the existence of choice functions.
\emph{Predicativity} means that we do not use the powerset construction, i.e.\@ we do not assume that there are sets $\mathcal{P}(A) = \{ B \mid B \subseteq A \}$.

The concrete system that we have in mind is \emph{Homotopy Type Theory (HoTT)}, and specifically its theory of \emph{(homotopy) sets}, i.e.\@ types with trivial higher structure.
\citet{RijkeSpitters} prove that the category of sets in HoTT form a $\Pi W$-pretopos, which is the category theoretic description of our logical foundation.

HoTT's inductive types allow the construction of effective quotients.
Effectivity means that the principle of unique choice holds:
If for a binary relation $R \subseteq X \times Y$ we have that for all $x \in X$ there exists a unique $y \in Y$ such that $R(x, y)$, then there exists a function $f_R : X \rightarrow Y$ such that $R(x, f_R(x))$ for all $x$.
Furthermore, functions in HoTT satisfy function extensionality:
If two functions agree pointwise, then the two functions are equal.

These two principles are fairly unusual for systems based on type theory, and they are not provable in Agda or Coq.
While some of HoTT's higher structure, in particular the univalence axiom, can make it more convenient to work even with hsets, higher types are not strictly required for our work.
We are thus optimistic that our work could in principle be formalized in systems such as XTT \citep{sterling2020cubical} or OTT \citep{ott} that promise to extract HoTT's well-behaved logic of hsets while discarding higher principles such as the univalence axiom for hsets.
However, not all quotients in XTT and OTT are well-behaved in the sense that they satisfy the principle of unique choice, hence our reliance on it might obstruct such a formalization project.

Predicative foundations reject the notion of a subobject classifier (i.e. a set $\Omega$ of truth values such that every subset of a set $X$ corresponds to a map $X \rightarrow \Omega$), but they permit universes of small, bounded sets.
We thus assume a countable hierarchy of universes $\mathcal{U}_0 \subseteq \mathcal{U}_1 \subseteq \dots$ of small sets.
Universes allow the definition of sets of small proposition $\Omega_0 \subseteq \Omega_1 \subseteq \dots$ by restriction to sets with at most one element.
We thus obtain small powersets $\mathcal{P}_i(X)$ as sets of functions $X \rightarrow \Omega_i$.
The bookkeeping of the current universe/subset level $i$ is essentially trivial; we will thus simply write $\Omega$ to mean the set $\Omega_i$ for a fixed level $i$ where confusion is unlikely.

While all existing systems that implement our intended logical foundations are based on type theory, we stick to the usual set theoretic notation in this paper.
We emphasize that the difference between our set theoretic notation and type theory is only superficial:
For example, when we write $X \subseteq Y$, we mean that there is an evident injective coercion from $X$ to $Y$.
A set comprehension $\{ x \in X \mid \phi(x) \}$ corresponds to the dependent sum type $\Sigma_{x \in X} \phi(x)$ and is used only when $\phi(x)$ is a proposition for all $x$.
We adopt the convention that the phrase ``there exists'' refers to a proof-relevant, i.e. untruncated statement; when we mean the proof irrelevant notion we say that something ``\emph{merely}'' exists.
Similarly, we write $X \rightarrow Y$ for the set of functions from $X$ to $Y$ here; elsewhere, this set is often denoted by $Y^X$.

In addition to the logical foundations based on HoTT's hsets, we require two additional principles to work with synthetic topology.
First, we assume the existence of \emph{free $\omega$-cpo completions} (Assumption \ref{ass:free-omega-cpo}).
As explained in Section \ref{sec:presentations}, this is a fairly weak assumption; it follows from a number of other common axioms (impredicativity, HoTT's quotient-inductive-inductive types or the axiom of countable choice).

Second and more invasive, we assume that the set of real numbers is \emph{metrizable} (Assumption \ref{ass:reals-metrizable}).
The metrizability axiom asserts that the intrinsic topology (see Section \ref{sec:synth-top}) on $\mathbb{R}$ agrees with the usual metric topology.
It contradicts classical logic, and is not satisfied in most models of our logic, i.e.\@ $\Pi W$-pretoposes.
However, it does hold in the big topos of topological spaces and in the K2 realizability topos (see Section \ref{sec:lebesgue}).

\paragraph{Semantics}

As we work internally in a rather unusual logic, we have to justify its consistency, in particular the consistency of the two axioms that we assume on top of predicative mathematics.
First of all, a model of predicative mathematics is a $\Pi W$-pretopos \citep{RijkeSpitters}.
Every topos is, in particular, a $\Pi W$-pretopos, and so our results apply to the internal logic of every topos that validates our axioms.
\cite{lesnikphd} proves that both the K2 realizability topos and the topological toposes \citep{fourman1984continuous,vanderHoevenMoerijk} validate our two assumptions and can thus interpret all of our constructions.

The topos used
in \citet{fourman1984continuous,Fourman2013ContinuousTI} and the topos of continuous
$M$-actions for the localic monoid of endomorphisms of Baire space
used in \citet{vanderHoevenMoerijk} are equivalent by the
Comparison Lemma \citep[Theorem C.2.2.3]{johnstone2002sketches} because the
topological monoid $M$ is dense in the site of separable locales, all of which can be covered by Baire space.
Thus sheaves in the latter topos can be seen as a uni-typed version of sheaves in the former topos.
Both of these works provide a constructive elaboration of Brouwer's
continuity principles.

Our specific realization of predicative mathematics is the theory of $0$-truncated types, \emph{sets}, in HoTT.
This, however, poses new semantical problems:
$\Pi W$-pretoposes and in particular 1-toposes do not model HoTT's higher dimensional priniciples such as the univalence axiom.

While we expect that both the K2 realizability topos and the topological toposes can be suitably embedded into models of HoTT, only the case of topological toposes appears to be resolved (though see e.g.\@ \citet{swan2019church} for progress on realizability models):
It was proved by \citet{shulman2019infty1toposes} that most of HoTT as presented in the HoTT book can be interpreted in all Grothendieck $\infty$-toposes.
Shulman's $\infty$-topos models can also interpret propositional resizing (impredicativity), and so Assumption \ref{ass:free-omega-cpo} holds in these models.
Every Grothendieck $1$-topos of sheaves over a finitely complete site is equivalent to the category of $0$-truncated objects in the corresponding $\infty$-topos \citep{anel-enveloping}.
In particular, this holds for sheaves over the site of small topological spaces, thus the model of HoTT in this $\infty$-topos also validates Assumption \ref{ass:reals-metrizable}.

\paragraph{Order theory}
Let us review some basic notions from order and domain theory.
A \emph{preorder} consists of a carrier set $P$ and a transitive and reflexive relation $x \leq y$ on $P$.
We generally identify a preorder with its carrier set $P$, leaving the order relation implicit.
A map $f : P \rightarrow Q$ of preorders is \emph{monotone} if $x \leq y$ implies $f(x) \leq f(y)$ for all $x, y \in P$.
A \emph{partial order} is a preorder whose ordering relation is antisymmetric.
A \emph{suborder} of a partial order $P$ is a monotone map $i : P' \hookrightarrow P$ with $P'$ a partial order such that $i(x) \leq i(y)$ implies $x \leq y$.
Suborders of $P$ may be identified with subsets of $P$.

Let $I$ and $P$ be preorders and let $d : I \rightarrow P$ be a monotone map.
The \emph{join} $\bigvee d = \bigvee_{i \in I} d(i)$ of $d$ is a least element such that $d(i) \leq \bigvee d$ for all $i \in I$.
Dually, a \emph{meet} $\bigwedge d = \bigwedge_{i \in I} d(i)$ is a greatest element such that $d(i) \geq \bigwedge d$ for all $i \in I$.
Joins and meets are uniquely determined up to isomorphism, i.e.\@ if $e$ and $e'$ are both joins (or both meets) of the same diagram $D$, then $e \leq e'$ and $e' \leq e$.
If we say that certain kinds of joins or meets exist in a preorder, we mean that there is a function that assigns to every suitable diagram its join or meet, respectively, and these canonical joins and meets are denoted by $\bigvee - $ and $\bigwedge -$.
If $P$ is a partial order, then joins and meets are unique if they \emph{merely} exist, and so by unique choice we obtain unique join and meet functions.

Identifying subsets $U \subseteq P$ with suborders of $P$, we write $\bigvee U \in P$ for the join over the corresponding inclusion map.
A monotone map $f : I' \rightarrow I$ of preorders $I'$ and $I$ is \emph{final} if for each $i \in I$ there merely exists $i' \in I'$ such that $i \leq f(i')$.
If $d : I \rightarrow P$ is a monotone map into a partial order $P$ and $f : I' \rightarrow I$ is final, then the two joins $\bigvee d$ and $\bigvee \, (d \circ f)$ exist and agree if either one exists.

A preorder $I$ is \emph{directed} if $I$ is inhabited and there is a function $u : I \times I \rightarrow I$ (not necessarily monotone) such that for all $i, j \in I$ we have $i \leq u(i, j)$ and $j \leq u(i, j)$.
The partial order $\omega$ has for its carrier set the natural numbers with its natural order (which is generated by $n \leq n + 1$ for all $n$).
If $I$ is enumerable (i.e.\@ there exists a surjection $\mathbb{N} \twoheadrightarrow I$) and directed, then there exists a final map $\omega \rightarrow I$.
Thus enumerable directed joins in partial orders $P$ can be reduced to joins over maps $\omega \rightarrow P$, i.e.\@ chains $x_0 \leq x_1 \leq \dots$ in $P$.

\emph{Bottom} and \emph{top} elements are joins $\bot = \bigvee \emptyset$ and meets $\top = \bigwedge \emptyset$, respectively, over the empty set.
A \emph{lattice} is a partial order $L$ which has all binary joins $x \vee y = \bigvee \{x, y\}$ and binary meets $x \wedge y = \bigwedge \{x, y\}$ for $x, y \in L$.
It is \emph{distributive} if $x \wedge (y \vee z) = (x \wedge y) \vee (y \wedge z)$ holds for all $x, y, z \in L$.
An \emph{$\omega$-complete partial order ($\omega$-cpo)} is a partial order which has all enumerable directed joins.
A monotone map $f : C \rightarrow D$ of $\omega$-cpos $C$ and $D$ is \emph{$\omega$-(Scott-)continuous} if $f$ preserves enumerable directed joins.
A \emph{$\sigma$-frame} is a partial order with bottom and top elements, binary meets and enumerable joins which satisfy the distributivity law $x \land \bigvee_{n \in \mathbb{N}} y_n = \bigvee_{n \in \mathbb{N}} \, (x \land y_n)$.
A partial order $P$ is a $\sigma$-frame if and only if it has top and bottom elements and is both a distributive lattice and an $\omega$-cpo:
Arbitrary enumerable joins can be computed as $\bigvee_{n \in \mathbb{N}} x_n = \bigvee_{n \in \omega} \, (x_0 \vee \dots \vee x_n)$ using just the lattice and $\omega$-cpo structure.

Sets of truth values $\Omega = \Omega_i$ are partially ordered by implication.
They are stable under joins (disjunctions) and meets (conjunctions) over small indexing sets.

\section{Presentations of $\omega$-cpos}
\label{sec:presentations}

In this section we adapt the notion of \emph{dcpo presentation} described in \citet{jung2008presenting} for $\omega$-cpo presentations.
We discuss three proofs of the existence of free $\omega$-cpo completions, and construct presentations of product $\omega$-cpos.

\begin{definition}
  \label{def:omega-cpo-presentation}
  An \emph{$\omega$-cpo presentation} consists of a preorder $P$ and a \emph{cover relation} $\triangleleft \subseteq P \times \mathcal{P}(P)$ such that $p \triangleleft U$ ($p$ is \emph{covered} by $U$) holds only if $U$ is an enumerable directed suborder of $P$ (thus $U$ is given by a map $\mathbb{N} \rightarrow P$ with directed image).
  We generally leave the covering relation $\triangleleft$ implicit and refer to the $\omega$-cpo presentation $(P, \triangleleft)$ as just $P$.
  A morphism of \emph{$\omega$-cpo presentations} $f : P \rightarrow Q$ is a monotone map preserving covers, in the sense that if $p \triangleleft U$ holds in $P$, then $f(p) \triangleleft f(U)$ holds in $Q$ for all $p \in P$ and $U \subseteq P$.

  Every $\omega$-cpo $C$ can be regarded as an $\omega$-cpo presentation with cover relation
  \begin{equation}
    c \triangleleft U \iff c \leq \bigvee U
  \end{equation}
  for $U \subseteq C$ directed and enumerable.
  $\omega$-continuous maps $C \rightarrow D$ of $\omega$-cpos may be identified with their morphisms when considered as $\omega$-cpo presentations.
\end{definition}

\begin{assumption}
  \label{ass:free-omega-cpo}
  Let $P$ be an $\omega$-cpo presentation.
  Then there is a free $\omega$-cpo over $P$, i.e.\@ there is a morphism $\eta : P \rightarrow P_\omega$ of $\omega$-cpo presentations with $P_\omega$ an $\omega$-cpo such that for any given morphism $f : P \rightarrow C$ with $C$ an $\omega$-cpo there is a unique $\omega$-continuous map $\bar f : P_\omega \rightarrow C$ such that $\bar f \eta = f : P \rightarrow C$.
\end{assumption}
It appears that Assumption \ref{ass:free-omega-cpo} is independent of constructive predicative mathematics.
However, it follows from rather weak additional mathematical principles, all of which are generally considered constructive.

As a first option, one can work with propositional resizing (impredicativity) \citep{hottbook}, i.e.\@ assume that the inclusions $\Omega_0 \subseteq \Omega_1 \subseteq \dots$ are equalities.
Working impredicatively, \citet{jung2008presenting} construct free dcpos over dcpo presentations.
We sketch a straightforward adaptation of their proof for $\omega$-cpos.
Say a lower subset $\mathfrak{a} \subseteq P$ is an \emph{ideal} if from $p \triangleleft U$ and $U \subseteq \mathfrak{a}$ it follows that $p \in \mathfrak{a}$.
Let $\mathrm{Idl}(P)$ be the partial order of all ideals.
Ideals are closed under arbitrary intersections, so every subset $M \subseteq P$ is contained in the least ideal containing it:
\begin{equation}
  \langle M \rangle = \bigcap \{ \mathfrak{a} \in \mathrm{Idl}(P) \mid M \subseteq \mathfrak{a} \}.
\end{equation}
It follows that $\mathrm{Idl}(P)$ has all joins and that they can be computed as $\bigvee_{i \in I} \mathfrak{a}_i = \langle \bigcup_{i \in I} \mathfrak{a}_i \rangle$.
Assigning to each $p \in P$ the principal ideal $\langle \{ q \in P \mid q \leq p \} \rangle$ gives a monotone map from $P$ to $\mathrm{Idl}(P)$ which preserves covers.
It exhibits $\mathrm{Idl}(P)$ as the free suplattice over $P$, i.e.\@ the free partial order with all joins subject to the cover relations.
Now $P_\omega$ can be defined as the least subset of $\mathrm{Idl}(P)$ which contains the principal ideals that is closed under joins of enumerable directed families.

Next, $P_\omega$ can be constructed as a quotient inductive inductive type (QIIT) \citep{altenkirch2018quotient} in homotopy type theory.
The special case of the free $\omega$-cpo with bottom element over a set (i.e.\@ discrete partial order without covers) is worked out in \citet{Altenkirch}.
Given a set $A$, they define $A_\bot$ and a dependent predicate $\leq : A_\bot \times A_\bot \rightarrow \Omega$ mutually recursive as a QIIT.
Elements of $A_\bot$ and their equalities are generated by the constructors
\begin{mathpar}
  \eta : A \rightarrow A_\bot
  \and \bigvee : \left(\sum_{x : \mathbb{N} \rightarrow A_\bot} \prod_{n : \mathbb{N}} x_n \leq x_{n + 1}\right) \rightarrow A_\bot
  \\ \bot : A_\bot
  \and \alpha : \prod_{x, y : A_\bot} x \leq y \rightarrow y \leq x \rightarrow x = y.
\end{mathpar}
$\leq$ has constructors corresponding to reflexivity, transitivity and the universal properties of $\bot$ and $\bigvee$.
The recursion principle for $A_\bot$ as QIIT is the universal property of the free domain over $A$.
This argument can easily be adapted for our purpose:
To construct $P_\omega$ given an $\omega$-cpo presentation $P$, one omits from the scheme defining $P_\bot$ the constructor $\bot$ and adds constructors $\prod_{p, q : P} p \leq q \rightarrow \eta(p) \leq \eta(q)$ corresponding to monotonicity of $\eta$ and
\begin{equation}
  \prod_{p : P} \prod_{U \in \mathcal{P}(P)} p \triangleleft U \rightarrow \eta(p) \leq \bigvee c_U
\end{equation}
where $c_U : \mathbb{N} \rightarrow P$ is a monotone and final map into $U$.
The semantics of QIITs are not entirely understood, but it is proved in \citet{lumsdaine_shulman} that all Grothendieck $\infty$-topos models validate the existence of many HITs.
Work on reducing QIITs to such simpler inductive constructions is ongoing; see \citep{altenkirch2018quotient}.

As a third alternative, $P_\omega$ can be constructed as a quotient of the set $\mathrm{Hom}(\omega, P)$ of monotone sequences in $P$ if one is willing to assume the axiom of countable choice, at least in the important special case where the covering relation is such that $p \triangleleft U$ holds only if $u \leq p$ for all $u \in U$, which is true in all our applications.
A similar construction for $A_\bot$ is worked out in \citet{Altenkirch}, with the general idea going back to \citet{rosolini:dominance}.
Let $\leq'$ be the preorder on the set of monotone functions $\mathrm{Hom}(\omega, P)$ which is generated from $c \leq' d$ if for all $n$ there merely exists $m$ such that $c_n \leq d_m$, and $\eta(p) \leq' c_U$ whenever $p \triangleleft U$, where $\eta(p)$ denotes the constant sequence with value $p$ and $c_U$ is a final sequence in $U$.
If $c, d: \omega \rightarrow P$ are monotone and $c \leq' d$, then it can be shown by induction over transitivity of $\leq'$ that for all $m, n$ there merely exist either $m'$ or $n'$ such that $c(m')$ respectively $d(n')$ is an upper bound for both $c(m)$ and $d(n)$.
It follows that the image of the set-theoretic transpose $\bar c : \mathbb{N} \times \mathbb{N} \rightarrow P$ of a monotone function $c : \omega \rightarrow \mathrm{Hom}(\omega, P)$ ($\bar c$ need not be monotone with respect to the product order) is directed:
The \emph{mere} existence of binary upper bounds implies the existence of a function assigning upper bounds because of the bijection $\mathbb{N} \times \mathbb{N} \cong \mathbb{N}$ and countable choice.
We obtain a final sequence $c' : \omega \rightarrow P$, which can be shown to be a join of $c$.
Let $P_\omega$ be the quotient partial order of the preorder $(\mathrm{Hom}(\omega, P), \leq')$.
By countable choice, every sequence $c : \omega \rightarrow P_\omega$ can be lifted to one in $\mathrm{Hom}(\omega, P)$, where its join can be computed and mapped back to $P_\omega$.
Thus $P_\omega$ is an $\omega$-cpo, and the verification of its universal property is straightforward.

\begin{proposition}
  \label{prop:enriched-omega-cpo-completion}
  The free $\omega$-cpo completion is monotone on functions:
  If $f \leq g : P \rightarrow Q$, then $f_\omega \leq g_\omega : P_\omega \rightarrow Q_\omega$.
\end{proposition}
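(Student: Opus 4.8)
The plan is to avoid any direct induction over the (possibly QIIT) construction of $P_\omega$ and instead reduce the claim to the universal property of Assumption \ref{ass:free-omega-cpo}, by encoding the desired inequality as a single $\omega$-continuous map into the \emph{graph of the order relation} on $Q_\omega$. Recall that for $f_\omega$ and $g_\omega$ to be defined, $f$ and $g$ are themselves morphisms of $\omega$-cpo presentations, $f \leq g$ means $f(p) \leq g(p)$ in $Q$ for all $p$, and the goal is $f_\omega(x) \leq g_\omega(x)$ in $Q_\omega$ for all $x \in P_\omega$.

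First I would check that
\[ G = \{ (y, y') \in Q_\omega \times Q_\omega \mid y \leq y' \} \]
is a sub-$\omega$-cpo of the product $\omega$-cpo $Q_\omega \times Q_\omega$ (with componentwise joins). Indeed, if $(y_n, y'_n)$ is an enumerable directed family in $G$, then $y_n \leq y'_n \leq \bigvee_m y'_m$ for every $n$, so $\bigvee_n y_n \leq \bigvee_n y'_n$; hence $G$ is closed under these joins, it is an $\omega$-cpo in its own right, and the two projections $\pi_1, \pi_2 : G \to Q_\omega$ are $\omega$-continuous.

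Next I would exhibit a morphism of $\omega$-cpo presentations $\langle f, g \rangle : P \to G$, $p \mapsto (\eta_Q(f(p)), \eta_Q(g(p)))$, where $G$ carries the canonical cover relation of an $\omega$-cpo. This lands in $G$ because $\eta_Q$ is monotone and $f(p) \leq g(p)$, so $\eta_Q(f(p)) \leq \eta_Q(g(p))$; monotonicity of $\langle f, g \rangle$ follows from that of $f$, $g$ and $\eta_Q$. For preservation of covers, note that $\eta_Q \circ f$ and $\eta_Q \circ g$ are morphisms of presentations as composites of such, so $p \triangleleft U$ yields $\eta_Q(f(p)) \leq \bigvee \eta_Q(f(U))$ and $\eta_Q(g(p)) \leq \bigvee \eta_Q(g(U))$; since joins in $G$ are componentwise, this is precisely the cover condition for $\langle f, g \rangle(p)$ against $\langle f, g \rangle(U)$.

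By Assumption \ref{ass:free-omega-cpo} there is then a unique $\omega$-continuous $h : P_\omega \to G$ with $h \circ \eta_P = \langle f, g \rangle$. Composing with the projections, $\pi_1 \circ h$ and $f_\omega$ are both $\omega$-continuous maps $P_\omega \to Q_\omega$ that agree with $\eta_Q \circ f$ on $\eta_P(P)$, hence coincide by the uniqueness clause of the universal property; likewise $\pi_2 \circ h = g_\omega$. Since $h(x) \in G$ for every $x$, we conclude $f_\omega(x) = \pi_1(h(x)) \leq \pi_2(h(x)) = g_\omega(x)$. I do not anticipate a serious obstacle: the only points requiring care are verifying that $G$ is genuinely closed under enumerable directed joins (pure monotonicity of joins) and that $\langle f, g \rangle$ preserves covers, so that the universal property applies to an honest $\omega$-cpo rather than a mere preorder.
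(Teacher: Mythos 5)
Your proof is correct, but it takes a genuinely different route from the paper's. The paper disposes of the statement in one line by induction over $P_\omega$: the subset $\{ x \in P_\omega \mid f_\omega(x) \leq g_\omega(x) \}$ contains $\eta(p)$ for all $p \in P$ and is closed under enumerable directed joins, hence is all of $P_\omega$. This tacitly invokes a generation/induction principle for $P_\omega$ (immediate when $P_\omega$ is built as a QIIT or as the least sub-$\omega$-cpo of $\mathrm{Idl}(P)$ containing the principal ideals), which Assumption \ref{ass:free-omega-cpo} does not literally supply. You instead argue purely from the universal property: the order graph $G = \{ (y, y') \in Q_\omega \times Q_\omega \mid y \leq y' \}$ is a sub-$\omega$-cpo with componentwise joins, the tupling $p \mapsto (\eta_Q(f(p)), \eta_Q(g(p)))$ is a cover-preserving morphism into $G$, and uniqueness of extensions identifies $\pi_1 \circ h$ with $f_\omega$ and $\pi_2 \circ h$ with $g_\omega$. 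All the verifications you flag (closure of $G$ under joins, cover preservation via componentwise joins, the two applications of the uniqueness clause) go through. The two arguments share the same mathematical kernel --- the order relation is closed under enumerable directed joins --- but your version is more self-contained relative to the assumption as stated: it is exactly the standard derivation of the induction principle from recursion-plus-uniqueness (take a sub-$\omega$-cpo, corestrict $\eta$, extend, and use uniqueness to force the composite to be the identity), specialized to the graph of $\leq$. What the paper's approach buys is brevity once induction is granted; what yours buys is independence from how $P_\omega$ is constructed, at the cost of a page of routine checking.
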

\begin{proof}
  The subset $\{ x \in P_\omega \mid f_\omega(x) \leq g_\omega(x) \}$ contains $\eta(p)$ for all $p \in P$ and is closed under directed enumerable joins.
\end{proof}

\citet[proposition 2.8]{jung2008presenting} construct presentations of product dcpos based on presentations of their factors, and an analogous result holds for $\omega$-cpos.
Our proof differs slightly from theirs because we do not assume that $\omega$-completions are constructed as sets of ideals and instead rely solely on the universal property.
\begin{proposition}
  \label{prop:product-presentations}
  Let $P$ and $Q$ be $\omega$-cpo presentations.
  Define a cover relation on the product partial order $P \times Q$ by $(p, q) \triangleleft U \times \{ q \}$ if $p \triangleleft U$ in $P$ and $(p, q) \triangleleft \{ p \} \times V$ if $q \triangleleft V$ in $Q$.
  Then the canonical map $f : (P \times Q)_\omega \rightarrow P_\omega \times Q_\omega$ is an order isomorphism.
\end{proposition}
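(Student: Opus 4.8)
The plan is to exhibit an explicit two-sided inverse $g : P_\omega \times Q_\omega \to (P \times Q)_\omega$ of $f$ by currying: I will apply the universal property (Assumption~\ref{ass:free-omega-cpo}) twice, once in each variable, and then check that $f$ and $g$ compose to the respective identities on a generating set. Recall that $f$ is the unique $\omega$-continuous extension of the presentation morphism $\zeta : P \times Q \to P_\omega \times Q_\omega$, $(p,q) \mapsto (\eta(p), \eta(q))$; that $\zeta$ preserves the product covers is immediate, since a cover $p \triangleleft U$ yields $\zeta(p,q) = (\eta(p), \eta(q)) \leq (\bigvee \eta(U), \eta(q)) = \bigvee \zeta(U \times \{q\})$ in the componentwise order of $P_\omega \times Q_\omega$, and symmetrically for covers in $Q$. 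Being an $\omega$-continuous extension, $f$ is in particular monotone.

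First I would fix $q \in Q$ and observe that $p \mapsto (p,q)$ is a presentation morphism $P \to P \times Q$, so that $p \mapsto \eta(p,q)$ extends to an $\omega$-continuous map $g_q : P_\omega \to (P\times Q)_\omega$ with $g_q(\eta(p)) = \eta(p,q)$. By Proposition~\ref{prop:enriched-omega-cpo-completion} the assignment $q \mapsto g_q$ is monotone, so for each fixed $x \in P_\omega$ the map $q \mapsto g_q(x)$ is monotone $Q \to (P\times Q)_\omega$. It also preserves covers: for $x = \eta(p)$ a cover $q \triangleleft V$ gives $g_q(\eta(p)) = \eta(p,q) \leq \bigvee_{v\in V}\eta(p,v) = \bigvee_{v \in V} g_v(\eta(p))$, and the set of $x$ for which this inequality holds contains the generators $\eta(p)$ and is closed under enumerable directed joins, because $x \mapsto \bigvee_{v\in V} g_v(x)$ is again $\omega$-continuous (an enumerable directed join of $\omega$-continuous maps). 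Hence $q \mapsto g_q(x)$ is a presentation morphism for every $x$, and extends to an $\omega$-continuous map $G_x : Q_\omega \to (P\times Q)_\omega$. I then set $g(x,y) = G_x(y)$.

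The key lemma is that $g$ is jointly $\omega$-continuous and that $P_\omega \times Q_\omega$ is generated by the pairs $(\eta(p),\eta(q))$ under enumerable directed joins. Joint continuity follows from separate continuity: $g$ is continuous in $y$ by construction, continuous in $x$ by a density argument analogous to the previous paragraph, and a map that is $\omega$-continuous in each argument separately is $\omega$-continuous on the product, because the diagonal is final in $\omega \times \omega$ and product joins are computed componentwise. For the generation claim, writing $S$ for the smallest sub-$\omega$-cpo of $P_\omega \times Q_\omega$ containing the pairs of generators, I would argue in two steps: for fixed $q$ the set $\{x \mid (x,\eta(q)) \in S\}$ contains $\eta(P)$ and is closed under directed joins, hence equals $P_\omega$; then for fixed $x$ the set $\{y \mid (x,y) \in S\}$ contains $\eta(Q)$ and is closed under directed joins, hence equals $Q_\omega$. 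With these two facts in hand, both composites are settled by evaluation on generators: $g(f(\eta(p,q))) = g(\eta(p),\eta(q)) = \eta(p,q)$ and $f(g(\eta(p),\eta(q))) = f(\eta(p,q)) = (\eta(p),\eta(q))$, so $g \circ f$ and $f \circ g$ are $\omega$-continuous maps agreeing with the identity on their respective generating sets and are therefore the identities. Since $f$ and $g$ are $\omega$-continuous, hence monotone, and mutually inverse, $f$ is an order isomorphism.

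The main obstacle is the commutation of the two independent systems of directed joins, those inherited from the presentation $P$ and those from $Q$. Concretely, this is what powers both the cover-preservation step for arbitrary $x \in P_\omega$ and the passage from separate to joint $\omega$-continuity; in each case one must interchange an enumerable directed join coming from the $P$-side with one coming from the $Q$-side. This interchange is legitimate in an $\omega$-cpo, but it relies on the finality observations above and on careful use of the fact that joins in the product are taken coordinatewise, and it is the step I would write out most carefully.
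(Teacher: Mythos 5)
Your proposal is correct and takes essentially the same route as the paper's proof: constructing $g$ by currying through two successive applications of the universal property, establishing cover preservation in the second variable by induction over $P_\omega$ (generators plus closure under enumerable directed joins, with the same join interchange), and passing from separate to joint $\omega$-continuity via finality of the diagonal. The only packaging differences are that the paper routes the currying through the pointwise $\omega$-cpos of functions $Q \rightarrow (P \times Q)_\omega$ and of $\omega$-continuous maps $P_\omega \rightarrow (P \times Q)_\omega$, which makes continuity in the $P_\omega$-variable automatic where you need your extra density argument, and that the paper settles $f g = \mathrm{id}$ by the universal property of products where you prove an explicit generation lemma for $P_\omega \times Q_\omega$ --- both variants are sound and of comparable length.
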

\begin{proof}
  Let $g_0 : P \rightarrow (Q \rightarrow (P \times Q)_\omega)$ be the function assigning to each $p \in P$ the function $q \mapsto \eta(p, q)$.
  The set of functions $Q \rightarrow (P \times Q)_\omega$ is an $\omega$-cpo with joins computed pointwise.
  If $p \triangleleft U$ and $q \in Q$, then $\bigvee_{u \in U} g_0(u)(q) = \bigvee \eta(U) \times \{ q \} \geq \eta(p, q) = g_0(p, q)$ by definition of the cover relation on $P \times Q$.
  Thus $g_0$ preserves covers and induces an $\omega$-continuous map $g_1 : P_\omega \rightarrow (Q \rightarrow (P \times Q)_\omega)$.
  Let $g_2 : Q \rightarrow (P_\omega \rightarrow (P \times Q)_\omega)$ be its transpose; it is valued in $\omega$-continuous functions.
  Suppose $q \triangleleft V$ and let us prove that for each $x \in P_\omega$ we have
  \begin{equation}
    \label{eq:g2-cover-preservation}
    g_2(q)(x) \leq \bigvee_{v \in V} g_2(v)(x).
  \end{equation}
  If $x = \eta(p)$ for some $p \in P$, then this holds because $(p, q) \triangleleft \{ p \} \times V$ in $P \times Q$.
  If \eqref{eq:g2-cover-preservation} holds for every element $x \in W$ for a directed enumerable family $W \subseteq P_\omega$, then
  \begin{equation}
    g_2(q)(\bigvee W) = \bigvee_{x \in W} g_2(q)(x) \leq \bigvee_{x \in W} \bigvee_{v \in V} g_2(v)(x) = \bigvee_{v \in V} g_2(v)(\bigvee W)
  \end{equation}
  because $g_2(q)$ and $g_2(v)$ for all $v$ commute with joins and joins commute among each other.
  Thus $g_2$ preserves covers and induces an $\omega$-continuous map $g_3 : Q_\omega \rightarrow (P_\omega \rightarrow (P \times Q)_\omega)$.
  Let $g : P_\omega \times Q_\omega \rightarrow (P \times Q)_\omega$ be its transpose.

  $g$ is $\omega$-continuous in each argument.
  Thus if $p : I \rightarrow P_\omega$ and $q : I \rightarrow Q_\omega$ are monotone maps with $I$ enumerable and directed, then
  \begin{equation}
    g(\bigvee_{i \in I} \, (p_i, q_i)) = \bigvee_{i \in I} \bigvee_{j \in I} g(p_i, q_j) = \bigvee_{k \in I} g(p_k, q_k)
  \end{equation}
  because, $I$ being directed, the diagonal $I \rightarrow I \times I$ is final.
  It follows that $g$ is $\omega$-continuous.
  Thus $g f$ is the identity by the universal property of the $\omega$-cpo completion, and $f g = \mathrm{id}$ holds by the universal property of products.
\end{proof}

\begin{corollary}
  Let $P$ be an $\omega$-cpo presentation.
  If $P$ has a bottom element $\bot$, then $\eta(\bot) \in P_\omega$ is a bottom element, and likewise for top elements.
  If $P$ has all binary joins which are compatible with covers in the sense that $\vee : P \times P \rightarrow P$ preserves the covers on $P \times P$ defined in Proposition \ref{prop:product-presentations}, then $P_\omega$ has all binary joins and $\eta : P \rightarrow P_\omega$ preserves them.
  The same is true for binary meets.
\end{corollary}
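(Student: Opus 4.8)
The plan is to apply repeatedly the induction principle for the free $\omega$-cpo completion: a subset of $P_\omega$ that contains $\eta(P)$ and is closed under enumerable directed joins is already all of $P_\omega$. This is exactly the principle used in the proof of Proposition \ref{prop:enriched-omega-cpo-completion}, and it is available in each of the three constructions of $P_\omega$ sketched above (as the closure clause of the impredicative construction, the eliminator of the QIIT, or the density of the image of $\eta$ in the quotient construction).

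For the bottom element I would consider the subset $\{ x \in P_\omega \mid \eta(\bot) \leq x \}$. It contains $\eta(p)$ for every $p \in P$, since $\bot \leq p$ and $\eta$ is monotone, and it is closed under enumerable directed joins because any $\bigvee W$ dominates some (hence, being an upper bound, serves as the witness for every) element of the inhabited family $W$. By the induction principle the subset is all of $P_\omega$, so $\eta(\bot)$ is a bottom element. The top element is order-dual: the subset $\{ x \in P_\omega \mid x \leq \eta(\top) \}$ contains $\eta(P)$ and is closed under joins because $\eta(\top)$ is an upper bound of any family lying below it.

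For binary joins, note first that the hypothesis says precisely that $\vee : P \times P \to P$ is a morphism of $\omega$-cpo presentations: it is monotone as a binary join, and it preserves the product covers by assumption. By the universal property it extends to an $\omega$-continuous map $(P \times P)_\omega \to P_\omega$, which, composed with the inverse of the isomorphism $f$ of Proposition \ref{prop:product-presentations}, yields a map $\sqcup : P_\omega \times P_\omega \to P_\omega$ that is $\omega$-continuous in each argument separately, since a directed join in one coordinate of the product cpo $P_\omega \times P_\omega$ corresponds under $f^{-1}$ to a directed join in $(P \times P)_\omega$. Because the canonical map satisfies $f(\eta(p, q)) = (\eta(p), \eta(q))$, the equation $\sqcup(\eta(p), \eta(q)) = \eta(p \vee q)$ follows from the universal property, so $\eta$ preserves binary joins. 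That $\sqcup(x, y)$ is an upper bound of $x$ and $y$ follows from the technique of Proposition \ref{prop:enriched-omega-cpo-completion}: the two $\omega$-continuous maps $(x, y) \mapsto x$ and $\sqcup$ satisfy $x \leq \sqcup(x, y)$ on the generators $(\eta(p), \eta(q))$ because $p \leq p \vee q$, and the set where the inequality holds is closed under directed joins; symmetrically for $y$.

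The least-upper-bound property is the delicate point, and I expect it to be the main obstacle: the naive induction on the pair $(x, y)$ carries the side hypotheses $x \leq z$ and $y \leq z$, which are not preserved in a form amenable to the induction principle. I would instead route through idempotency, proving $\sqcup(z, z) = z$ for all $z \in P_\omega$. The set of such $z$ contains $\eta(P)$ since $\sqcup(\eta(p), \eta(p)) = \eta(p \vee p) = \eta(p)$, and it is closed under directed joins: for $z = \bigvee W$ with $W$ directed, continuity in each argument gives $\sqcup(\bigvee W, \bigvee W) = \bigvee_{w \in W} \bigvee_{w' \in W} \sqcup(w, w')$, and finality of the diagonal $W \to W \times W$ (exactly as in the proof of Proposition \ref{prop:product-presentations}) collapses this to $\bigvee_{w \in W} \sqcup(w, w) = \bigvee_{w \in W} w = z$. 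Granting idempotency, if $x \leq z$ and $y \leq z$ then monotonicity of $\sqcup$ gives $\sqcup(x, y) \leq \sqcup(z, z) = z$, which is the least-upper-bound property. Binary meets are handled by the order-dual of this entire argument, with $\eta$ preserving them for the same reason.
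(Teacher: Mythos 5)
Your proof is correct, but it takes a genuinely different route from the paper's. The paper packages the whole corollary into one adjunction argument: bottom and top elements are the right and left adjoints of $P \rightarrow 1$, and binary joins and meets are the adjoints of the diagonal $P \rightarrow P \times P$; since the completion is monotone on functions (Proposition \ref{prop:enriched-omega-cpo-completion}), it carries the adjunction inequalities $\mathrm{id} \leq \Delta \circ \vee$ and $\vee \circ \Delta \leq \mathrm{id}$ to $P_\omega$, and the only real work is showing that the diagonal covers $(p, p) \triangleleft \{(u, u) \mid u \in U\}$ may be added to $P \times P$ without changing the completion (using directedness of $U$), so that the diagonal of $P_\omega$ is presented by the diagonal of $P$. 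You instead verify everything by hand with the induction principle of $P_\omega$: direct induction on the predicates $\eta(\bot) \leq x$ and $x \leq \eta(\top)$ for the extremal elements, and, for joins, the upper-bound property by induction transported along the isomorphism of Proposition \ref{prop:product-presentations}, with the least-upper-bound property routed through idempotency $\sqcup(z, z) = z$ plus monotonicity. Your diagnosis that the naive induction with side hypotheses $x \leq z$, $y \leq z$ fails is exactly right, and your idempotency detour is the correct fix --- note that $\sqcup(z,z) \leq z$ is precisely the completed counit $\vee \circ \Delta \leq \mathrm{id}$ of the paper's adjunction, so the two proofs establish the same inequality by different means, with your diagonal-finality computation replacing the paper's diagonal-cover lemma. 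What each approach buys: the paper's is more uniform (one mechanism handles bottom, top, joins and meets, and adjointness immediately gives both existence and the universal characterization) and avoids case-by-case inductions, while yours avoids the adjunction formalism, the diagonal-cover manipulation, and the paper's normalization step of adding covers $p \triangleleft \{p\}$, at the cost of four separate but elementary induction arguments. One small point of care: your closure argument for $\{x \mid \eta(\bot) \leq x\}$ implicitly uses that directed families are inhabited, which holds by the paper's definition of directedness, and since the goal is a proposition, mere inhabitation suffices --- so the step is sound as written.
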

\begin{proof}
  Without loss of generality, we may assume that for all $p \in P$ we have $p \triangleleft \{ p \}$ because adding these covers to $P$ does not change the generated $\omega$-cpo $P_\omega$.
  Endow the terminal partial order $1$ with the covering relation $* \triangleleft \{ * \}$, where $* \in 1$ is the unique element of the unit set.
  Then the map $P \rightarrow 1$ is a map of $\omega$-cpo presentations, and so are its right or left adjoints $1 \rightarrow P$ if they exist.
  Because $1_\omega = 1$ and the $\omega$-cpo completion is monotone (Proposition \ref{prop:enriched-omega-cpo-completion}), it follows that $P_\omega \rightarrow 1$ is a right (left) adjoint if $P \rightarrow 1$ is.
  Thus $P_\omega$ has a bottom (top) element if $P$ has one.

  Suppose $p \triangleleft U$ in $P$.
  Then
  \begin{equation}
    (\eta(p), \eta(p)) \leq \bigvee_{u \in U} \bigvee_{v \in U} (\eta(u), \eta(v)) = \bigvee_{w \in U} (\eta(w), \eta(w))
  \end{equation}
  because $U$ is directed.
  We may thus add the diagonal covers
  \begin{equation}
    \label{eq:diagonal-cover}
    (p, p) \triangleleft \{(u, u) \mid u \in U\}
  \end{equation}
  to the covers of $P \times P$ without changing the generated $\omega$-cpo.
  Because $P \times P$ presents the product $P_\omega \times P_\omega$, the diagonal $P_\omega \rightarrow P_\omega \times P_\omega$ is obtained by $\omega$-cpo completion of the diagonal of $P$.
  Now suppose $P$ has binary joins which preserve the covers defined in Proposition \ref{prop:product-presentations}.
  Binary joins will always preserve diagonal covers as in \eqref{eq:diagonal-cover}.
  Thus the binary join map can be extended to a left adjoint to the diagonal of $P_\omega$, i.e.\@ $P_\omega$ has binary joins.
  Similarly, if $P$ has a cover preserving binary meet map, then its extension to $P_\omega$ will be right adjoint to the diagonal.
\end{proof}

\section{Synthetic topology and the initial $\sigma$-frame}
\label{sec:synth-top}

In synthetic topology \citep{hyland1991first,Escard2004SyntheticTO,lesnikphd} one works with sets and functions as if they behave like topological spaces and continuous maps.
For this analogy to have any value, the very least one would expect is a notion of \emph{open subset} of a given set (i.e.\@ space).
The set of (small) subsets of a given set $A$ is given by the set of functions $A \rightarrow \Omega$.
It is thus natural to expect a subset $S \subseteq \Omega$ that classifies the \emph{open} subsets, in the sense that a function $A \rightarrow \Omega$ is the indicator function of an open subset if and only if it factors via $S$.
$S$ may be thought of as the \emph{set of open truth values}.
We obtain sets $\mathcal{O}(A) = (A \rightarrow S)$ of open subsets for every set (space) $A$, and it can indeed be verified that the preimage of an open subset under every function is again open.
Thus all functions are continuous.

In traditional (analytic) topology, $S$ corresponds to the \emph{Sierpinski space}:
The space with carrier $\Omega$ whose only nontrivial open is the singleton set $\{ \top \}$.
Indicator functions $\chi : A \rightarrow \Omega$ with $A$ a  topological space (in the usual sense) are continuous if and only if the preimage of $\top$ is open; in other words if and only if $\chi$ corresponds to an open subset.

Without imposing any further requirements on $S$, there is not much we can say about the sets $\mathcal{O}(A)$.
For example, $S = \emptyset$ might be empty, in which case only the empty subset has any open subsets at all.
If $S = \{ \top \}$, then $\mathcal{O}(A) = \{ A \}$ for all $A$.
For $S = \mathbb{B} = \{ \bot, \top \}$ the booleans, the opens are precisely the decidable subsets.
In this case, $S$ is closed under finite conjunctions and disjunction, corresponding to open subsets being closed under finite intersections and unions.
But in constructive models, the booleans are usually not closed under infinite conjunction, so we may not assume that any infinite unions of opens are open.
Arguably the most interesting case is where $S$ is a proper subset of $\Omega$ (so that the topology is not discrete), contains the boolean truth values $\top$ and $\bot$ and is closed under \emph{enumerable} disjunction.
This makes it possible to study limits and first-countable spaces such as the real numbers, which are at the heart of integration theory.
Following the HoTT book and \citet{Gilbert}, we take for $S$ the least subset of $\Omega$ satisfying these constraints:
The initial $\sigma$-frame.
\begin{definition-proposition}[\cite{Gilbert}]
  \label{def-prop:sierpinsky}
  The \emph{Sierpinski space} $\mathbb{S} = \mathbb{B}_\omega$ is the free $\omega$-cpo over the partial order $\mathbb{B} = \{ \bot \leq \top \}$ of decidable truth values.
  $\mathbb{S}$ admits the structure of a $\sigma$-frame, and it is the initial one.
  The map $\mathbb{S} \rightarrow \Omega$ given by $s \mapsto s = \top$ exhibits $\mathbb{S}$ as a suborder of $\Omega$ and preserves all $\sigma$-frame structure.
\end{definition-proposition}
Thus $\mathbb{S}$ is a suborder of $\Omega$, and we freely identify elements $s \in \mathbb{S}$ with their image in $\Omega$.
The preservation of enumerable joins by the inclusion $\mathbb{S} \subseteq \Omega$ means that if $\bigvee_{n \in \mathbb{N}} s_n = \top$ holds for an enumerable family of elements $s_n \in \mathbb{S}$, then there merely exists $n$ such that $s_n = \top$.

As explained in Section \ref{sec:presentations}, in the presence of countable choice $\mathbb{S}$ may be identified with monotone binary sequences $\omega \rightarrow \mathbb{B}$, where we distinguish sequences only by whether they eventually reach $\top$.
This set is also known as the \emph{Rosolini dominance} \citep{rosolini:dominance} and denoted by $\Sigma_1^0$.
When $\mathbb{S} = \Sigma_1^0$, open subsets $U : A \rightarrow \mathbb{S}$ can be understood as the semi-decidable subsets.
Let $a \in A$ and let $s_0 \leq s_1 \leq \dots$ be an increasing binary sequence representing $U(a)$.
If $s_n = \top$ for some $n$, then $a \in U$, but we can never conclude $a \notin U$ by checking only a finite prefix of $s$.
Under a realizability interpretation, $s$ corresponds to a computation producing an infinite stream of digits which will eventually contain $1$ if and only if $a \in U$.
If furthermore $A$ itself is enumerable, we obtain an enumeration of $U$.
The Rosolini dominance is not well-behaved without countable choice.
For example, it is not closed under enumerable disjunction.
We circumvent this issue by using the initial $\sigma$-frame instead, which is closed under enumerable disjunction by definition.

An important requirement imposed on the set of open truth values is the \emph{dominance axiom}.
Consider inclusions of spaces $A \subseteq B \subseteq C$ such that $A$ is open in $B$ and $B$ is open in $C$.
In analytic topology, this implies that $A$ is open in $C$.
This is not automatic in synthetic topology, but holds if $S \subseteq \Omega$ is a \emph{dominance} \citep{rosolini:dominance}:
\begin{definition}[\href{https://github.com/FFaissole/Valuations/blob/d06d2c8c9cce3ddf6137ca3440ab02031912d292/Dominance.v\#L23}{\tt Dominance.v:23}]
  A subset $S \subseteq \Omega$ is a \emph{dominance} if for all $p \in \Omega$ and $s \in S$ it holds that
  \begin{equation}
    \label{eq:dominance}
    (s \implies (p \in S)) \implies (s \land p) \in S.
  \end{equation}
\end{definition}
Note that $p \in S$ and $(s \land p) \in S$ are themselves propositions, hence elements of $\Omega$.
Elements $s \in S$ are, via the inclusion $S \subseteq \Omega$, in particular propositions.

\citet{rosolini:dominance} proved that $\Sigma_1^0$ is a dominance under the assumption of countable choice.
It follows that $\mathbb{S}$ is a dominance if countable choice holds.
But $\mathbb{S}$ being a dominance can be proved directly, and even without assuming countable choice:
\begin{theorem}[\href{https://github.com/FFaissole/Valuations/blob/d06d2c8c9cce3ddf6137ca3440ab02031912d292/Dominance.v\#L32}{\tt Dominance.v:32}]
  \label{th:dominance}
  The Sierpinski space $\mathbb{S} \subseteq \Omega$ is a dominance.
\end{theorem}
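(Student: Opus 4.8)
The plan is to fix a proposition $p \in \Omega$ and prove the implication for all $s \in \mathbb{S}$ at once by induction on the structure of $\mathbb{S}$ as the free $\omega$-cpo $\mathbb{B}_\omega$. Concretely, I would consider the predicate
\begin{equation}
  P(s) \;:\equiv\; \bigl( s \implies (p \in \mathbb{S}) \bigr) \implies \bigl( (s \land p) \in \mathbb{S} \bigr),
\end{equation}
and observe that $P(s)$ is itself a proposition, since memberships in the subset $\mathbb{S} \subseteq \Omega$ are propositions and implications of propositions are propositions. It therefore suffices to show that the subset $\{ s \in \mathbb{S} \mid P(s) \}$ contains the images $\eta(\bot)$ and $\eta(\top)$ of the generators and is closed under enumerable directed joins; by the induction principle for free $\omega$-cpos (the same reasoning used in the proof of Proposition \ref{prop:enriched-omega-cpo-completion}) it is then all of $\mathbb{S}$. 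As $p$ is arbitrary, this yields the dominance axiom \eqref{eq:dominance}.

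The two generator cases are immediate. For $s = \bot$, the conclusion $(\bot \land p) \in \mathbb{S}$ holds unconditionally because $\bot \land p = \bot$ and $\bot \in \mathbb{S}$, so $P(\bot)$ is trivially true. For $s = \top$, the hypothesis $\top \implies (p \in \mathbb{S})$ directly gives $p \in \mathbb{S}$, and $\top \land p = p$, so $P(\top)$ holds as well.

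The crux is closure under enumerable directed joins. Suppose $s_0 \leq s_1 \leq \dots$ is a chain in $\mathbb{S}$ with $P(s_n)$ for all $n$, and let $s = \bigvee_n s_n$; to verify $P(s)$, assume $s \implies (p \in \mathbb{S})$. Since $s_n \leq s$ for every $n$, the assumption restricts to $s_n \implies (p \in \mathbb{S})$, and $P(s_n)$ then yields $(s_n \land p) \in \mathbb{S}$. The elements $s_n \land p$ form an enumerable chain in $\mathbb{S}$ (meets are monotone), so their join lies in $\mathbb{S}$ and, by Definition and Proposition \ref{def-prop:sierpinsky}, agrees with the join computed in $\Omega$. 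Distributivity in $\Omega$ finally identifies this join with $s \land p$ via $(\bigvee_n s_n) \land p = \bigvee_n (s_n \land p)$, so $(s \land p) \in \mathbb{S}$ as required. The main thing to get right here is that $P$ is proposition-valued throughout: this is exactly what lets the join step go through without choosing representing sequences for the $s_n$, and is the reason the argument avoids the countable choice on which Rosolini's original proof for $\Sigma_1^0$ relied.
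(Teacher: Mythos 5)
Your proof is correct and follows essentially the same route as the paper's: fix $p$, induct over $s \in \mathbb{S}$ via the free $\omega$-cpo structure of $\mathbb{B}_\omega$, handle $\bot$ and $\top$ directly, and settle the join case by restricting the hypothesis along $s_n \leq s$ and applying distributivity $(\bigvee_n s_n) \land p = \bigvee_n (s_n \land p)$. Your added remarks---that the predicate is proposition-valued and that the inclusion $\mathbb{S} \subseteq \Omega$ preserves the relevant joins (Definition and Proposition \ref{def-prop:sierpinsky})---merely make explicit what the paper's proof leaves implicit.
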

\begin{proof}
  We prove the dominance property \eqref{eq:dominance} for fixed $p \in \Omega$ using the induction principle of $\mathbb{S}$ as a free $\omega$-cpo completion of $\mathbb{B}$.
  If $s = \top$ and $s \implies (p \in \mathbb{S})$, then in particular $p \in \mathbb{S}$ and thus $(s \land p) = p$ is in $\mathbb{S}$.
  If $s = \bot$, then $(s \land p) = \bot$, which is an element of $\mathbb{S}$.
  Now let $s = \bigvee_n s_n$ for an ascending chain $s_0 \leq s_1 \leq \dots$ in $\mathbb{S}$.
  Suppose that $s \implies (p \in \mathbb{S})$ and that the dominance property \eqref{eq:dominance} with $s_n$ in place of $s$ holds for all $n \in \mathbb{N}$.
  Combining this with $s_n \implies s$ and $s \implies p$ it follows that $s_n \land p$ is in $\mathbb{S}$ for all $n$.
  Thus
  \begin{equation}
    s \land p = (\bigvee_n s_n) \land p = \bigvee_n (s_n \land p)
  \end{equation}
  by the distributive law, which is in $\mathbb{S}$.
\end{proof}

Given a dominance $S$ and a set $A$, Rosolini constructs a \emph{partial map classifier} of $A$, which is an object representing partial maps $B \rightharpoonup A$ whose domains of definition are open with respect to $S$.
Following \citet{EscardoKnapp}, the partial map classifier can be defined as
\begin{equation}
  \mathcal{L}_S A = \{(s, v) \mid s \in S, v : s \rightarrow A \}.
\end{equation}
Here $s$ is identified with the subsingleton set $\{ * \mid s \}$.
They refer to elements $(s, v) \in \mathcal{L}_S A$ as \emph{partial elements}.
$v$ is the \emph{value}, $s$ its \emph{extent}.
Under a realizability interpretation and $S = \mathbb{S} = \Sigma_1^0$, maps $B \rightarrow \mathcal{L}_S A$ can be thought of as partial functions from $B$ to $A$, in the sense that their interpretations yield potentially non-terminating computations producing results in $A$.
The interpretation of constructive logic in the effective topos even validates the axiom that for every function $\mathbb{N} \rightarrow \mathcal{L}_S \mathbb{N}$ there merely exists a Turing machine which computes it \citep[chapter 3]{bridges1987varieties}.

If one uses the booleans $\{ \bot, \top \}$ as set of open truth values $S$, then $\mathcal{L}_S A$ is the set of \emph{decidably} partial elements.
$\mathcal{L}_S A$ can then be described as the free partial order with bottom element over the discrete partial order $A$.
Its underlying set is the sum $A + 1$, and all elements of $A$ are greater than the element of $1$.
In this case it is thus decidable whether $x \in \mathcal{L}_S A$ represents a fully defined element (i.e.\@ $x \in A$) or whether $x$ is undefined (i.e.\@ $x \in 1$), so that we may think of elements of $A + 1$ as \emph{decidably} partial elements of $A$.

\citet{Altenkirch} propose defining the partial map classifier of $A$ as the QIIT $A_\bot$ described in Section \ref{sec:presentations}.
In our terminology, $A_\bot$ is the $\omega$-cpo completion $(A + 1)_\omega$.
\citet{EscardoKnapp} mention that $A_\bot$ can be understood in terms of Rosolini's lifting construction.
Indeed, $\mathcal{L}_\mathbb{S} A$ has the structure of an $\omega$-cpo with bottom element under $A$:
The structure map $e : A \rightarrow \mathcal{L}_\mathbb{S} A$ is defined by assigning to each element $a \in A$ the unique map $\top \rightarrow A$ with value $a$.
For $v : s \rightarrow A$ and $v' : s' \rightarrow A$ in $\mathcal{L}_\mathbb{S} A$ let
\begin{equation}
  (s, v) \leq (s', v') \iff ((s \implies s') \land v'_{| s} = v : s \rightarrow A.
\end{equation}
This defines a partial order on $\mathcal{L}_\mathbb{S} A$.
Its bottom element is the unique map $\bot \rightarrow A$. 
The join of an enumerable directed set $U = \{ (s_u, v_u) \mid u \in U \} \subseteq \mathcal{L}_\mathbb{S} A$ is given by $(\bigvee_{u \in U} s_u, v)$, where $v$ is defined by $v(x) = v_{u_0}(x)$ whenever $x \in \bigvee_{u \in U} s_u$ is in $s_{u_0}$.
Thus there is a unique $\omega$-continuous map $f : A_\bot \rightarrow \mathcal{L}_\mathbb{S} A$ which is compatible with the structure maps and preserves the bottom element.
We can then show the following:
\begin{proposition}
  The map $f : A_\bot \rightarrow \mathcal{L}_\mathbb{S} A$ is an order isomorphism.
\end{proposition}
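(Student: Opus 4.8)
The plan is to construct an explicit inverse $g : \mathcal{L}_\mathbb{S} A \rightarrow A_\bot$ to $f$ and to verify that the two maps are mutually inverse, using the universal property of $A_\bot$ for one composite and a direct computation for the other. Since $f$ is already $\omega$-continuous and hence monotone, and since a monotone map admitting a monotone two-sided inverse is automatically an order isomorphism, it suffices to produce such a $g$ that is itself $\omega$-continuous.

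To define $g$, I would recurse on the extent $s \in \mathbb{S}$ using the recursion principle of $\mathbb{S} = \mathbb{B}_\omega$ as a free $\omega$-cpo. An element of $\mathcal{L}_\mathbb{S} A$ is a pair $(s, v)$ with $v : s \rightarrow A$, so the goal is to define, by recursion on $s$, a map $g_s : (s \rightarrow A) \rightarrow A_\bot$ and then set $g(s, v) = g_s(v)$. On the generators $\eta_\mathbb{B}(\bot), \eta_\mathbb{B}(\top) \in \mathbb{S}$ I put $g_{\bot}(v) = \bot_{A_\bot}$ (the domain $\bot \rightarrow A$ being contractible) and $g_{\top}(v) = \eta_A(v(*))$; on an enumerable directed join $s = \bigvee_n s_n$ I put $g_s(v) = \bigvee_n g_{s_n}(v_{| s_n})$, where $v_{| s_n}$ is the restriction along $s_n \leq s$. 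This is an $\omega$-chain in $A_\bot$ because $\eta_A$ and each $g_{s_n}$ are monotone, so $g$ is $\omega$-continuous and preserves bottom, and by construction $g(e(a)) = g_{\top}(\lambda *.\, a) = \eta_A(a)$.

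For $g \circ f = \mathrm{id}_{A_\bot}$ I would invoke uniqueness: $g \circ f$ is $\omega$-continuous, preserves the bottom element, and satisfies $g \circ f \circ \eta_A = g \circ e = \eta_A$, so it agrees with $\mathrm{id}_{A_\bot}$ by the universal property of $A_\bot$. For $f \circ g = \mathrm{id}_{\mathcal{L}_\mathbb{S} A}$ I would compute directly: writing $s = \bigvee_n s_n$, the $\omega$-continuity of $f$ gives $f(g(s, v)) = \bigvee_n f(g_{s_n}(v_{| s_n}))$, and since $f(\eta_A(a)) = e(a)$ and $f(\bot) = \bot$, each term equals $(s_n, v_{| s_n})$; the join of this chain in $\mathcal{L}_\mathbb{S} A$ is $(s, v)$ by the explicit description of directed joins in $\mathcal{L}_\mathbb{S} A$ recalled before the statement.

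The main obstacle is the well-definedness of $g$, that is, justifying the recursion on $s \in \mathbb{S}$ valued in the $s$-indexed family of $\omega$-cpos $(s \rightarrow A) \rightarrow A_\bot$. Because the codomain depends on $s$, this is a dependent elimination rather than a plain recursion, and one must check that the join clause is well-typed and compatible with the monotonicity and antisymmetry constructors of $\mathbb{S}$; concretely, one must verify that $\bigvee_n g_{s_n}(v_{| s_n})$ does not depend on the chain $(s_n)$ presenting $s$. Assuming countable choice one may instead represent $s$ as a monotone binary sequence and argue on representatives, but the choice-free route is to appeal directly to the eliminator of the free $\omega$-cpo completion; once $g$ is seen to be well-defined, the remaining monotonicity and continuity checks are routine.
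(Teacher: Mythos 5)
The crux is exactly the point you flag and then defer: defining $g$ by recursion on the extent $s$ is a \emph{dependent} elimination of $\mathbb{S}$ into the set-valued family $s \mapsto ((s \rightarrow A) \rightarrow A_\bot)$, and no such eliminator is available in this setting. Assumption \ref{ass:free-omega-cpo} equips $\mathbb{S} = \mathbb{B}_\omega$ only with its universal property, i.e.\@ unique $\omega$-continuous maps into $\omega$-cpos; from this one can derive induction for \emph{propositional} predicates (the subset of elements satisfying a predicate closed under the generators and joins is a sub-$\omega$-cpo, and initiality forces it to be all of $\mathbb{S}$), but not recursion into a family of sets. Concretely, your clause $g_s(v) = \bigvee_n g_{s_n}(v_{| s_n})$ must be shown independent of the chain $(s_n)$ presenting $s$ and compatible with the antisymmetry identifications in $\mathbb{S}$ (the join constructor is far from injective), and this is the entire mathematical content of the construction, not a routine check; ``appealing directly to the eliminator'' assumes precisely what needs proof. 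The gap is repairable without choice by the standard section trick: order the total set $\Sigma_{s \in \mathbb{S}}\, ((s \rightarrow A) \rightarrow A_\bot)$ by $(s, h) \leq (s', h')$ iff $s \leq s'$ and $h(v_{| s}) \leq h'(v)$ for all $v : s' \rightarrow A$, verify that this is an $\omega$-cpo containing the two generator pairs $(\bot, v \mapsto \bot)$ and $(\top, v \mapsto \eta(v(*)))$, obtain $G : \mathbb{S} \rightarrow \Sigma$ from the universal property, and conclude $\pi_1 \circ G = \mathrm{id}$ by uniqueness, so that $G$ is a section computing your $g_s$ --- but none of this appears in your proposal, and without it $g$ is not a defined function.

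The paper avoids this obstacle entirely by never constructing an inverse function via recursion on $s$. It first shows that $f$ is an order embedding, by induction on $x \in A_\bot$ of the propositional predicate ``for all $x'$, if $f(x) \leq f(x')$ then $x \leq x'$''; it then proves surjectivity by induction on $s$ of the propositional statement that a preimage of $(s, v)$ \emph{merely} exists. In the join case $s = \bigvee U$, the already-established embedding property makes preimages unique, so unique choice upgrades mere existence to existence: the set $V = \{ x \in A_\bot \mid f(x) = (u, v_{| u}) \text{ for some } u \in U \}$ is an honest directed enumerable family isomorphic to $U$, and $f(\bigvee V) = (s, v)$. Both inductions are predicate-valued and hence derivable from the universal property alone, which is exactly the maneuver your route needs but lacks. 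So your strategy is salvageable and arguably yields a stronger artifact (an explicit $\omega$-continuous inverse), but as written the central step fails to typecheck against the available elimination principles.
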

\begin{proof}
  First note that the projection $\mathcal{L}_\mathbb{S} A \rightarrow \mathbb{S}$ that sends a partial element $(s, v)$ to its extent $s$ is $\omega$-continuous and preserves the bottom element.
  The unique map $A \rightarrow 1$ induces a map $A_\bot \rightarrow 1_\bot = \mathbb{S}$, which can equivalently be described as assigning to $x \in A_\bot$ the truth value
  \begin{equation}
    (\exists a \in A, \eta(a) = x) \in \Omega
  \end{equation}
  by Proposition \ref{def-prop:sierpinsky}.
  (A direct proof of this can also be found in \citet{Gilbert}.)
  Here the existential quantifier denotes \emph{mere} existential quantification.
  By the universal property of $A_\bot$, the maps constructed so far commute with $f$, so if $f(x) = (s, v)$, then $s \iff \exists a \in A, x = \eta(a)$.

  Now let us show that $f$ exhibits $A_\bot$ as suborder of $\mathcal{L}_\mathbb{S} A$.
  Suppose $f(x) = (s, v)$ and $f(x') = (s', v')$ such that $(s, v) \leq (s', v')$ in $\mathcal{L}_\mathbb{S} A$.
  We show $x \leq x'$ by induction over $x$.
  If $x = \bot$, then trivially $x \leq x'$.
  If $x = \eta(a)$ for some $a \in A$, then $s' \geq s = \top$, hence $s' = \top$.
  From this it follows by our initial remark that there merely exists $a' \in A$ such that $x' = \eta(a')$.
  In particular, $a = v(*) = v'(*) = a'$, where $* \in \top$ is the unique element of the unit set, hence $x = x'$.
  Now let $x = \bigvee U$ be the join of a directed enumerable subset $U \subseteq A_\bot$.
  We may assume that for all $u \in U$, if $f(u) \leq f(x')$, then $u \leq x'$.
  Thus $u \leq x'$ because $f(u) \leq f(x) \leq f(x')$ for all $u$.
  But then $x = \bigvee U \leq x'$ by definition of least upper bound.

  It remains to show that $f$ is surjective and hence an order isomorphism.
  For this we must construct for each partial element $(s, v) \in \mathcal{L}_\mathbb{S} A$ an element $x \in A_\bot$ such that $f(x) = (s, v)$.
  We proceed by induction over $s$.
  We can set $x = \bot$ if $s = \bot$ and $x = \eta(v(*))$ if $s = \top$.
  Now let $s = \bigvee U$ be a directed enumerable join in $\mathcal{L}_\mathbb{S} A$.
  We may assume that for partial elements of the form $w : u \rightarrow A$ with $u \in U$ there merely exists $x \in A_\bot$ such that $f(x) = (u, w)$.
  Because $f : A_\bot \rightarrow \mathcal{L}_\mathbb{S} A$ was already proved to be the inclusion of a suborder,
  \begin{equation}
    V = \{ x \in A_\bot \mid f(x) = (u, v_{| u}) \text{ for some } u \in U \}
  \end{equation}
  embeds into $U$.
  By the induction hypothesis, it is isomorphic to $U$, hence directed and enumerable.
  Now $f(\bigvee V) = \bigvee f(V) = \bigvee_{u \in U} (u, v_{| u}) = (s, v)$.
\end{proof}

\section{The lower reals}
\label{sec:lower-reals}

A \emph{Dedekind cut} is a pair of sets of rational numbers $(L, U)$ of the form $L = (\infty, x) \cap \mathbb{Q}$ and $U = (x, \infty) \cap \mathbb{Q}$ for some real number $x$.
The condition that $(L, U)$ is of this form can be stated purely in terms of rational numbers without referring to the real numbers, so the (Dedekind) real numbers $\mathbb{R}$ can be defined as the set of all pairs $(L, U)$ satisfying these requirements; see e.g.\@ \citet{johnstone2002sketches}.
Constructively, even a bounded subset of $\mathbb{R}$ does not necessarily have a supremum.
This is problematic in integration theory, because integrals of functions on non-compact spaces are constructed by approximating them from below.

A lower real is given only by the lower part $L$.
Note that, constructively, $U$ cannot be reconstructed from just $L$ or vice-versa.
In the setting of synthetic topology, it is natural to ask that the subsets $L$ (and $U$) are valued in the Sierpinski space $S$, so that they correspond to subsets of $\mathbb{Q}$ which are open with respect to $S$.
For Dedekind reals, this has been studied extensively by \citet{lesnikphd}.
The usage of the initial $\sigma$-frame $\mathbb{S}$ in the definition of Dedekind real numbers is also proposed in the HoTT book (Section 11.2) and has been formalized by \citet{Gilbert}.
For us $\mathbb{S} = S$ is the Sierpinski space, so real numbers $x$ given by open Dedekind cuts can be understood as those for which the predicates $q < x$ and $q > x$ on rational numbers $q$ are semi-decidable.
If $x$ is a lower real, then only the predicate $q < x$ will be semi-decidable.
We use the symbol $\mathbb{R}$ to refer to the Dedekind reals valued in $\mathbb{S}$ and likewise $\mathbb{R}_l$ for the set of lower reals valued in $\mathbb{S}$.

\begin{definition}[\href{https://github.com/FFaissole/Valuations/blob/d06d2c8c9cce3ddf6137ca3440ab02031912d292/Rlow.v\#L46}{\tt Rlow.v:46}]
  A \emph{lower real} is an open subset $L : \mathbb{Q} \rightarrow \mathbb{S}$ of $\mathbb{Q}$ satisfying the following axioms:
  \begin{itemize}
    \item
      There merely exists $q \in \mathbb{Q}$ such that $L(q)$,
    \item
      for all $q \in \mathbb{Q}$, if $L(q)$ then there merely exists $q' > q$ such that $L(q')$, and
    \item
      for all $q < q' \in \mathbb{Q}$, if $L(q')$, then $L(q)$.
  \end{itemize}
  The set of all lower reals is denoted by $\mathbb{R}_l$.
  For $q \in \mathbb{Q}$ let
  \begin{equation}
    \underline q = \{ p \in \mathbb{Q} \mid p < q \} \in \mathbb{R}_l.
  \end{equation}
  The subset of \emph{non-negative lower reals} is given by
  \begin{equation}
    \mathbb{R}_l^+ = \{L \in \mathbb{R}_l \mid \forall q \in \mathbb{Q}, q < 0 \implies L(q) \}.
  \end{equation}
\end{definition}
Note that $\infty = \mathbb{Q} \in \mathbb{R}_l$ is a lower real, but that $-\infty$ (however it may be defined) is not in $\mathbb{R}_l$.
In predicative foundations, the Dedekind or lower reals usually have to be parameterized by a universe level $i$, corresponding to the size of the set of truth values $\Omega_i$ the lower (and upper) cuts are valued in.
The resulting set of reals will only be an element of the $(i + 1)$th universe.
Using the set of open truth values $\mathbb{S}$, we avoid this nuisance and obtain just one set of Dedekind and lower reals, respectively.

Crucial for the use of lower reals in integration theory is their order-theoretic structure:
\begin{proposition}[\href{https://github.com/FFaissole/Valuations/blob/d06d2c8c9cce3ddf6137ca3440ab02031912d292/Rlow.v}{\tt Rlow.v}]
  \label{prop:lower-real-sigma-frame}
  The lower reals endowed with the relation
  \begin{equation}
    L_1 \leq L_2 \iff \forall q \in \mathbb{Q}, q \in L_1 \implies q \in L_2
  \end{equation}
  for $L_1, L_2 \in \mathbb{R}_l$ are a partial order.
  Finite meets and enumerable joins in $\mathbb{R}_l$ exist, are computed pointwise and satisfy the distributivity law $x \land (\bigvee_{n \in \mathbb{N}} y_n) = \bigvee_{n \in \mathbb{N}} \, (x \land y)$.
  The suborder of non-negative lower reals $\mathbb{R}^+_l$ is a $\sigma$-frame.
  The map $q \mapsto \underline{q}$ exhibits $\mathbb{Q}$ as suborder of $\mathbb{R}_l$.
  \qed
\end{proposition}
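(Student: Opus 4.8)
The plan is to reduce every claim to the corresponding structure on the Sierpinski space $\mathbb{S}$, which is a $\sigma$-frame by Definition and Proposition \ref{def-prop:sierpinsky}, together with function extensionality. First I would dispatch the partial order axioms: reflexivity and transitivity of $\leq$ are immediate from the pointwise implication defining it, and antisymmetry follows because membership $q \in L$ is a proposition (as $\mathbb{S} \subseteq \Omega$), so function extensionality turns the pointwise bi-implication $q \in L_1 \iff q \in L_2$ into the equality $L_1 = L_2$.

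The substance is to show that the pointwise meet $(L_1 \wedge L_2)(q) = L_1(q) \wedge L_2(q)$ and the pointwise enumerable join $(\bigvee_n L_n)(q) = \bigvee_n L_n(q)$, computed in $\mathbb{S}$, are again lower reals. For both, lower-closedness is immediate, and roundedness of the meet follows by taking the minimum of two witnesses $q_1, q_2 > q$ supplied by roundedness of $L_1$ and $L_2$ (the goal being a mere existence, the propositional truncations cause no trouble). For the join, roundedness and lower-closedness use the fact recorded after Definition and Proposition \ref{def-prop:sierpinsky} that $\bigvee_n s_n = \top$ in $\mathbb{S}$ entails the mere existence of an $n$ with $s_n = \top$. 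The point where the specific structure of lower reals, rather than arbitrary open subsets, is essential is the inhabitedness of the meet: given $q_1 \in L_1$ and $q_2 \in L_2$, any rational $q < \min(q_1, q_2)$ lies in both by lower-closedness, so $L_1 \wedge L_2$ is inhabited; for the join, inhabitedness is inherited from $L_0$. Once the operations are known to land in $\mathbb{R}_l$, the universal properties (greatest lower bound, least upper bound) hold pointwise because they do in $\mathbb{S}$, the top element is $\infty = \mathbb{Q}$, and the distributivity law transfers verbatim from $\mathbb{S}$ since both sides are evaluated pointwise.

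For the $\sigma$-frame claim I would first observe that $\mathbb{R}_l$ itself is \emph{not} a $\sigma$-frame, as it has no bottom element: a least lower real $B$ would satisfy $B \subseteq \underline q$ for every $q$, forcing $B = \emptyset$ and contradicting inhabitedness. Restricting to the non-negative lower reals repairs this, with $\underline 0 = \{ p \in \mathbb{Q} \mid p < 0 \}$ serving as bottom, since every $L \in \mathbb{R}_l^+$ contains all negative rationals and hence dominates $\underline 0$. It then remains to check that $\mathbb{R}_l^+$ is closed under the operations already constructed: the top $\infty$, binary meets and enumerable joins all preserve the property $q < 0 \Rightarrow L(q) = \top$ (using that meets and joins of families of $\top$'s are $\top$), so these are computed as in $\mathbb{R}_l$ and $\mathbb{R}_l^+$ inherits distributivity, making it a $\sigma$-frame.

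Finally, to see that $q \mapsto \underline q$ is a suborder embedding I would check monotonicity ($q \leq q'$ gives $\underline q \subseteq \underline{q'}$ directly) and order-reflection: if $\underline q \leq \underline{q'}$ but $q' < q$, then density of $\mathbb{Q}$ produces a rational $p$ with $q' < p < q$, which lies in $\underline q$ but not $\underline{q'}$, a contradiction; since the order on $\mathbb{Q}$ is decidable this forces $q \leq q'$. The main obstacle throughout is the careful constructive handling of the mere-existence witnesses in the inhabitedness and roundedness axioms, and in particular recognizing that lower-closedness is exactly what guarantees the meet is inhabited; all the algebraic identities are then a formal consequence of the $\sigma$-frame laws on $\mathbb{S}$.
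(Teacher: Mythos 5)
Your proof is correct. Note that the paper itself gives no written proof of this proposition --- it is stated with a terminal \qed{} and a pointer to the Coq file {\tt Rlow.v} --- so there is no textual argument to compare against; your write-up supplies exactly the verification the paper delegates to the formalization. You identify the right pressure points: the min-of-witnesses trick for roundedness of binary meets, lower-closedness as the source of inhabitedness of meets, the fact recorded after Definition and Proposition \ref{def-prop:sierpinsky} that the inclusion $\mathbb{S} \subseteq \Omega$ preserves enumerable joins (so $\bigvee_n s_n = \top$ yields a mere witness, which is what makes roundedness and lower-closedness of pointwise joins go through), decidability of the rational order for order-reflection of $q \mapsto \underline{q}$, and the correct observations that $\mathbb{R}_l$ has no bottom (matching the paper's remark that $-\infty \notin \mathbb{R}_l$) while $\underline{0}$ is the bottom of $\mathbb{R}_l^+$. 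One small gloss worth tightening: for antisymmetry, a pointwise bi-implication between $L_1(q)$ and $L_2(q)$ lives in $\Omega$, and before invoking function extensionality you need pointwise \emph{equality} in $\mathbb{S}$; this follows because Definition and Proposition \ref{def-prop:sierpinsky} makes $\mathbb{S} \rightarrow \Omega$ a suborder inclusion, so the order on $\mathbb{S}$ is reflected and antisymmetry of $\mathbb{S}$ applies --- your appeal to propositionality alone elides this step, but it is immediate from the cited result. Similarly, your pointwise transfer of the universal properties implicitly uses that the inclusion order on $\mathbb{R}_l$ (defined via $\Omega$) coincides with the pointwise $\mathbb{S}$-order, again by order-reflection; this is harmless but deserves a sentence.
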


In view of Proposition \ref{prop:lower-real-sigma-frame}, it is natural to wonder whether $\mathbb{R}_l$ is obtained by a completion process of $\mathbb{Q}$.
This is indeed the case.
Define a cover relation on $\mathbb{Q}$ by $q \triangleleft U$ for enumerable directed $U \subseteq \mathbb{Q}$ such that $\bigvee U$ exists and is equal to $q$.
The embedding $\mathbb{Q} \subseteq \mathbb{R}_l$ preserves enumerable joins and thus induces an $\omega$-continuous map $f : \mathbb{Q}_\omega \rightarrow \mathbb{R}_l$.
Similarly we have $f^+ : (\mathbb{Q}^+)_\omega \rightarrow \mathbb{R}_l^+$, where $\mathbb{Q}^+$ is understood as an $\omega$-cpo presentation with the restricted cover relation of $\mathbb{Q}$.
\begin{theorem}
  \label{th:lower-reals-as-completion}
  The unique $\omega$-continuous maps $f : \mathbb{Q}_\omega \rightarrow \mathbb{R}_l$ and $f^+ : (\mathbb{Q}^+)_\omega \rightarrow \mathbb{R}^+_l$ under $\mathbb{Q}$ respectively $\mathbb{Q}^+$ are order isomorphisms.
\end{theorem}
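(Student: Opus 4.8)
The plan is to show that the monotone map $f$ is both order-reflecting (hence an order-embedding) and surjective; an order-reflecting monotone surjection of partial orders is automatically an order isomorphism. The same argument will apply verbatim to $f^+$, where the presence of a least element even removes the one appeal to mere inhabitation. Monotonicity of $f$ is immediate from the universal property, so everything reduces to these two points. Throughout I fix an enumeration $e : \mathbb{N} \twoheadrightarrow \mathbb{Q}$ and use that, since $<$ is decidable on $\mathbb{Q}$, every set $\{q' \in \mathbb{Q} \mid q' < q\}$ is a decidable, hence enumerable, directed subset of $\mathbb{Q}$ with supremum $q$, so that
\begin{equation}
  q \triangleleft \{q' \in \mathbb{Q} \mid q' < q\}
\end{equation}
is a legitimate cover in the presentation of $\mathbb{Q}$. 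This single cover is the technical device that drives the whole proof.

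The key lemma I would establish is the equivalence
\begin{equation}
  \eta(q) \leq y \iff \underline q \leq f(y) \qquad \text{for all } q \in \mathbb{Q}, \ y \in \mathbb{Q}_\omega.
\end{equation}
The forward direction is monotonicity. For the converse I argue by induction over $y$, proving the statement for all $q$ simultaneously. On generators $y = \eta(p)$ one has $\underline q \leq \underline p \iff q \leq p$, which yields $\eta(q) \leq \eta(p)$. For a directed enumerable join $y = \bigvee W$, assume $\underline q \leq f(\bigvee W) = \bigvee_{w \in W} f(w)$, the join computed pointwise. For each rational $q' < q$ the truth value $q' \in \bigvee_{w} f(w)$ is an enumerable join in $\mathbb{S}$ that equals $\top$, so there merely exists $w \in W$ with $q' \in f(w)$, whence $\underline{q'} \leq f(w)$ by downward closure; the induction hypothesis gives $\eta(q') \leq w \leq \bigvee W$. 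As this conclusion is a proposition, it holds for every $q' < q$, and the cover above then yields $\eta(q) \leq \bigvee_{q' < q} \eta(q') \leq \bigvee W$, closing the induction. From the lemma, order-reflection $f(x) \leq f(y) \Rightarrow x \leq y$ follows by a second, routine induction over $x$ whose generator case is exactly the lemma. In particular $f$ is injective.

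For surjectivity I fix a lower real $L$; since $L \in \mathrm{im}\, f$ is a proposition I may assume an actual witness $q_0 \in L$. A pointwise computation using roundedness and surjectivity of $e$ shows
\begin{equation}
  L = \bigvee_{n \in \mathbb{N}} L_n, \qquad L_n := \underline{q_0} \vee \bigvee_{k \leq n} \bigl( L(e(k)) \wedge \underline{e(k)} \bigr),
\end{equation}
an increasing family of genuine (inhabited) lower reals. Each $L_n$ is a \emph{finite} combination of the Sierpinski coefficients $L(e(k))$ for $k \leq n$, so, using again that $L_n \in \mathrm{im}\, f$ is a proposition, I may decompose these finitely many coefficients as joins of increasing boolean sequences (finite choice only). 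Distributing meets over these joins and collecting terms rewrites $L_n$ as a single increasing join $\bigvee_m \underline{r^{(n)}_m}$ of \emph{rational} lower reals, where each $r^{(n)}_m$ is a finite maximum of $q_0$ and the $e(k)$ selected by the booleans; hence $L_n \in \mathrm{im}\, f$. Injectivity then lets me set $x_n := f^{-1}(L_n) \in \mathbb{Q}_\omega$ \emph{canonically}, avoiding any countable choice, and order-reflection turns $L_n \leq L_{n+1}$ into $x_n \leq x_{n+1}$, so $\bigvee_n x_n$ exists in $\mathbb{Q}_\omega$ with $f(\bigvee_n x_n) = \bigvee_n L_n = L$. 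Thus $f$ is an order isomorphism, and $f^+$ is treated identically.

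The step I expect to be the main obstacle is surjectivity, specifically carrying it out predicatively and without countable choice: an arbitrary lower real is not literally the supremum of an increasing sequence of rationals one can name, and the Sierpinski-valued coefficients $L(e(k))$ cannot be decided. The two devices that make it go through are the reduction to the finite combinations $L_n$, so that only finite choice of boolean decompositions is needed, and the prior establishment of injectivity via the cover $q \triangleleft \{q' < q\}$, which replaces the would-be choice of preimages $x_n$ by the canonical values $f^{-1}(L_n)$.
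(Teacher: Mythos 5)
Your order-embedding half is sound: the key lemma $\eta(q) \leq y \iff \underline q \leq f(y)$, proved by induction over $\mathbb{Q}_\omega$ using the cover $q \triangleleft \{q' \in \mathbb{Q} \mid q' < q\}$, is correct, and it is close in spirit to the paper's computation $g(\underline q) = \bigvee_{p < q} \eta(p) = \eta(q)$. The genuine gap is in surjectivity, at exactly the step you single out: ``decompose these finitely many coefficients as joins of increasing boolean sequences (finite choice only).'' The obstruction is not choice among the finitely many coefficients $L(e(k))$, but the existence of a decomposition for a \emph{single} one: without countable choice it is not provable that an element $s \in \mathbb{S} = \mathbb{B}_\omega$ is even \emph{merely} the join of an increasing boolean sequence. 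That statement is precisely the identification of $\mathbb{S}$ with the Rosolini dominance $\Sigma_1^0$, which the paper explicitly notes fails in the absence of countable choice (Section 4: $\Sigma_1^0$ is not closed under enumerable disjunction, which is the stated reason for working with the initial $\sigma$-frame in the first place). The free $\omega$-cpo completion is built by iterated chain-joins; if you attempt to prove your decomposition claim by induction over $s$, the join case $s = \bigvee_n s_n$ requires simultaneously choosing a witnessing boolean chain for each $s_n$ --- countable choice --- and the separate mere existences do not combine, even though your goal $L_n \in \mathrm{im}\, f$ is a proposition. So $L_n \in \mathrm{im}\, f$ remains unproved, and the canonical preimages $x_n = f^{-1}(L_n)$ never get off the ground.

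The paper's proof shows how to repair this without any flattening of $\mathbb{S}$: instead of decomposing $s \in \mathbb{S}$ into booleans, transport it into $\mathbb{Q}_\omega$ by the unique $\omega$-continuous map $\mathbb{S} \rightarrow \mathbb{Q}_\omega$ sending $\bot \mapsto \eta(q_0)$ and $\top \mapsto \eta(p)$ (written $s \mapsto p_s$), and define the inverse directly as $g(L) = \bigvee_{p \in \mathbb{Q}} p_{L(p)}$ --- an enumerable join, available because the covers on $\mathbb{Q}$ are stable under binary joins, so that $\mathbb{Q}_\omega$ has all enumerable joins. One then checks $g f = \mathrm{id}$ by the universal property (your cover computation reappears there) and $f g = \mathrm{id}$ pointwise using roundedness of $L$; in particular the inequality $f(p_{L(p)}) \leq L$ is proved by induction over $L(p) \in \mathbb{S}$, which is exactly the induction your boolean flattening tries, and fails, to replace. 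In your scheme this amounts to setting $x_n = \eta(q_0) \vee \bigvee_{k \leq n} (e(k))_{L(e(k))}$, after which your decomposition $L = \bigvee_n L_n$ and the chain argument go through; alternatively, once the two-sided inverse $g$ is in hand, your separate order-embedding half becomes redundant, since monotone maps that are mutually inverse already form an order isomorphism.
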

Noting that the two operations preserve covers, we conclude with Proposition \ref{prop:product-presentations} the following:
\begin{corollary}[\href{https://github.com/FFaissole/Valuations/blob/d06d2c8c9cce3ddf6137ca3440ab02031912d292/Rlow.v}{\tt Rlow.v}]
  \label{cor:lower-real-arithmetic}
  Addition on $\mathbb{Q}$ and multiplication on $\mathbb{Q}^+$ extend uniquely to $\omega$-continuous operations on $\mathbb{R}_l$ and $\mathbb{R}_l^+$, respectively.
\end{corollary}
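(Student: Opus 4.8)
The plan is to obtain the statement directly from the universal property of the free $\omega$-cpo completion, using Theorem \ref{th:lower-reals-as-completion} to replace $\mathbb{R}_l$ and $\mathbb{R}_l^+$ by $\mathbb{Q}_\omega$ and $(\mathbb{Q}^+)_\omega$, and Proposition \ref{prop:product-presentations} to decompose the completion of a product. Write $P$ for either $\mathbb{Q}$ (with the operation $m = {+}$) or $\mathbb{Q}^+$ (with $m = {\cdot}$), and regard $m$ as a map $P \times P \to P$, where $P \times P$ carries the product cover relation of Proposition \ref{prop:product-presentations}. The crux is to check that $m$ is a \emph{morphism} of $\omega$-cpo presentations, i.e.\ that it preserves covers; once this is established the rest is bookkeeping with universal properties.

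The key lemma is therefore cover preservation of $+$ and $\cdot$. Recall that the cover on $\mathbb{Q}$ is $q \triangleleft U$ iff $U$ is enumerable, directed, and $\bigvee U = q$, and that a product cover has the form $(q_1, q_2) \triangleleft U \times \{q_2\}$ (or the symmetric form). By symmetry in the two factors it suffices to treat the former, arising from $q_1 \triangleleft U$. Under $m$ this family maps to $\{ m(u, q_2) \mid u \in U \}$. Since the partial map $x \mapsto m(x, q_2)$ is monotone — translation by $q_2$ in the additive case, scaling by the fixed \emph{non-negative} rational $q_2$ in the multiplicative case — its image is again enumerable and directed, and because such maps preserve existing suprema in $\mathbb{Q}$ (resp.\ $\mathbb{Q}^+$) we get $\bigvee_{u \in U} m(u, q_2) = m(\bigvee U, q_2) = m(q_1, q_2)$. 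Hence $m(q_1, q_2) \triangleleft m(U \times \{q_2\})$, as required. Note that the non-negativity built into $\mathbb{Q}^+$ is precisely what makes multiplication by a constant monotone and sup-preserving; on all of $\mathbb{Q}$ scaling by a negative rational reverses the order and destroys covers, which is exactly why multiplication is extended only over $\mathbb{Q}^+$.

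Given cover preservation, Assumption \ref{ass:free-omega-cpo} yields a unique $\omega$-continuous map $m_\omega : (P \times P)_\omega \to P_\omega$ with $m_\omega \circ \eta = \eta \circ m$. Proposition \ref{prop:product-presentations} supplies the order isomorphism $(P \times P)_\omega \cong P_\omega \times P_\omega$ identifying $\eta(p_1, p_2)$ with $(\eta p_1, \eta p_2)$, so $m_\omega$ transports to an $\omega$-continuous binary operation on $P_\omega$. Transporting once more along the isomorphism $P_\omega \cong \mathbb{R}_l$ (resp.\ $(\mathbb{Q}^+)_\omega \cong \mathbb{R}_l^+$) of Theorem \ref{th:lower-reals-as-completion}, which sends $\eta(q)$ to $\underline{q}$, I obtain an $\omega$-continuous operation on $\mathbb{R}_l$ (resp.\ $\mathbb{R}_l^+$) whose restriction along $q \mapsto \underline{q}$ is $+$ (resp.\ $\cdot$).

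Uniqueness then comes for free from the same universal property: an $\omega$-continuous operation on $\mathbb{R}_l$ extending $+$ corresponds, under the two isomorphisms above, to an $\omega$-continuous map $(P \times P)_\omega \to P_\omega$ whose composite with $\eta$ is $\eta \circ m$, and Assumption \ref{ass:free-omega-cpo} guarantees exactly one such map. Equivalently and more concretely, two $\omega$-continuous extensions agree on the generating pairs $(\underline{q_1}, \underline{q_2})$, and the subset of $\mathbb{R}_l \times \mathbb{R}_l$ on which they agree contains these generators and is closed under enumerable directed joins (by joint $\omega$-continuity, using that the diagonal of a directed set is final, as in the proof of Proposition \ref{prop:product-presentations}), hence is all of $\mathbb{R}_l \times \mathbb{R}_l$. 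I expect the only real work to be the cover-preservation computation of the second paragraph; everything else is an application of machinery already in place.
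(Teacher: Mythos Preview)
Your proposal is correct and follows exactly the paper's approach: the paper's entire proof is the single sentence ``Noting that the two operations preserve covers, we conclude with Proposition~\ref{prop:product-presentations} the following,'' invoking Theorem~\ref{th:lower-reals-as-completion} implicitly. You have simply filled in the details of the cover-preservation check and spelled out how the universal property and the product-presentation isomorphism combine, which is precisely what the paper leaves to the reader.
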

Our Coq formalization includes definitions of addition on $\mathbb{R}_l$ (\href{https://github.com/FFaissole/Valuations/blob/d06d2c8c9cce3ddf6137ca3440ab02031912d292/Rlow.v#L330}{\tt Rlow.v:330}) and multiplication of lower reals by rational numbers (\href{https://github.com/FFaissole/Valuations/blob/d06d2c8c9cce3ddf6137ca3440ab02031912d292/Rlow.v#L1376}{\tt Rlow.v:1376}), but does not prove uniqueness as asserted by Corollary \ref{cor:lower-real-arithmetic}.

Multiplication cannot be (constructively) extended to an operation on all lower reals because it is not monotone.
In terms of lower cuts, we have $q \in (L_1 + L_2)$ if and only if there merely exist $q_1 \in L_1$ and $q_2 \in L_2$ such that $q_1 + q_2 = q$, and similarly for the product $L_1 \cdot L_2$ if $L_1, L_2 \in \mathbb{R}_l^+$.

The statement analogous to Theorem \ref{th:lower-reals-as-completion} for the usual lower reals (which are not required to be valued in $\mathbb{S}$) and completion under arbitrary directed joins can be shown as follows.
The proposed inverse $g$ to $f$ maps a lower real $L : \mathbb{Q} \rightarrow \Omega$ to the join $g(L) = \bigvee_{q \in L} \eta(q)$ in the completion of $\mathbb{Q}$ under arbitrary directed joins.
This defines a continuous map which is compatible with the inclusions of $\mathbb{Q}$, hence $gf = \mathrm{id}$ by the universal property of the completion.
On the other hand, $fg = \mathrm{id}$ because $L = \bigvee_{q \in \mathbb{Q}} \underline{q}$ for all $L$.
Unfortunately, this proof does not directly transfer to our situation because lower reals $L : \mathbb{Q} \rightarrow \mathbb{S}$ are not necessarily enumerable in the sense that there is a surjection $\mathbb{N} \twoheadrightarrow L = \{q \in \mathbb{Q} \mid L(q) \}$, at least not in the absence of countable choice.
\begin{proof}[Proof of Theorem \ref{th:lower-reals-as-completion}.]
  For brevity, we only prove the statement about $\mathbb{R}_l$, the proof for $\mathbb{R}_l^+$ being similar.
  Note that the covers of $\mathbb{Q}$ are stable under binary joins, thus $\mathbb{Q}_\omega$ has binary joins and hence arbitrary enumerable joins.
  This allows us to construct a map $g : \mathbb{R}_l \rightarrow \mathbb{Q}_\omega$ as follows.
  Let $L \in \mathbb{R}_l$ and pick $q \in L$.
  For each $p \in \mathbb{Q}$, let $s \mapsto p_s$ be the unique $\omega$-continuous map $\mathbb{S} \rightarrow \mathbb{Q}_\omega$ which sends $\bot$ to $\eta(q)$ and $\top$ to $\eta(p)$.
  Now set
  \begin{equation}
      g(L) = \bigvee_{p \in \mathbb{Q}} p_{L(p)}.
  \end{equation}
  If $p \in L$, then $p_{L(p)} = \eta(p)$ by definition, and so $\bigvee_{q \in \mathbb{Q}} q_{L(q)} \geq \eta(p)$.
  Thus $g$ is well-defined as it does not depend on the choice of $q$.

  $g$ is defined as composition of $\omega$-continuous maps, so is $\omega$-continuous itself.
  It is compatible with the structure maps $\mathbb{Q} \rightarrow \mathbb{R}_l$ and $\mathbb{Q} \rightarrow \mathbb{Q}_\omega$ because
  \begin{equation}
    g(\underline{q}) = \bigvee_{p \in \mathbb{Q}} p_{\underline{q}(p)} = \bigvee_{p < q} \eta(p) = \eta(q)
  \end{equation}
  by definition of the cover relation on $\mathbb{Q}$.
  It follows that $gf = \mathrm{id}$ by the universal property of $\mathbb{Q}_\omega$.

  Note that $f$ preserves arbitrary enumerable joins (not necessarily directed) because the map $\mathbb{Q} \rightarrow \mathbb{R}_l$ preserves binary joins.
  Let $L \in \mathbb{R}_l$.
  It can be shown by induction over $L(p)$ that $f(p_{L(p)}) \leq L$ for all $p \in \mathbb{Q}$.
  Thus
  \begin{equation}
    f(g(L)) = f(\bigvee_{p \in \mathbb{Q}} p_{L(p)}) = \bigvee_{p \in \mathbb{Q}} f(p_{L(p)}) \leq L.
  \end{equation}
  On the other hand, suppose $q \in L$ and let us show that $q \in f(g(L))$, i.e.\@ that $L \leq f(g(L))$.
  Because $L$ is a rounded lower subset of $\mathbb{Q}$, there merely exists $q' > q$ such that $q' \in L$.
  Then $f(q'_{L(q')}) = \underline{q'} \leq f(g(L))$, hence $q \in f(g(L))$.
\end{proof}

\section{Integrals and Valuations}
\label{sec:valuations-integrals}

In this section we define valuations, which play the role of measures but are defined only on opens, and integrals.
We then prove a version of the Riesz theorem, which states that there is a one-to-one correspondence between valuations and integrals.
Valuations are often preferred over measures in constructive mathematics because measures would have to be valued in the hyperreals \citep{coquand2002metric}.
They have a long tradition in the domain-theoretic semantics of probabilistic computations, see e.g.\@ \citet{Jones1989APP}.
It is observed there that, classically, valuations on compact Hausdorff spaces are in bijective correspondence with regular measures.
Our proof of the Riesz theorem is inspired by \citet{integrals-valuations} and \citet{Vickers}, who prove similar results in the setting of locales.

Fix a set $A$.
Recall that $\mathcal{O}(A)$, the set of open subsets of $A$, is defined as the set of functions $A \rightarrow \mathbb{S}$.
The $\sigma$-frame structures of $\mathbb{S}$ and $\mathbb{R}_l^+$ induce $\sigma$-frame structures on the sets of functions $\mathcal{O}(A)$ and $A \rightarrow \mathbb{R}_l^+$, with all structure defined pointwise.

\begin{definition}[\href{https://github.com/FFaissole/Valuations/blob/d06d2c8c9cce3ddf6137ca3440ab02031912d292/Valuations.v\#L49}{\tt Valuations.v:49}]
  An ($\omega$-continuous) \emph{valuation} on a set $A$ is an $\omega$-continuous map $\mu : \mathcal{O}(A) \rightarrow \mathbb{R}_l^+$ preserving the bottom element that satisfies the \emph{modularity law}
  \begin{equation}
    \mu(U) + \mu(V) = \mu(U \cup V) + \mu(U \cap V).
  \end{equation}
  for all opens $U, V \in \mathcal{O}(A)$.
  $\mu$ is a \emph{sub-probability} valuation if $\mu(A) \leq 1$.
  The set of all valuations on $A$ is denoted by $\mathfrak{V}(A)$ and the set of sub-probability valuations by $\mathfrak{V}_{\leq 1}(A)$.
\end{definition}

Let $r : \mathbb{S} \rightarrow \mathbb{R}_l$ be the unique $\omega$-continuous map such that $r(\bot) = 0$ and $r(\top) = 1$.
By postcomposition we obtain a map $(A \rightarrow \mathbb{S}) \rightarrow (A \rightarrow \mathbb{R}_l)$ that assigns to each open $U \in \mathcal{O}(A) = (A \rightarrow \mathbb{S})$ its \emph{(real) indicator function} $\mathds{1}_U = r(U) : A \rightarrow \mathbb{R}_l$.

The map $U \mapsto \mathds{1}_U$ is an order embedding, and so we can equivalently think of a valuation $\mu$ as assigning lower reals to a certain subset of functions $A \rightarrow \mathbb{R}_l^+$.
The Riesz theorem states that every valuation $\mu$ can be extended to a lower integral, which is a function defined on \emph{all} maps $A \rightarrow \mathbb{R}_l^+$, and that every lower integral is determined by its restriction to indicator functions.

\begin{definition}[\href{https://github.com/FFaissole/Valuations/blob/d06d2c8c9cce3ddf6137ca3440ab02031912d292/LowerIntegrals.v\#L76}{\tt LowerIntegrals.v:76}]
  \label{def:lower-integral}
  A \emph{lower integral} on $A$ is an $\omega$-continuous map $\mathcal{I} : (A \rightarrow \mathbb{R}_l^+) \rightarrow \mathbb{R}_l^+$ preserving the bottom element that is furthermore additive, i.e.\@ satisfies
  \begin{equation}
    \mathcal{I}(f + g) = \mathcal{I}(f) + \mathcal{I}(g)
  \end{equation}
  for all $f, g : A \rightarrow \mathbb{R}_l^+$.
  $\mathcal{I}$ is a \emph{sub-probability} lower integral if $\mathcal{I}(\mathds{1}_A) \leq 1$.
  The set of all lower integrals on $A$ is denoted by $\mathfrak{G}(A)$ and the set of sub-probability lower integrals by $\mathfrak{G}_{\leq 1}(A)$.
\end{definition}
The reader might wonder at this point why we need the generality of \emph{sub}-probability valuations and integrals, as opposed to probability valuations and integrals, which would assign to (the indicator function of) the whole space the value 1.
Valuations and integrals on some set $A$ form partial orders, with ordering defined pointwise.
Now, if we restrict to proper probability valuations and integrals, these orders will usually not have least elements (consider, for example, valuations on the set of two elements).
On the other hand, for their sub-probabilistic versions we have the following, which will be crucial for the interpretation of fixpoint operators in Section \ref{sec:interpretation}:
\begin{proposition}
  The inclusions $\mathfrak{V}_{\leq 1}(A) \subseteq \mathfrak{V}(A) \subseteq (\mathcal{O}(A) \rightarrow \mathbb{R}_l^+)$ and $\mathfrak{G}_{\leq 1}(A) \subseteq \mathfrak{G}(A) \subseteq ((A \rightarrow \mathbb{R}_l^+) \rightarrow \mathbb{R}_l^+)$ are embeddings of $\omega$-cpos with bottom elements.
  \qed
\end{proposition}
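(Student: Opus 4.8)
The plan is to exhibit each of the four sets as a sub-$\omega$-cpo of the ambient function space, closed under the ambient bottom element and under enumerable directed joins. First I would recall that $\mathbb{R}_l^+$ is a $\sigma$-frame by Proposition \ref{prop:lower-real-sigma-frame}, hence an $\omega$-cpo with bottom element $\underline 0 = 0$; consequently both $\mathcal{O}(A) \rightarrow \mathbb{R}_l^+$ and $(A \rightarrow \mathbb{R}_l^+) \rightarrow \mathbb{R}_l^+$ are $\omega$-cpos with bottom, all of whose structure is computed pointwise, just like the function $\omega$-cpos appearing in the proof of Proposition \ref{prop:product-presentations}. Because $\mathfrak{V}(A)$, $\mathfrak{V}_{\leq 1}(A)$, $\mathfrak{G}(A)$ and $\mathfrak{G}_{\leq 1}(A)$ all carry the order induced from their ambient space, each inclusion is automatically an order embedding; it will therefore suffice to check that each subset contains the constant-zero map and is closed under pointwise enumerable directed joins. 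Granting this, the inclusions preserve and reflect bottom elements and directed joins, which is exactly the assertion that they are embeddings of $\omega$-cpos.

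Next I would dispatch the easy closure conditions. The constant map $0$ is $\omega$-continuous, sends the bottom open (resp.\ any function) to $0$, satisfies modularity and additivity since $0 + 0 = 0$, and meets the sub-probability bound since $0 \leq 1$, so all four sets contain the ambient bottom. Given an enumerable directed family $(\mu_i)_{i \in I}$ in one of the sets with pointwise join $\mu = \bigvee_i \mu_i$, preservation of the bottom passes to $\mu$ from $\mu(\bot) = \bigvee_i \mu_i(\bot) = \bigvee_i 0 = 0$, and $\omega$-continuity of $\mu$ follows by interchanging suprema: for directed enumerable $U$ in the domain, $\mu(\bigvee U) = \bigvee_i \mu_i(\bigvee U) = \bigvee_i \bigvee_u \mu_i(u) = \bigvee_u \bigvee_i \mu_i(u) = \bigvee_u \mu(u)$. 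For the sub-probability versions, $\mu(\mathds{1}_A) = \bigvee_i \mu_i(\mathds{1}_A) \leq 1$ because $1$ is an upper bound of the family $(\mu_i(\mathds{1}_A))_i$ while $\mu(\mathds{1}_A)$ is its least upper bound.

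The one step that genuinely uses the arithmetic of lower reals, and which I expect to be the main obstacle, is the preservation of modularity and additivity under pointwise joins. Here I would use that addition on $\mathbb{R}_l^+$ is $\omega$-continuous in each argument (Corollary \ref{cor:lower-real-arithmetic}) together with the finality of the diagonal $I \rightarrow I \times I$ for directed $I$, exactly as in the proof of Proposition \ref{prop:product-presentations}. For additivity this gives $\mu(f) + \mu(g) = \bigvee_i \bigvee_j (\mu_i(f) + \mu_j(g)) = \bigvee_k (\mu_k(f) + \mu_k(g)) = \bigvee_k \mu_k(f + g) = \mu(f + g)$, where the first equality is continuity of addition, the second is diagonal finality, and the third is additivity of each $\mu_k$. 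The modularity case is the same computation run on both sides: $\bigvee_k (\mu_k(U) + \mu_k(V)) = \bigvee_k (\mu_k(U \cup V) + \mu_k(U \cap V))$ by modularity of each $\mu_k$, and expanding each side back via continuity of addition and diagonal finality turns this into $\mu(U) + \mu(V) = \mu(U \cup V) + \mu(U \cap V)$. Apart from this interchange of the double supremum against the binary sum, all remaining verifications are routine, so the four subsets are sub-$\omega$-cpos with bottom and the inclusions are the claimed embeddings.
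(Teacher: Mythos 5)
Your proof is correct, and since the paper states this proposition with the proof omitted as routine (the statement ends in \qed), your argument supplies exactly the intended verification: each of the four subsets contains the pointwise-zero map and is closed under pointwise enumerable directed joins in the ambient function $\omega$-cpo, so the inclusions are order embeddings preserving bottom and directed joins. You also correctly identified and handled the only step with any content---interchanging the binary sum with a directed join via $\omega$-continuity of addition (Corollary \ref{cor:lower-real-arithmetic}) and finality of the diagonal $I \rightarrow I \times I$---which is the same device the paper itself uses in the proof of Proposition \ref{prop:product-presentations}.
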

\begin{proposition}
  \label{prop:integrals-linear}
  Every lower integral $\mathcal{I}$ is compatible with multiplication by scalars from $\mathbb{R}_l^+$, in the sense that $\mathcal{I}(a f) = a \mathcal{I}(f)$ for all $a \in \mathbb{R}_l^+$ and $f : A \rightarrow \mathbb{R}_l^+$.
  In particular, lower integrals are linear over $\mathbb{R}_l^+$.
\end{proposition}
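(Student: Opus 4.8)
The plan is to use the three defining properties of a lower integral from Definition \ref{def:lower-integral}---additivity, preservation of the bottom element, and $\omega$-continuity---together with the universal property of $\mathbb{R}_l^+$ as a free $\omega$-cpo (Theorem \ref{th:lower-reals-as-completion} and Assumption \ref{ass:free-omega-cpo}). Homogeneity $\mathcal{I}(af) = a\mathcal{I}(f)$ will first be established for scalars $a$ ranging over the non-negative rationals, which generate $\mathbb{R}_l^+$, and then propagated to all of $\mathbb{R}_l^+$ by continuity. Since $\mathbb{R}_l^+ \cong (\mathbb{Q}^+)_\omega$ is freely generated as an $\omega$-cpo by the rationals, it suffices to verify the identity on the generators $\underline q$ for $q \in \mathbb{Q}^+$, provided both sides are seen to be $\omega$-continuous in the scalar.

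First I would fix $f : A \rightarrow \mathbb{R}_l^+$ and treat the two sides as functions of the scalar. For a natural number $n$, writing $nf = f + \dots + f$, additivity and induction give $\mathcal{I}(nf) = n\mathcal{I}(f)$, with the base case $\mathcal{I}(0) = 0$ supplied by preservation of the bottom element. For a positive rational $a = m/n$, the pointwise identity $n(af) = mf$ (using that scalar multiplication by a natural number is iterated addition and the monoid structure of rational scalar multiplication on $\mathbb{R}_l^+$) yields $n\,\mathcal{I}(af) = \mathcal{I}(mf) = m\,\mathcal{I}(f)$ after applying $\mathcal{I}$ and the natural-number case. Multiplying by the rational scalar $1/n$, which is available and inverse to multiplication by $n$ by Corollary \ref{cor:lower-real-arithmetic}, gives $\mathcal{I}(af) = (m/n)\mathcal{I}(f) = a\mathcal{I}(f)$.

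To finish, I would regard $a \mapsto \mathcal{I}(af)$ and $a \mapsto a\mathcal{I}(f)$ as two maps $\mathbb{R}_l^+ \rightarrow \mathbb{R}_l^+$. The second is multiplication by the fixed element $\mathcal{I}(f)$, hence $\omega$-continuous by Corollary \ref{cor:lower-real-arithmetic}; the first is the composite of the map $a \mapsto af$ with the $\omega$-continuous map $\mathcal{I}$, where $a \mapsto af$ is $\omega$-continuous because multiplication is $\omega$-continuous in each argument and joins in $A \rightarrow \mathbb{R}_l^+$ are computed pointwise. Both maps agree after precomposition with the structure map $\eta : \mathbb{Q}^+ \rightarrow \mathbb{R}_l^+$, $q \mapsto \underline q$, by the rational case just established. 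By the uniqueness clause of the universal property of $\mathbb{R}_l^+ = (\mathbb{Q}^+)_\omega$, two $\omega$-continuous maps that agree on the generators coincide, so $\mathcal{I}(af) = a\mathcal{I}(f)$ for all $a \in \mathbb{R}_l^+$. The one delicate point is ensuring $\omega$-continuity of the scalar-multiplication map $a \mapsto af$ into the pointwise-ordered $\omega$-cpo $A \rightarrow \mathbb{R}_l^+$, which rests on the $\omega$-continuity of multiplication on $\mathbb{R}_l^+$ from Corollary \ref{cor:lower-real-arithmetic}; everything else is routine.
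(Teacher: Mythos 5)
Your proposal is correct and takes essentially the same route as the paper: homogeneity for natural-number scalars via additivity (with $\mathcal{I}(0)=0$ from preservation of $\bot$), extension to positive rationals by the cancellation $n\,\mathcal{I}(af) = \mathcal{I}(naf) = m\,\mathcal{I}(f)$, and passage to all of $\mathbb{R}_l^+$ using $\omega$-continuity of $\mathcal{I}$ and of multiplication together with Theorem \ref{th:lower-reals-as-completion}. The only cosmetic difference is the final step's packaging: the paper shows the set of scalars satisfying $\mathcal{I}(af) = a\mathcal{I}(f)$ is closed under directed enumerable joins and appeals to the completion, whereas you compare the two $\omega$-continuous maps $a \mapsto \mathcal{I}(af)$ and $a \mapsto a\mathcal{I}(f)$ via the uniqueness clause of the universal property --- equivalent formulations of the same fact.
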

\begin{proof}
  If $a \in \mathbb{N}$, then $\mathcal{I}(a f) = \mathcal{I}(f + \dots + f) = a \mathcal{I}(f)$ because $\mathcal{I}$ is additive.
  Thus if $a = \frac{m}{n}$ is a positive rational, then $n \mathcal{I}(a f) = \mathcal{I}(n a f) = m \mathcal{I}(f)$, hence $\mathcal{I}(a f) = \frac{m}{n}\mathcal{I}(f)$.
  If $U$ is a directed enumerable set of lower reals such that for each $a \in U$ we have $\mathcal{I}(a f) = a \mathcal{I}(f)$ for all $f$, then
  \begin{equation}
    \mathcal{I}((\bigvee U) f) = \mathcal{I}(\bigvee_{a \in U} \, (a f)) = (\bigvee U) \mathcal{I}(f)
  \end{equation}
  by $\omega$-continuity of $\mathcal{I}$ and multiplication, so $\mathcal{I}$ is compatible with multiplication by $\bigvee U$.
  Because $\mathbb{R}_l^+$ is the $\omega$-cpo completion of $\mathbb{Q}^+$ (Theorem \ref{th:lower-reals-as-completion}), it follows that $\mathcal{I}$ is compatible with scalar multiplication by arbitrary non-negative lower reals $a$.
\end{proof}

We are now ready to state the central result of this section.
\begin{theorem}[Riesz]
  \label{th:riesz}
  The assignment
  \begin{equation}
    \mathcal{I} \mapsto (U \mapsto \mathcal{I}(\mathds{1}_U))
  \end{equation}
  defines a map $\mathfrak{G}(A) \rightarrow \mathfrak{V}(A)$ that restricts to a map $\mathfrak{G}_{\leq 1}(A) \rightarrow \mathfrak{V}_{\leq 1}(A)$.
  Both maps are order isomorphisms.
\end{theorem}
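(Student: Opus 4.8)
The plan is to exhibit an explicit inverse to the map $\rho : \mathcal{I} \mapsto (U \mapsto \mathcal{I}(\mathds{1}_U))$ and to check that both directions are monotone. First I would verify that $\rho$ is well-defined, i.e.\ that $\mu := \rho(\mathcal{I})$ is a valuation. Since $U \mapsto \mathds{1}_U = r(U)$ is $\omega$-continuous and bottom-preserving (postcomposition with the $\omega$-continuous map $r : \mathbb{S} \to \mathbb{R}_l^+$, joins being computed pointwise) and $\mathcal{I}$ is $\omega$-continuous and bottom-preserving, so is $\mu$. Modularity of $\mu$ reduces, via additivity of $\mathcal{I}$, to the pointwise identity $\mathds{1}_U + \mathds{1}_V = \mathds{1}_{U \cup V} + \mathds{1}_{U \cap V}$ in $A \to \mathbb{R}_l^+$; at a point $a$ this is the equation $r(s) + r(t) = r(s \vee t) + r(s \wedge t)$ for $s = U(a)$, $t = V(a) \in \mathbb{S}$, which I would prove by the induction principle of $\mathbb{S} = \mathbb{B}_\omega$ (Definition and Proposition \ref{def-prop:sierpinsky}): all operations involved are $\omega$-continuous, so it suffices to check the four cases $s, t \in \{\bot, \top\}$. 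Monotonicity of $\rho$ is immediate from the pointwise order.

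For the inverse I would use a dyadic layer-cake integral. For $f : A \to \mathbb{R}_l^+$ and $n \in \mathbb{N}$, $k \ge 1$, the subset $U_{n,k} = \{ a \in A \mid k/2^n < f(a)\}$ is open, since $U_{n,k}(a) = f(a)(k/2^n) \in \mathbb{S}$, and $f \mapsto U_{n,k}$ is $\omega$-continuous because enumerable joins of lower reals are computed pointwise (Proposition \ref{prop:lower-real-sigma-frame}). Setting $f_n = \tfrac{1}{2^n}\sum_{k \ge 1}\mathds{1}_{U_{n,k}}$, the $f_n$ form a chain with $\sup_n f_n = f$ (dyadic refinement), and I define
\[
  \mathcal{I}_\mu(f) = \sup_n \tfrac{1}{2^n}\textstyle\sum_{k \ge 1}\mu(U_{n,k}).
\]
This is $\omega$-continuous and bottom-preserving by construction (composition of $\omega$-continuous maps, using the arithmetic of Corollary \ref{cor:lower-real-arithmetic}). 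Both round trips are then short. Computing the layers of an indicator gives $\mathcal{I}_\mu(\mathds{1}_U) = \sup_n \tfrac{2^n - 1}{2^n}\mu(U) = \mu(U)$, so $\rho(\mathcal{I}_\mu) = \mu$. Conversely, for a genuine integral $\mathcal{I}$ with $\mu = \rho(\mathcal{I})$, additivity and scalar linearity (Proposition \ref{prop:integrals-linear}) give $\tfrac{1}{2^n}\sum_k \mathcal{I}(\mathds{1}_{U_{n,k}}) = \mathcal{I}(f_n)$, and $\omega$-continuity with $f_n \uparrow f$ yields $\mathcal{I}_\mu(f) = \sup_n \mathcal{I}(f_n) = \mathcal{I}(f)$, so $\mathcal{I}_{\rho(\mathcal{I})} = \mathcal{I}$.

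The main obstacle is showing that $\mathcal{I}_\mu$ is additive, i.e.\ that it really lands in $\mathfrak{G}(A)$; this is the only place the modularity law of $\mu$ is used essentially. I would first integrate simple functions: for a finite nonnegative rational combination $s = \sum_i c_i \mathds{1}_{W_i}$ set $J(s) = \sum_i c_i \mu(W_i)$. Well-definedness is the combinatorial core: using the indicator identity of the first paragraph together with modularity of $\mu$, any representation can be rewritten into a canonical nested ``staircase'' form $W_1 \supseteq W_2 \supseteq \cdots$ without changing $\sum_i c_i \mu(W_i)$, and equal staircases have equal value; additivity and monotonicity of $J$ then follow. One also needs that $J$ is $\omega$-continuous along chains of simple functions, which follows from $\omega$-continuity of $\mu$ and the fact that threshold layers are computed pointwise. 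A short lemma then shows that for any chain $t_m \uparrow f$ of simple functions, $\sup_m J(t_m) = \mathcal{I}_\mu(f)$ (compare $t_m$ with the $f_n$ via $t_m \wedge f_n$), so $\mathcal{I}_\mu(f)$ equals the supremum of $J(s)$ over all simple $s \le f$ and is independent of the chosen chain. Since $f_n + g_n \uparrow f + g$ is again such a chain, additivity of $\mathcal{I}_\mu$ follows from additivity of $J$:
\[
  \mathcal{I}_\mu(f+g) = \sup_n J(f_n + g_n) = \sup_n \big(J(f_n) + J(g_n)\big) = \mathcal{I}_\mu(f) + \mathcal{I}_\mu(g),
\]
using that addition is $\omega$-continuous and that the diagonal of a directed index is final.

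Finally, $\rho$ and $\mu \mapsto \mathcal{I}_\mu$ are mutually inverse and both monotone, hence order isomorphisms $\mathfrak{G}(A) \cong \mathfrak{V}(A)$. The restriction to sub-probability versions is immediate: $\mu(A) = \mathcal{I}(\mathds{1}_A)$ and $\mathcal{I}_\mu(\mathds{1}_A) = \mu(A)$, so the condition $\mathcal{I}(\mathds{1}_A) \le 1$ corresponds exactly to $\mu(A) \le 1$, and the isomorphism cuts down to $\mathfrak{G}_{\le 1}(A) \cong \mathfrak{V}_{\le 1}(A)$.
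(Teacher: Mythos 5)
Your proof is correct and, at its core, follows the same architecture as the paper's: your first paragraph is exactly Lemma~\ref{lem:restriction-of-integral-is-valuation} (including the induction over $\mathbb{S}$ for $r(s \vee t) + r(s \wedge t) = r(s) + r(t)$), your dyadic layer-cake functional is the paper's $s_{f,m,n}$ specialized to denominators $m = 2^n$ (a mild simplification: the paper tracks all $m$ under divisibility, cf.\ Lemma~\ref{lem:sfmn-directed}, but dyadics suffice since nothing in the round trips needs cofinality in all denominators), and your two round-trip computations coincide with parts \ref{itm:integral-extends-measure} and \ref{itm:integral-uniquely-determined} of the paper's proof. The genuine divergence is in how additivity of the extension is organized. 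The paper inlines it: the Generalized Modularity Lemma~\ref{lem:generalized-modularity} (Horn--Tarski) applied in the modular monoid $M(\mathcal{O}(A))$ yields $s_{f,m,n} + s_{g,m,n} \le s_{f+g,m,2n}$ directly (Lemma~\ref{lem:preimages-of-sums}), and the converse inequality is a careful rational-extraction argument with common refinement of denominators. You instead build an explicit Daniell-style calculus of simple functions with a representation-independent functional $J$, deduce additivity of $\mathcal{I}_\mu$ from additivity of $J$ plus chain-independence. This buys conceptual clarity and reusability of the simple-function layer, but be aware of two things. First, your ``staircase'' well-definedness argument \emph{is} the Generalized Modularity Lemma in disguise: Lemma~\ref{lem:generalized-modularity} is precisely the device that rewrites $\sum_i x_i$ into a nested chain $\sum_k \bigvee\{x_I : |I| = k\}$, so you are re-proving it rather than avoiding it; note also that recovering the layers of a nested same-denominator representation pointwise (to see that equal simple functions have equal staircases) needs a small constructive argument about sums of $\mathbb{S}$-valued indicators in $\mathbb{R}_l^+$, using decidability of rational comparisons. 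Second, your ``short lemma'' that $\sup_m J(t_m) = \mathcal{I}_\mu(f)$ for any simple chain $t_m \uparrow f$ conceals the hardest step of the paper's proof of Lemma~\ref{lem:mu-int-is-integral}: constructively one must extract, from a rational $q < J(s)$ for a simple minorant $s \le f$, rationals below the layer values and then pass to a common refinement of thresholds, taking care at the boundary thresholds (the paper's $[f>q]_0$ bookkeeping, e.g.\ the $k=1$ layer of $s$ sits inside $[f > h]$ only for strictly smaller $h$). That argument goes through along the lines you indicate, but it is the delicate part, not a formality.
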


We begin the proof by showing that restrictions of lower integrals to indicator functions are valuations.
\begin{lemma}[\href{https://github.com/FFaissole/Valuations/blob/d06d2c8c9cce3ddf6137ca3440ab02031912d292/Riesz1.v\#L22}{\tt Riesz1.v:22}]
  \label{lem:restriction-of-integral-is-valuation}
  Let $\mathcal{I}$ be an integral on $A$.
  Then $\mu_\mathcal{I} : U \mapsto \mathcal{I}(\mathds{1}_U)$ is a valuation on $A$.
  If $\mathcal{I}$ is a sub-probability integral, then $\mu_\mathcal{I}$ is a sub-probability valuation.
\end{lemma}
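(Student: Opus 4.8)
The plan is to verify that $\mu_\mathcal{I}(U) = \mathcal{I}(\mathds{1}_U)$ satisfies each of the three defining properties of a valuation, using the corresponding properties of the integral $\mathcal{I}$ together with the order-embedding $U \mapsto \mathds{1}_U$. The key observation driving the whole argument is that the map $r : \mathbb{S} \rightarrow \mathbb{R}_l$ sending $\bot \mapsto 0$ and $\top \mapsto 1$, and hence its pointwise extension $U \mapsto \mathds{1}_U$, translates the $\sigma$-frame operations on $\mathcal{O}(A)$ into arithmetic on $A \rightarrow \mathbb{R}_l^+$. Each clause of the definition of a valuation should then follow by composing a property of $\mathcal{I}$ with such a translation.

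First I would handle preservation of the bottom element: since $\mathds{1}_\emptyset = 0$ (the empty open maps every point to $\bot$, hence to $0$), we get $\mu_\mathcal{I}(\emptyset) = \mathcal{I}(0) = 0$ because $\mathcal{I}$ preserves bottom. Next, $\omega$-continuity of $\mu_\mathcal{I}$ follows because it is a composite of $\omega$-continuous maps: the embedding $U \mapsto \mathds{1}_U$ preserves enumerable directed joins (as $r$ is $\omega$-continuous and joins in the function spaces are pointwise), and $\mathcal{I}$ is $\omega$-continuous by assumption. The sub-probability claim is immediate: if $\mathcal{I}(\mathds{1}_A) \leq 1$, then by definition $\mu_\mathcal{I}(A) = \mathcal{I}(\mathds{1}_A) \leq 1$.

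The main content is the modularity law, and I expect this to be the crux. The plan is to reduce it to additivity of $\mathcal{I}$ by establishing the pointwise identity of real-valued functions
\begin{equation}
  \mathds{1}_U + \mathds{1}_V = \mathds{1}_{U \cup V} + \mathds{1}_{U \cap V}.
\end{equation}
This is checked on truth values: for each $a \in A$ one verifies that $r(s) + r(t) = r(s \vee t) + r(s \wedge t)$ holds in $\mathbb{R}_l^+$ for all $s, t \in \mathbb{S}$, where $s = U(a)$ and $t = V(a)$. Since $r$ is the $\omega$-continuous extension of the map $\mathbb{B} \rightarrow \mathbb{R}_l$ sending $\bot, \top$ to $0, 1$, and both sides of the identity are $\omega$-continuous in $s$ and $t$, it suffices by the induction principle for $\mathbb{S} = \mathbb{B}_\omega$ to check the four cases $s, t \in \{\bot, \top\}$, where it reduces to the elementary arithmetic fact $x + y = \max(x,y) + \min(x,y)$ on $\{0,1\}$. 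Applying $\mathcal{I}$ to this functional identity and using its additivity then yields
\begin{equation}
  \mu_\mathcal{I}(U) + \mu_\mathcal{I}(V) = \mathcal{I}(\mathds{1}_U) + \mathcal{I}(\mathds{1}_V) = \mathcal{I}(\mathds{1}_{U \cup V}) + \mathcal{I}(\mathds{1}_{U \cap V}) = \mu_\mathcal{I}(U \cup V) + \mu_\mathcal{I}(U \cap V),
\end{equation}
as required. The only subtlety worth care is justifying the reduction to the boolean cases via the free-$\omega$-cpo induction principle rather than a (constructively unavailable) case split on whether $s = \top$; this is exactly the kind of argument already used in the proof of Theorem \ref{th:dominance}, so I would mirror that structure.
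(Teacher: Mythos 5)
Your proposal is correct and matches the paper's own proof in all essentials: reduce modularity to the pointwise identity $r(s \lor t) + r(s \land t) = r(s) + r(t)$ on $\mathbb{S}$, verify it by the induction principle of $\mathbb{S} = \mathbb{B}_\omega$ using $\omega$-continuity of $r$, the lattice operations and addition, and handle $\omega$-continuity, bottom preservation and the sub-probability claim by direct composition arguments. The only cosmetic difference is that the paper inducts over $s$ alone with $t$ held arbitrary (the base cases $s = \top, \bot$ give $1 + r(t)$ and $r(t)$ for any $t \in \mathbb{S}$), whereas you reduce both variables to the four boolean cases, which is equally valid given $\omega$-continuity in each argument.
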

\begin{proof}
  Recall that $\mathds{1}_U$ is obtained by postcomposing $U : A \rightarrow \mathbb{S}$ with the unique $\omega$-continuous map $r : \mathbb{S} \rightarrow \mathbb{R}_l^+$ that satisfies $r(\bot) = 0$ and $r(\top) = 1$.
  Thus $U \mapsto \mathds{1}_U$ is $\omega$-continuous, too, hence $\omega$-continuity of $\mu_\mathcal{I}$ follows from $\omega$-continuity of $\mathcal{I}$.
  By definition $\mu_\mathcal{I}(A) = \mathcal{I}(\mathds{1}_A)$, so if the latter is $\leq 1$, then so is the former.

  What remains to be shown is that $\mu_\mathcal{I}$ satisfies the modularity law, i.e.\@ that
  \begin{equation}
    \mathcal{I}(\mathds{1}_{U \cup V}) + \mathcal{I}(\mathds{1}_{U \cap V}) = \mathcal{I}(\mathds{1}_U) + \mathcal{I}(\mathds{1}_V).
  \end{equation}
  holds for all $U, V \in \mathcal{O}(A)$.
  By linearity of $\mathcal{I}$ and the definition of indicator functions, it will suffice to show that for all $s, t \in \mathbb{S}$ it holds that
  \begin{equation}
    \label{eq:s-r-modularity}
    r(s \lor t) + r(s \land t) = r(s) + r(t),
  \end{equation}
  and we will do so by induction over $s$.
  If $s = \top$, both sides are equal to $1 + r(t)$, and if $s = \bot$, then both sides are equal to $r(t)$.
  Now let $s = \bigvee U$ for an enumerable directed subset $U \subseteq \mathbb{S}$, and suppose that equation \eqref{eq:s-r-modularity} holds with $u$ in place of $s$ for all $u \in U$.
  Using the fact that the involved operations binary meet and join with $t$, addition and $r$ are all $\omega$-continuous, we compute
  \begin{align}
    r(s \lor t) + r(s \land t)
    ={}& \bigvee_{u \in U} \,(r(u \lor t) + r(u \land t)) \\
    ={} & \bigvee_{u \in U} \, (r(u) + r(t)) \\
    ={} & r(s) + r(t).
  \end{align}
\end{proof}

Next we construct the extension $\int - \, d\mu$ of a valuation $\mu$ to an integral.
Fix $\mu$.
\begin{definition}
  Let $f : A \rightarrow \mathbb{R}_l^+$.
  The lower \emph{$\mu$-integral} $\int f \, d\mu \in \mathbb{R}_l^+$ is defined as follows.
  For $q \in \mathbb{Q}_+$ let
  \begin{equation}
    [f > q] = \{ x \in A \mid q < f(x) \};
  \end{equation}
  it is an open subset of $A$.
  Let
  \begin{equation}
    s_{f, m, n} = \sum_{i = 1}^{mn} \frac{1}{m} \mu([f > \frac{i}{m}])
  \end{equation}
  for $m, n \in \mathbb{N}$.
  Now
  \begin{equation}
    \int f d\mu = \bigvee_{m, n \in \mathbb{N}} s_{f, m, n}.
  \end{equation}
\end{definition}

The main difficulty in showing that $f \mapsto \int f \, d\mu$ is indeed a lower integral is the verification of linearity.
Our main tool will be the \emph{generalized modularity lemma}, originally due to \citet[corollary 1.3]{horn1948measures} in the special case of boolean algebras.
More recent references are \citet{integrals-valuations} and \citet{Vickers}; the latter also contains a proof of the version that will be used here.
Generalized modularity is phrased in terms of the following construction, which in the special case $L = \mathcal{O}(A)$ can be understood as the submonoid of functions $A \rightarrow \mathbb{R}_l^+$ generated by the indicator functions $\mathds{1}_U$ for $U \in \mathcal{O}(A)$.
\begin{definition}
  Let $L$ be a distributive lattice with bottom element.
  The \emph{modular monoid} $M(L)$ is the commutative monoid (written additively) generated by the carrier of $L$ subject to
  \begin{equation}
    a + b = (a \wedge b) + (a \vee b)
  \end{equation}
  for all $a, b \in L$, and $0 = \bot$.
\end{definition}
Note that the modularity law and the preservation of bottom elements guarantee precisely that valuations $\mu : \mathcal{O}(A) \rightarrow \mathbb{R}_l$ factor uniquely as monoid homomorphism $L(\mathcal{O}(A)) \rightarrow \mathbb{R}_l$.

\begin{lemma}[Generalized Modularity Lemma]
  \label{lem:generalized-modularity}
  Let $L$ be a distributive lattice and $x_1, \dots, x_n \in L$.
  Then in $M(L)$ we have
  \begin{equation}
    \sum_{i = 1}^n x_i = \sum_{k = 1}^n \bigvee \{x_I \mid I \subseteq \{1, \dots, n \}, |I| = k\}
  \end{equation}
  where $x_I = \bigwedge \{x_i \mid i \in I\}$ for decidable $I \subseteq \{1, \dots, n\}$.
\end{lemma}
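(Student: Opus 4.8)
The plan is to prove the identity by induction on $n$. The claim is a statement in the commutative monoid $M(L)$, so I may manipulate freely using the defining relation $a + b = (a \wedge b) + (a \vee b)$ and the fact that $L$ is a distributive lattice. For $n = 1$ the right-hand side is just $\bigvee \{x_{\{1\}}\} = x_1$, matching the left-hand side, so the base case is immediate. For the inductive step, I would assume the formula holds for $x_1, \dots, x_n$ and add one more generator $x_{n+1}$; the task is to show that adding $x_{n+1}$ to both sides preserves the identity after collecting terms by the size $k$ of the index set.

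The key computation is to understand how the $k$-th summand changes when the ground set grows from $\{1, \dots, n\}$ to $\{1, \dots, n+1\}$. Write $J_k = \bigvee\{x_I \mid I \subseteq \{1,\dots,n\}, |I| = k\}$ for the old summands and $J'_k$ for the new ones over subsets of $\{1, \dots, n+1\}$. Splitting the subsets of $\{1,\dots,n+1\}$ of size $k$ into those that omit $n+1$ and those that contain it, I get
\begin{equation}
  J'_k = J_k \vee \bigvee \{x_{I} \wedge x_{n+1} \mid I \subseteq \{1,\dots,n\}, |I| = k-1\}.
\end{equation}
So if I abbreviate the second disjunct as $K_k$ (with $K_{n+1}$ corresponding to the full meet $x_{\{1,\dots,n\}} \wedge x_{n+1}$), the new right-hand side is $\sum_{k=1}^{n+1} (J_k \vee K_k)$, where $J_{n+1} = \bot$. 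The plan is then to show this equals $\bigl(\sum_{k=1}^n J_k\bigr) + x_{n+1}$, which by the induction hypothesis is $\sum_{i=1}^{n+1} x_i$ as desired.

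The main obstacle, and the crux of the argument, is to verify the monoid identity $\sum_{k} (J_k \vee K_k) = \sum_k J_k + x_{n+1}$ purely from the modularity relation. The mechanism is that each $K_k$ is below the corresponding $J_k$ in a suitable sense that lets the relation $a + b = (a\wedge b) + (a \vee b)$ telescope the $K_k$ terms together into a single copy of $x_{n+1}$. Concretely, one checks that $J_k \wedge K_{k} = K_k$ and $K_k \le J_{k-1}$ (using distributivity of $L$), so that applying modularity to the pairs $(J_k, K_k)$ and reindexing lets the disjuncts $K_k$ cascade: $K_1 \vee \dots \vee K_{n+1} = x_{n+1}$ since $K_1 = x_{n+1}$ already. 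I expect the bookkeeping to reduce to the observation that $\sum_k (J_k \vee K_k) + \sum_k (J_k \wedge K_k) = \sum_k J_k + \sum_k K_k$ by modularity applied termwise, combined with $J_k \wedge K_k = K_k$ and $\sum_k K_k = x_{n+1}$ after collapsing the nested joins. Assembling these pieces yields the inductive step and completes the proof.
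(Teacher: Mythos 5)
Your overall skeleton --- induction on $n$, splitting the size-$k$ subsets of $\{1, \dots, n+1\}$ according to whether they contain $n+1$, so that $J'_k = J_k \vee K_k$ with $K_k = \bigvee \{x_{I'} \wedge x_{n+1} \mid I' \subseteq \{1, \dots, n\}, |I'| = k - 1\}$, and then telescoping the $K_k$ --- is the standard route (note the paper itself does not prove this lemma; it defers to Vickers, whose proof is exactly such an induction). But the two claims you use to drive the telescoping are wrong. First, $J_k \wedge K_k = K_k$, i.e.\ $K_k \leq J_k$, already fails for $n = 1$, $k = 1$: there $J_1 = x_1$ and $K_1 = x_2$, and $x_2 \leq x_1$ need not hold. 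The correct identity, which is where distributivity of $L$ actually enters, is
\begin{equation}
  J_k \wedge K_k = \bigvee_{\substack{|I| = k, \, |I'| = k - 1}} \bigl(x_{I \cup I'} \wedge x_{n+1}\bigr) = \bigvee_{|I''| = k} \bigl(x_{I''} \wedge x_{n+1}\bigr) = K_{k+1},
\end{equation}
since every union $I \cup I'$ has size $\geq k$ (so its term is absorbed by some $I''$ of size $k$ with $I'' \subseteq I \cup I'$) and every $I''$ of size $k$ arises as $I'' \cup I'$ with $I' \subset I''$ of size $k - 1$. Your auxiliary observation $K_k \leq J_{k-1}$ is true but inert: modularity applied to the pair $(J_k, K_k)$ involves $J_k \wedge K_k$, not $J_{k-1}$. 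Second, $\sum_k K_k = x_{n+1}$ confuses the monoid sum with the lattice join: only $K_1 \vee \dots \vee K_{n+1} = x_{n+1}$ holds (all $K_k$ lie below $K_1 = x_{n+1}$), and the sum and the join are different elements of $M(L)$.

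There is also a structural obstacle to your ``apply modularity termwise, then collect'' strategy: after the termwise step $\sum_k (J_k \vee K_k) + \sum_k (J_k \wedge K_k) = \sum_k J_k + \sum_k K_k$ you would have to cancel a common summand, but $M(L)$ is a commutative monoid presented by generators and relations and you may not assume it is cancellative --- cancellativity of $M(L)$ for distributive $L$ is normally \emph{derived} from this very normal-form lemma, so invoking it here would be circular. The repair is to apply the defining relation sequentially, so that every step is an outright equality in $M(L)$: starting from the induction hypothesis and $K_1 = x_{n+1}$, one has
\begin{equation}
  \sum_{i=1}^{n+1} x_i = \sum_{k=1}^{n} J_k + K_1, \qquad J_k + K_k = (J_k \vee K_k) + (J_k \wedge K_k) = J'_k + K_{k+1},
\end{equation}
and telescoping the second identity over $k = 1, \dots, n$ gives $\sum_{k=1}^{n} J'_k + K_{n+1}$, which equals $\sum_{k=1}^{n+1} J'_k$ because $K_{n+1} = x_{\{1,\dots,n\}} \wedge x_{n+1} = J'_{n+1}$. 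With $J_k \wedge K_k = K_{k+1}$ in place of your two incorrect claims, and the sequential rather than termwise-plus-cancellation bookkeeping, your induction closes; as written, the inductive step does not go through.
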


Let $q \in \mathbb{Q}^+$ and $f : A \rightarrow \mathbb{R}_l^+$.
Define $[f > q]_0 \subseteq A$ to be $[f > q]$ if $q > 0$ and equal to $A$ if $q = 0$.
\begin{lemma}
  \label{lem:preimages-of-sums}
  Let $f, g : A \rightarrow \mathbb{R}_l^+$.
  Then in $M(\mathcal{O}(A))$ we have
  \begin{align}
    &\sum_{k = 1}^n ([f > k] + [g > k]) \\
    ={}& \sum_{k = 1}^{2n} \bigvee \{[f > i]_0 \wedge [g > j]_0 \mid i, j \in \{0, \dots, n\}, i + j = k\}
  \end{align}
  for all $n > 0$.
\end{lemma}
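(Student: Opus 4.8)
The plan is to derive this identity from the Generalized Modularity Lemma (Lemma \ref{lem:generalized-modularity}) applied to a single family of $2n$ opens. Collect the elements $[f > 1], \dots, [f > n], [g > 1], \dots, [g > n]$ of the distributive lattice $\mathcal{O}(A)$ and index them as $x_1, \dots, x_{2n}$, with $x_i = [f > i]$ for $1 \le i \le n$ and $x_{n + j} = [g > j]$ for $1 \le j \le n$. Since $M(\mathcal{O}(A))$ is a commutative monoid, the sum $\sum_{i = 1}^{2n} x_i$ equals exactly the left-hand side $\sum_{k = 1}^n ([f > k] + [g > k])$. Generalized modularity then rewrites this sum as $\sum_{k = 1}^{2n} \bigvee \{ x_I \mid I \subseteq \{1, \dots, 2n\},\ |I| = k \}$, so it remains to identify each inner join with $\bigvee \{[f > i]_0 \wedge [g > j]_0 \mid i + j = k\}$.

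First I would record the key collapse identity: because the opens $[f > i]$ are antitone in the threshold $i$, finite meets among them reduce to the largest threshold, i.e.\@ $\bigwedge_{i \in J} [f > i] = [f > \max J]$ for nonempty $J \subseteq \{1, \dots, n\}$, while the empty meet is the top element $A = [f > 0]_0$; the same holds for $g$. This follows pointwise in $\mathbb{S}$ from the fact that $f(x)$ is a rounded lower subset of $\mathbb{Q}$. It lets me evaluate $x_I$ for any $I$ of size $k$: splitting $I$ into its $f$-part $J \subseteq \{1, \dots, n\}$ and its $g$-part $K$ (shifted back into $\{1, \dots, n\}$), with $|J| = a$ and $|K| = b$ and $a + b = k$, I obtain $x_I = [f > \max J]_0 \wedge [g > \max K]_0$, using the convention $\max \emptyset = 0$.

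Next, for the join over all $I$ with $|I| = k$ I would argue that the downward choices dominate. Since $\max J \ge a$ whenever $|J| = a$, we have $[f > \max J]_0 \le [f > a]_0$, and equality is realized by the initial segment $J = \{1, \dots, a\}$; likewise $[g > \max K]_0 \le [g > b]_0$ with equality for $K = \{1, \dots, b\}$. Hence each $x_I$ whose parts have sizes $a, b$ is bounded above by $[f > a]_0 \wedge [g > b]_0$, and this bound is attained, so the join over $|I| = k$ collapses to $\bigvee \{[f > a]_0 \wedge [g > b]_0 \mid a + b = k,\ 0 \le a, b \le n\}$, which is precisely the $k$-th summand on the right-hand side.

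I expect the main obstacle to be the combinatorial bookkeeping of this last step rather than any conceptual difficulty: one must track that $a$ and $b$ range over all of $\{0, \dots, n\}$ (not merely the positive integers), so that the boundary cases $a = 0$ or $b = 0$ are handled uniformly by the $[\,\cdot > 0\,]_0 = A$ convention, and one must verify that reindexing the size-$k$ subsets of $\{1, \dots, 2n\}$ by the pair $(a, b)$ of part sizes, together with the attained-bound argument, exactly reproduces the index constraint $i + j = k$ with $i, j \in \{0, \dots, n\}$.
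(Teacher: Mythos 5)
Your proof is correct and follows essentially the same route as the paper's: both apply the Generalized Modularity Lemma (Lemma \ref{lem:generalized-modularity}) to the combined family of $2n$ opens, collapse each meet $x_I$ via antitonicity of $[f > i]$ in $i$ (with the empty-meet convention $[f > 0]_0 = A$), and then discard dominated terms from each join, the bound $[f > a]_0 \wedge [g > b]_0$ being attained by the initial segments $\{1, \dots, a\}$ and $\{1, \dots, b\}$. The only cosmetic difference is that the paper phrases the collapse as $[f > I] = [f > \bigvee I]$ rather than via $\max$.
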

\begin{proof}
  Regarding the left-hand side as a sum with $2n$ summands, we have by the Generalized Modularity Lemma \ref{lem:generalized-modularity}
  \begin{align}
    \label{eq:generalized-modularity-application}
    & \sum_{k = 1}^n ([f > k] + [g > k]) \\
    ={} & \sum_{k = 1}^{2n} \bigvee \{[f > I] \wedge [g > J] \mid I, J \subseteq \{1, \dots, n\}, |I| + |J| = k \}
  \end{align}
  where $[f > I] = \bigwedge \{ [f > i] \mid i \in I\}$ and similarly $[g > J] = \bigwedge \{ [g > j] \mid j \in J \}$.
  Because $[f > i_0] \supseteq [f > i_1]$ whenever $i_0 \leq i_1$, we have $[f > I] = [f > \bigvee I]$ for inhabited $I \subseteq \{1, \dots, n\}$.
  If $I$ is empty, then $\bigvee I = 0$ and hence $[f > I] = A = [f > \bigvee I]_0$.
  It follows that $[f > I] \leq [f > \{1, \dots, \ell\}]_0 = [f > \ell]_0$ if $I \subseteq \{1, \dots, n\}$ has $\ell$ elements and similarly for $[g > J]$.
  Discarding small elements from joins, we obtain
  \begin{align}
    & \bigvee \{[f > I] \wedge [g > J] \mid I, J \subseteq \{1, \dots, n\}, |I| + |J| = k \} \\
    ={} & \bigvee \{ [f > i]_0 \wedge [g > j]_0 \mid i, j \in \{0, \dots, n\}, i + j = k \}
  \end{align}
  for $1 \leq k \leq 2n$. 
\end{proof}

\begin{lemma}
  \label{lem:sfmn-directed}
  Let $f : A \rightarrow \mathbb{R}_l^+$.
  Suppose $m, m', n, n'$ are positive integers such that $n \leq n'$ and $m \mathbin{|} m'$ (i.e.\@ $m$ divides $m'$).
  Then $s_{f, m, n} \leq s_{f, m', n'}$.
  The family $(s_{f, m, n})_{m, n}$ is directed.
\end{lemma}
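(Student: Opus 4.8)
The plan is to establish the two monotonicity properties separately and then combine them, after which directedness follows formally. Throughout I will use that $\mu$ is monotone (being $\omega$-continuous), and that addition and multiplication by a fixed non-negative rational on $\mathbb{R}_l^+$ are monotone, as they are $\omega$-continuous by Corollary \ref{cor:lower-real-arithmetic}.

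First I would dispatch the dependence on $n$, which is immediate. For $n \leq n'$ the sum $s_{f, m, n'}$ consists of the same summands as $s_{f, m, n}$ together with the extra terms $\frac{1}{m}\mu([f > \frac{i}{m}])$ for $mn < i \leq mn'$. Each such term is a non-negative lower real, so adding them can only increase the value and we get $s_{f, m, n} \leq s_{f, m, n'}$.

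The substantive step is the refinement inequality $s_{f, m, n} \leq s_{f, m', n}$ under $m \mid m'$. Writing $m' = k m$, I would group the $m'n = k\,mn$ summands of $s_{f, m', n}$ into $mn$ consecutive blocks, the $i$-th block consisting of the indices $j$ with $(i-1)k < j \leq ik$. For such $j$ we have $\frac{j}{m'} = \frac{j}{km} \leq \frac{ik}{km} = \frac{i}{m}$, so $[f > \frac{j}{m'}] \supseteq [f > \frac{i}{m}]$ and hence $\mu([f > \frac{j}{m'}]) \geq \mu([f > \frac{i}{m}])$ by monotonicity of $\mu$. The $i$-th block therefore contributes at least $k \cdot \frac{1}{km}\mu([f > \frac{i}{m}]) = \frac{1}{m}\mu([f > \frac{i}{m}])$, which is exactly the $i$-th summand of $s_{f, m, n}$. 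As $i$ ranges over $1, \dots, mn$ the blocks exhaust all indices $j$ from $1$ to $m'n$, so summing the block estimates yields $s_{f, m', n} \geq s_{f, m, n}$. Combining with the previous paragraph gives $s_{f, m, n} \leq s_{f, m', n} \leq s_{f, m', n'}$ under the stated hypotheses.

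Finally, directedness is a formal consequence. I would regard the index set as the partial order of pairs of positive integers, ordered by $(m, n) \leq (m', n')$ iff $m \mid m'$ and $n \leq n'$; the inequality just proved says precisely that $(m, n) \mapsto s_{f, m, n}$ is monotone for this order. The index order is itself directed, since $(\mathrm{lcm}(m, m'), \max(n, n'))$ (or even $(m m', \max(n, n'))$) is an upper bound of any two indices, so the family is directed as the monotone image of a directed order. I expect the only real work to lie in the block-counting argument for the refinement step; the remaining claims are bookkeeping.
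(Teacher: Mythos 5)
Your proof is correct and takes essentially the same route as the paper's: your block grouping of the indices $(i-1)k < j \leq ik$ is exactly the paper's division with remainder $i = qk + j$, yielding the same estimate $[f > \frac{j}{m'}] \supseteq [f > \frac{i}{m}]$ term by term, and the reduction to the two special cases ($n$-extension and $m$-refinement) plus the directedness of the divisibility and usual orders is identical. No gaps.
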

\begin{proof}
  The inequality is clear if $m = m'$, so by transitivity it will suffice to prove the inequality for $n' = n$ and $m' = m q$ for some integer $q > 0$.
  Dividing $i$ by $q$ with remainder, we obtain for each integer $1 \leq i \leq m'n$ unique integers $0 \leq k \leq mn - 1$ and $1 \leq j \leq q$ such that $i = qk + j$.
  Thus
  \begin{equation}
    s_{f, m', n} = \frac{1}{m'} \sum_{i = 1}^{m'n} \mu([f > \frac{i}{m'}]) = \frac{1}{m} \sum_{k = 0}^{mn - 1} \frac{1}{q} \sum_{j = 1}^q \mu([f > \frac{qk + j}{m'}]).
  \end{equation}
  Now $[f > \frac{qk + j}{m'}] \geq [f > \frac{q (k + 1)}{m'}] = [f > \frac{k + 1}{m}]$, hence
  \begin{equation}
    \frac{1}{m} \sum_{k = 0}^{mn - 1} \frac{1}{q} \sum_{j = 1}^q \mu([f > \frac{qk + j}{m'}]) \geq \frac{1}{m} \sum_{k = 0}^{mn - 1} \mu([f > \frac{k + 1}{m}]) = s_{f, m, n}
  \end{equation}
  by monotonicity of $\mu$.
  Both $m \mathbin{|} m'$ and $n \leq n'$ are directed partial orders on the positive integers, thus $(s_{f, m, n})_{m, n}$ is a directed family.
\end{proof}

\begin{lemma}
  \label{lem:mu-int-is-integral}
  Let $\mu$ be a valuation on $A$.
  Then the assignment $f \mapsto \int f \, d\mu$ is a lower integral.
\end{lemma}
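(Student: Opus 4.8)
The plan is to verify the three defining properties of a lower integral from Definition~\ref{def:lower-integral} in turn: well-definedness together with preservation of the bottom element, $\omega$-continuity, and additivity. The first two are routine, whereas additivity is the crux and is where the generalized modularity machinery is used. For well-definedness, note that the family $(s_{f,m,n})_{m,n}$ is directed by Lemma~\ref{lem:sfmn-directed} and indexed by the enumerable set $\mathbb{N}\times\mathbb{N}$, so its join exists because $\mathbb{R}_l^+$ is an $\omega$-cpo (Proposition~\ref{prop:lower-real-sigma-frame}); thus $\int f\,d\mu$ is defined. Preservation of the bottom element is immediate: for the constant function $0$ we have $[0>\tfrac{i}{m}]=\emptyset$ for every $i\geq 1$, so each $s_{0,m,n}=0$ since $\mu$ preserves $\bot$, whence $\int 0\,d\mu=0$.

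For $\omega$-continuity, I would first observe that taking preimages commutes with directed joins. If $f=\bigvee_k f_k$ for an ascending chain in $A\to\mathbb{R}_l^+$ (joins being pointwise), then since joins of lower reals are computed as unions of lower cuts one has $q<f(x)$ iff there merely exists $k$ with $q<f_k(x)$, so $[f>q]=\bigvee_k[f_k>q]$ as an ascending enumerable family of opens. Because $\mu$ is $\omega$-continuous, $\mu([f>q])=\bigvee_k\mu([f_k>q])$, and since addition and rational scalar multiplication on $\mathbb{R}_l^+$ are $\omega$-continuous (Corollary~\ref{cor:lower-real-arithmetic}) this gives $s_{f,m,n}=\bigvee_k s_{f_k,m,n}$. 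Commuting the two directed joins then yields $\int f\,d\mu=\bigvee_{m,n}\bigvee_k s_{f_k,m,n}=\bigvee_k\int f_k\,d\mu$.

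The main obstacle is additivity, $\int(f+g)\,d\mu=\int f\,d\mu+\int g\,d\mu$, which I would prove by two inequalities obtained from Lemma~\ref{lem:preimages-of-sums}. Since the modularity law and $\mu(\bot)=0$ guarantee that $\mu$ factors through the modular monoid as a homomorphism $M(\mathcal{O}(A))\to\mathbb{R}_l^+$, I can apply $\mu$ to the identity of Lemma~\ref{lem:preimages-of-sums} and obtain, after rescaling thresholds by $\tfrac1m$ (i.e.\@ applying the lemma to $mf$ and $mg$), an exact expression $s_{f,m,n}+s_{g,m,n}=\tfrac1m\sum_k\mu(W_k)$ with $W_k=\bigvee\{[f>\tfrac im]_0\wedge[g>\tfrac jm]_0\mid i+j=k\}$. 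For one direction, the implication that $i<f(x)$ and $j<g(x)$ together force $i+j<(f+g)(x)$ gives $W_k\leq[f+g>\tfrac km]_0$, so monotonicity of $\mu$ yields $s_{f,m,n}+s_{g,m,n}\leq s_{f+g,m,2n}$; taking the supremum over $m,n$ gives $\int f\,d\mu+\int g\,d\mu\leq\int(f+g)\,d\mu$.

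For the reverse direction I would use the definition of addition of lower reals: if $\tfrac km<(f+g)(x)$, there merely exist rationals $a<f(x)$ and $b<g(x)$ with $a+b=\tfrac km$, and rounding $a,b$ down to multiples of $\tfrac1m$ exhibits $x$ in some $[f>\tfrac im]\wedge[g>\tfrac jm]$ with $i+j\geq k-1$. This is the reverse inclusion up to a unit shift in the threshold index, exactly the discrepancy that the $[\cdot]_0$-padding in Lemma~\ref{lem:preimages-of-sums} is designed to absorb; feeding it through that lemma and the subadditivity of $\mu$ (itself a consequence of modularity and positivity) bounds $s_{f+g,m,n}$ by $s_{f,m,n}+s_{g,m,n}$ together with a boundary contribution coming from the shifted endpoints and the rounding. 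The technical heart of the argument---and the step I expect to require the most care---is to show that this boundary contribution is dominated within the directed supremum over $(m,n)$, so that passing to the limit eliminates the shift and the two inequalities combine to give additivity; the delicacy is real because $\mu$ need not be finite, so the error must be controlled by a term that is itself part of the integral rather than by a crude multiple of $\mu(A)$.
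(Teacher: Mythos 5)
Your treatment of preservation of $\bot$, of $\omega$-continuity, and of the inequality $\int f \, d\mu + \int g \, d\mu \leq \int (f+g) \, d\mu$ all match the paper's proof (the paper's continuity argument is the same composition-of-$\omega$-continuous-operations observation you spell out, and the first additivity inequality is verbatim the application of Lemma \ref{lem:preimages-of-sums} to $mf$ and $mg$ followed by the inclusion into $[f + g > \frac{k}{m}]$). The genuine gap is in the reverse inequality, exactly where you flag it. Rounding witnesses $a < f(x)$, $b < g(x)$ with $a + b = \frac{k}{m}$ down to multiples of $\frac{1}{m}$ gives only $[f+g > \frac{k}{m}] \subseteq \bigcup \{ [f > \frac{i}{m}]_0 \cap [g > \frac{j}{m}]_0 \mid i + j \geq k - 1 \}$, and after summing over $k = 1, \dots, mn$ the shifted index produces, at $k = 1$, the summand with $i = j = 0$, i.e.\@ $\mu([f > 0]_0 \cap [g > 0]_0) = \mu(A)$, scaled by $\frac{1}{m}$ --- precisely the ``crude multiple of $\mu(A)$'' you say must be avoided, and which cannot be made small in the supremum over $(m, n)$ when $\mu(A) = \infty$ (the Lebesgue valuation being the motivating example). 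Moreover, the $[\cdot]_0$-padding in Lemma \ref{lem:preimages-of-sums} is not ``designed to absorb'' this shift: it arises there to handle empty index sets $I$ in the Generalized Modularity Lemma \ref{lem:generalized-modularity} ($[f > \emptyset] = A$), not rounding errors. Since your proposal offers no mechanism for the boundary term beyond the hope that it ``is dominated within the directed supremum,'' the proof is incomplete at its self-declared technical heart.

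The paper closes this gap not by controlling an error term but by eliminating it through mesh refinement. For fixed $m$ one has the exact identity $[f+g > \frac{k}{m}] = \bigcup_{m \mid m'} \bigcup \{ [f > \frac{i}{m'}] \cap [g > \frac{j}{m'}] \mid i, j \in \mathbb{N}, \frac{i+j}{m'} = \frac{k}{m} \}$: refining the denominator captures the witnesses $a, b$ with no unit shift at all, and the outer union is directed under divisibility. Given a rational $q < s_{f+g, m, n}$, one chooses rationals $q_k < \mu([f+g > \frac{k}{m}])$ with $q = \frac{1}{m} \sum_k q_k$; by $\omega$-continuity of $\mu$ applied to the directed union, each $q_k$ is already exceeded at some finite refinement $m'_k$, and a common multiple $m'$ of the $m'_k$ works for all $k$ simultaneously. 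Comparing with $s_{f, m', n} + s_{g, m', n}$ via Lemma \ref{lem:preimages-of-sums} at level $m'$, reindexing $\ell = \ell' \frac{m'}{m} + \ell''$ and using monotonicity of $\mu$, then gives $q < \int f \, d\mu + \int g \, d\mu$ exactly, with no boundary contribution to dominate. If you want to repair your route, this is the missing idea: trade the rounding error for a directed join to which the $\omega$-continuity of $\mu$ applies.
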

\begin{proof}
  We verify the conditions of Definition \ref{def:lower-integral}.

  \emph{Preservation of $\bot$.}
  If $f = 0$ is the constant function with value zero, then $[f > q] = \emptyset$ for all $q > 0$, hence $s_{f,m,n} = 0$ for all $m, n$, so $\int f \, d\mu = \bigvee_{m, n} s_{f, m, n} = 0$.

  \emph{$\omega$-continuity.}
  The integral is defined in terms of the following operations, all of which are $\omega$-continuous: $f \mapsto [f > q]$, $\mu$, addition, scalar multiplication and join.

  \emph{Additivity.}
  Let $f, g : A \rightarrow \mathbb{R}_l^+$.
  Let $m, n \geq 1$.
  Note that $\int f \, d\mu + \int g \, d\mu = \bigvee_{m, n} \, (s_{f, m, n} + s_{g, m, n})$ because the families $(s_{-, m,n })_{mn}$ are directed (Lemma \ref{lem:sfmn-directed}) and addition is $\omega$-continuous.
  Application of Lemma \ref{lem:preimages-of-sums} for the functions $mf$ and $mg$ gives
  \begin{equation}
    s_{f, m, n} + s_{g, m, n} = \frac{1}{m} \sum_{k = 1}^{2mn} \mu(\underbrace{\bigcup \{[f > \frac{i}{m}]_0 \cap [g > \frac{j}{m}]_0 \mid 0 \leq i, j \leq mn, i + j = k \}}_{\subseteq [f + g > \frac{k}{m}]}),
  \end{equation}
  which is $\leq s_{(f + g), m, 2n}$.
  Letting $n$ and $m$ vary, we conclude $\int f \, d\mu + \int g \, d\mu \leq \int (f + g) \, d\mu$.

  On the other hand, let $q \in \mathbb{Q}$ such that $q < \int f + g \, d\mu$.
  We will show $q < \int f \, d\mu + \int g \, d\mu$.
  By definition of $\int - \, d\mu$ as a join, there merely exist $m, n \in \mathbb{N}$ such that $q < s_{f + g, m, n}$.
  Thus there are rational numbers $q_k < \mu([f + g > \frac{k}{m}])$ for $1 \leq k \leq nm$ such that $q = \frac{1}{m} \sum_{k = 1}^{mn} q_k$.
  We have
  \begin{equation}
    [f + g > \frac{k}{m}] = \bigcup_{m | m'} \bigcup \{[f > \frac{i}{m'}] \cap [g > \frac{j}{m'}] \mid i, j \in \mathbb{N}, \frac{i + j}{m'} = \frac{k}{m} \}
  \end{equation}
  for each $k$ and the outer union on the right-hand side is directed, with upper bounds given by common multiples of the $m'$.
  Thus $\mu$ commutes with the outer union.

  It follows that for each $k$ there is $m'_k$ such that
  \begin{equation}
    \label{eq:def-qk}
    q_k < \mu(\bigcup \{[f > \frac{i}{m'_k}] \cap [g > \frac{j}{m'_k}] \mid i,j \in \mathbb{N}, \frac{i + j}{m'_k} = \frac{k}{m}\}).
  \end{equation}
  By taking upper bounds wrt.\@ divisibility, we may assume $m'_k = m'$ for all $k$ and a single $m'$ such that $m | m'$.
  We obtain
  \begin{align}
    & \int f \, d\mu + \int f\, d\mu \\
    \geq{} & s_{f, m', n} + s_{g, m', n}\\
    ={} & \frac{1}{m'} \sum_{\ell = 1}^{2m'n} \mu(\bigcup \{[f > \frac{i}{m'}] \cap [g > \frac{j}{m'}] \mid i, j \in \mathbb{N}, i + j = \ell\}) \\
    ={} & \frac{1}{m} \sum_{\ell' = 0}^{2mn - 1} \frac{m}{m'} \sum_{\ell'' = 1}^{m'/m} \mu(\bigcup \{[f > \frac{i}{m'}] \cap [g > \frac{j}{m'}] \mid i, j \in \mathbb{N}, i + j = \ell' \frac{m'}{m} + \ell'' \}) \\
  \end{align}
  by decomposing the index $\ell$ as $\ell = \ell' \frac{m}{m'} + \ell''$.
  Now
  \begin{align}
    & \bigcup \{[f > \frac{i}{m'}] \cap [g > \frac{j}{m'}] \mid i, j \in \mathbb{N}, i + j = \ell' \frac{m'}{m} + \ell'' \} \\
     \supseteq {} & \bigcup \{[f > \frac{i}{m}] \cap [g > \frac{j}{m}] \mid i,j \in \mathbb{N}, i + j = \ell' + 1\},
  \end{align}
  for all $\ell$ and $\ell'$, which is independent of $\ell'$.
  Thus
  \begin{align}
    & \int f \, d\mu + \int f\, d\mu \\
    \geq {}& \frac{1}{m} \sum_{k = 1}^{2mn} \underbrace{\mu(\bigcup \{[f > \frac{i}{m}] \cap [g > \frac{j}{m}] \mid i,j \in \mathbb{N}, i + j = k \})}_{> q_k \text{ if } k \leq nm} \\
    > {}& \frac{1}{m} \sum_{k = 1}^{mn} q_k \\
    = {}& q
  \end{align}
  where we reindexed with $k = \ell' + 1$ and used equation \eqref{eq:def-qk}.
  $q < \int f + g \, d\mu$ was arbitrary, hence $\int f + g \, d\mu \leq \int f \, d\mu + \int g \, d\mu$.
\end{proof}

\begin{proof}[Proof of Theorem \ref{th:riesz}.]
  By Lemma \ref{lem:restriction-of-integral-is-valuation}, the restriction $\mu_\mathcal{I}$ of an integral $\mathcal{I}$ to indicator functions is a valuation, and by Lemma \ref{lem:mu-int-is-integral} the assignment $f \mapsto \int f \, d\mu$ is an integral for all valuations $\mu$.
  The two functions are monotone and restrict to functions on sub-probability valuations and integrals.
  It remains to show that
  \begin{enumerate}[label=(\arabic*)]
    \item
      \label{itm:integral-extends-measure}
      $\int - \, d\mu$ is an extension of $\mu$, i.e.\@ $\int \mathds{1}_U \, d\mu = \mu(U)$ for all opens $U \in \mathcal{O}(A)$, and
    \item
      \label{itm:integral-uniquely-determined}
      every integral is uniquely determined by its value on indicator functions.
  \end{enumerate}

  \ref{itm:integral-extends-measure}.
  Let $U \in \mathcal{O}(A)$ be an open subset.
  Then $[\mathds{1}_U > q] = \emptyset$ for all $q \geq 1$, and $[\mathds{1}_U > q] = U$ for all $q < 1$.
  Thus
  \begin{equation}
    s_{\mathds{1}_U, m, n} = \frac{1}{m} \sum_{i = 1}^{mn} \mu([\mathds{1}_U > \frac{i}{m}]) = \frac{1}{m} \sum_{i = 1}^{m - 1} \mu(U) = \frac{m - 1}{m} \mu(U)
  \end{equation}
  for all $m, n > 1$, and we conclude $\int \mathds{1}_U \, d\mu = \bigvee_{m > 0} \frac{m - 1}{m} \mu(U) = \mu(U)$.

  \ref{itm:integral-uniquely-determined}.
  Let $\mathcal{I}$ be an integral and let $f : A \rightarrow \mathbb{R}_l^+$.
  Then
  \begin{equation}
    f = \bigvee_{m, n \geq 1} \frac{1}{m} \sum_{i = 1}^{mn} \mathds{1}_{[f > \frac{i}{m}]}.
  \end{equation}
  and this join is directed (for the same reason that $(s_{f, m, n})_{mn}$ is a directed family).
  By linearity (Proposition \ref{prop:integrals-linear}) and $\omega$-continuity of $\mathcal{I}$, we have
  \begin{equation}
    \mathcal{I}(f) = \bigvee_{m, n \geq 1} \frac{1}{m} \sum_{i = 1}^{mn} \mathcal{I}(\mathds{1}_{[f > \frac{i}{m}]}),
  \end{equation}
  thus $\mathcal{I}$ is uniquely determined by its restriction to indicator functions.
\end{proof}

\section{The Giry monad}
\label{sec:giry}

By definition, there are inclusions $\mathfrak{G}_{\leq 1}(A) \subseteq \mathfrak{G}(A) \subseteq ((A \rightarrow \mathbb{R}_l^+) \rightarrow \mathbb{R}_l^+)$.
The operator $\mathrm{Cont}_{\mathbb{R}_l^+} : A \mapsto ((A \rightarrow \mathbb{R}_l^+) \rightarrow \mathbb{R}_l^+)$ is the \emph{continuation monad} \citep{moggi1991notions} instantiated with $\mathbb{R}_l^+$.
As we are working internally (i.e.\@ an internal monad corresponds to an external \emph{strong} monad), monad structure on an operator $M : \mathrm{Set} \rightarrow \mathrm{Set}$ is given by \emph{unit} maps $\eta : A \rightarrow M(A)$ and \emph{bind} maps $>\!\!>\!= : M(A) \times (A \rightarrow M(B)) \rightarrow M(B)$ for all sets $A, B$, which satisfy unit and associativity laws.
In case of the continuation monad $M = \mathrm{Cont}_{\mathbb{R}_l^+}$,
\begin{equation}
  \eta(a) = (f \mapsto f(a))
\end{equation}
is the map that evaluates a given $f : A \rightarrow \mathbb{R}_l^+$ at a certain $a \in A$, and bind is given by
\begin{equation}
  x >\!\!>\!= y = (f \mapsto x (a \mapsto y(a)(f))),
\end{equation}
where $x \in M(A)$, $y : A \rightarrow M(B)$, $f : A \rightarrow \mathbb{R}_l^+$ and $a \in A$.

By the Riesz Theorem~\ref{th:riesz}, $\mathfrak{G}(A) \cong \mathfrak{V}(A)$ and $\mathfrak{G}_{\leq 1}(A) \cong \mathfrak{V}_{\leq 1}(A)$.
This justifies defining the Giry monad of (sub-probability) valuations as follows:

\begin{definition-proposition}[\href{https://github.com/FFaissole/Valuations/blob/d06d2c8c9cce3ddf6137ca3440ab02031912d292/Giry.v}{\tt Giry.v}]
  \label{def-prop:giry}
  The unit and bind operations of the continuation monad $\mathrm{Cont}_{\mathbb{R}_l^+}$ restrict to operations on (sub-probability) integrals.
  The \emph{(sub-probabilistic) Giry monad} is given by the operator $A \mapsto \mathfrak{G}(A)$ (resp.\@ $A \mapsto \mathfrak{G}_{\leq 1}(A)$) and the restricted unit and bind operations of the continuation monad.
\end{definition-proposition}
\begin{proof}
  We need to show stability of $\mathfrak{G}$ and $\mathfrak{G}_{\leq 1}$ under $\eta$ and $>\!\!>\!=$.
  The verification of the rules of lower integrals is done by unfolding the pointwise definition of addition and the partial ordering on functions $A \rightarrow \mathbb{R}_l^+$.
  We show how some of the rules can be derived, the other proofs being similar.

  If $a \in A$ and $f, g : A \rightarrow \mathbb{R}_l^+$, then $\eta(a)(f + g) = (f + g)(a) = f(a) + g(a) = \eta(a)(f) + \eta(a)(g)$, thus $\eta(a)$ is linear.
  Let $\mathcal{I} \in \mathfrak{G}(A)$ and $\mathcal{J} : A \rightarrow \mathfrak{G}(B)$.
  $\omega$-continuity of $\mathcal{I} >\!\!>\!= \mathcal{J}$ can be seen as follows.
  Let $U \subseteq (B \rightarrow \mathbb{R}_l^+)$ be a directed enumerable subset of the function space.
  Then for each $a \in A$ it holds that $\mathcal{J}(a)(\bigvee U) = \bigvee_{f \in U} \mathcal{J}(a)(f)$ because $\mathcal{J}(a)$ is $\omega$-continuous.
  Thus
  \begin{align}
    (\mathcal{I} >\!\!>\!= \mathcal{J})(\bigvee U)
    ={} & \mathcal{I}(a \mapsto \bigvee_{f \in U} \mathcal{J}(a)(f)) \\
    ={} & \mathcal{I}(\bigvee_{f \in U} \, (a \mapsto \mathcal{J}(a)(f))) \\
    ={} & \bigvee_{f \in U} \mathcal{I}(a \mapsto \mathcal{J}(a)(f)) \\
    ={} & \bigvee_{f \in U} \, (\mathcal{I} >\!\!>\!= \mathcal{J})(f)
  \end{align}
  using the pointwise definition of joins on $A \rightarrow \mathbb{R}_l^+$ and the $\omega$-continuity of $\mathcal{I}$.

  We have $\eta(a)(\mathds{1}_A) = \mathds{1}_A(a) = 1$ for all $a \in A$, so $\eta$ is valued in sub-probability integrals.
  If $\mathcal{I} \in \mathfrak{G}_{\leq 1}(A)$ and $\mathcal{J} : A \rightarrow \mathfrak{G}_{\leq 1}(B)$, then $a \mapsto \mathcal{J}(a)(\mathds{1}_B)$ is a function $\leq \mathds{1}_A$ because $\mathcal{J}(a)$ is a sub-probability integral on $B$ for all $a$.
  Thus $(\mathcal{I} >\!\!>\!= \mathcal{J})(\mathds{1}_B) \leq \mathcal{I}(\mathds{1}_A) \leq 1$ by monotonicity of $\mathcal{I}$ and $\mathcal{I}$ being sub-probabilistic.
\end{proof}

\citet{Vickers} proves that the variant of the Giry monad on the category of locales is commutative.
Commutativity of $\mathfrak{G}$ would mean that for $\mathcal{I} \in \mathfrak{G}(A)$ and $\mathcal{J} \in \mathfrak{G}(B)$ the two integrals
\begin{equation}
  \label{eq:giry-strength}
  (\mathcal{I} \blacktriangleright \mathcal{J})(f) = \mathcal{I}(a \mapsto \mathcal{J}(b \mapsto f(a, b)))
\end{equation}
and
\begin{equation}
  (\mathcal{I} \blacktriangleleft \mathcal{J})(f) = \mathcal{J}(b \mapsto \mathcal{I}(a \mapsto f(a, b)))
\end{equation}
on $A \times B$ agree.\footnote{Vickers uses $\triangleleft$ and $\triangleright$ instead of $\blacktriangleleft$ and $\blacktriangleright$ here; we reserve unfilled triangles for the (unrelated) cover relations of $\omega$-cpo presentations.}
In classical mathematics, this corresponds to the Fubini theorem
\begin{equation}
  \int_A \left(\int_B f(a, b) \, db\right) \, da = \int_B \left(\int_A f(a, b) \, da\right) \, db = \int_{A \times B} f(a, b) \, d(a, b)
\end{equation}
and uniqueness of product measures.

However, the proof given in \citet{Vickers} does not directly translate to our setting because it relies on the product $X \times Y$ of locales being dual to the coproduct $\mathcal{O}(X) \otimes \mathcal{O}(Y)$ of underlying frames.
In synthetic topology, this corresponds to products having the product topology:
\begin{definition}
  Let $A$ and $B$ be sets.
  For $U \in \mathcal{O}(A)$ and $V \in \mathcal{O}(B)$, let
  \begin{equation}
    U \times V = \{ (a, b) \in A \times B \mid a \in U \land b \in V \} \in \mathcal{O}(A \times B);
  \end{equation}
  it is open because $\mathbb{S}$ is closed under binary meets.
  $A \times B$ has the \emph{product topology} if $\mathcal{O}(A \times B) \subseteq \mathcal{P}(A \times B)$ is the least subset that is closed under enumerable joins and contains the sets $U \times V$ for all $U \subseteq A$ and $V \subseteq B$ open.
\end{definition}
Note that we require that the topology on $A \times B$ is generated by the basic opens $U \times V$ under enumerable unions, as opposed to arbitrary ones.
Our notion of product topology is in a sense weaker than the one that can be found in \citet[definitions 2.57 and 2.55]{lesnikphd}.
There it is required that every open is an overt (e.g.\@ countable in our case) union of the basic opens $U \times V$, while for us the opens need only be \emph{generated} by basic opens under enumerable unions.
The situation is comparable to the initial $\sigma$-frame and the Rosolini dominance:
In the presence of countable choice, the two definitions are equivalent.

The problem with the Fubini theorem in synthetic topology is that $A \times B$ does not always have the product topology.
Fortunately, $A \times B$ does have the product topology in many special cases. 
Le{\v{s}}nik proves that if $A$ and $B$ are \emph{strongly locally compact}, then $A \times B$ has the product topology (\cite{lesnikphd}, proposition 2.59).
Thus finite products of countable discrete spaces and locally compact metric spaces (e.g.\@ $\mathbb{R}$ under suitable hypotheses, see Section \ref{sec:lebesgue}) behave well, and our Fubini theorem applies.
\begin{theorem}[Fubini]
  \label{th:fubini}
  Let $\mathcal{I} \in \mathfrak{G}(A)$ and $\mathcal{J} \in \mathfrak{G}(B)$ for sets $A, B$.
  Suppose that $A \times B$ has the product topology.
  Then the two integrals $\mathcal{I} \blacktriangleleft \mathcal{J}$ and $\mathcal{I} \blacktriangleright \mathcal{J}$ on $A \times B$ agree.
\end{theorem}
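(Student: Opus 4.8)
The plan is to use the Riesz correspondence to reduce the equality of the two integrals to the equality of their induced valuations on opens, and then to exploit the fact that under the product topology $\mathcal{O}(A \times B)$ is generated by rectangles.

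First I would check that $\mathcal{I} \blacktriangleright \mathcal{J}$ and $\mathcal{I} \blacktriangleleft \mathcal{J}$ are genuinely lower integrals on $A \times B$: preservation of $\bot$, $\omega$-continuity and additivity all follow by unfolding the definitions and using the corresponding properties of $\mathcal{I}$ and $\mathcal{J}$ together with the pointwise structure on function spaces, exactly as in the proof of Definition and Proposition \ref{def-prop:giry}. By the Riesz theorem (Theorem \ref{th:riesz}) each integral is uniquely determined by its restriction to indicator functions, so it suffices to prove that the two valuations $\mu_\blacktriangleright(W) = (\mathcal{I} \blacktriangleright \mathcal{J})(\mathds{1}_W)$ and $\mu_\blacktriangleleft(W) = (\mathcal{I} \blacktriangleleft \mathcal{J})(\mathds{1}_W)$ agree for every open $W \in \mathcal{O}(A \times B)$.

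The base case is agreement on a rectangle $U \times V$. Using $\mathds{1}_{U \times V}(a,b) = \mathds{1}_U(a) \cdot \mathds{1}_V(b)$ (which follows from $r(s \wedge t) = r(s) \cdot r(t)$) and pulling the scalars out by Proposition \ref{prop:integrals-linear}, both iterated integrals evaluate to $\mathcal{I}(\mathds{1}_U) \cdot \mathcal{J}(\mathds{1}_V)$, where I use commutativity of multiplication on $\mathbb{R}_l^+$ (Corollary \ref{cor:lower-real-arithmetic}). Next, because $A \times B$ carries the product topology, every open is an enumerable join $\bigvee_n R_n$ of rectangles $R_n$; writing $\mathds{1}_{\bigvee_n R_n}$ as the directed join $\bigvee_N \mathds{1}_{R_0 \cup \dots \cup R_N}$ and invoking $\omega$-continuity of both integrals, the problem reduces to proving agreement on all finite unions of rectangles. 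I would prove the latter by induction on the number $n$ of rectangles: taking $U = R_1 \cup \dots \cup R_{n-1}$ and $V = R_n$, the modularity law gives $\mu(U \cup V) + \mu(U \cap V) = \mu(U) + \mu(V)$, where $U \cap V = \bigcup_{i < n}(R_i \cap R_n)$ is again a union of $n-1$ rectangles (intersections of rectangles are rectangles, by distributivity), so the induction hypothesis applies to $U$, $V$ and $U \cap V$.

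The main obstacle is the final cancellation step: the induction leaves me with $\mu_\blacktriangleright(W) + c = \mu_\blacktriangleleft(W) + c$ where $c = \mu_\blacktriangleright(U \cap V) = \mu_\blacktriangleleft(U \cap V)$, and I must cancel $c$. Addition on $\mathbb{R}_l^+$ is not cancellative in general (e.g.\ $0 + \infty = 1 + \infty$), so this requires the extra information that $c \le \mu_\blacktriangleright(W)$ and $c \le \mu_\blacktriangleleft(W)$, which holds because $U \cap V \subseteq W$ and valuations are monotone. I would therefore isolate a lemma stating that $x + c = y + c$ together with $c \le x$ and $c \le y$ implies $x = y$ in $\mathbb{R}_l^+$. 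Its proof is the delicate part: given a rational $p \in x$, one chooses $p' \in x$ with $p' > p$ and an element $c_0 \in c$, extracts a witness $y_0 + c_1 = p' + c_0$ of membership in $y + c$, and case-splits on the (decidable) rational comparisons between $c_0$, $c_1$ and $p$; the only problematic branch produces a strictly larger element of $c$ that is still bounded by $p$, so the iteration can occur only a bounded number of times and the process terminates with $p \in y$. This yields $x \le y$, and symmetrically $y \le x$, completing the cancellation and hence the theorem.
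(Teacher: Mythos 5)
Your proposal is correct, and it shares the paper's skeleton: reduce via the Riesz theorem (Theorem \ref{th:riesz}) to agreement of the induced valuations, evaluate both on rectangles using $\mathds{1}_{U \times V}(a, b) = \mathds{1}_U(a) \cdot \mathds{1}_V(b)$ and scalar linearity (Proposition \ref{prop:integrals-linear}) to get $\mathcal{I}(\mathds{1}_U)\mathcal{J}(\mathds{1}_V)$ for both, and use the product topology to reduce to finite unions of rectangles via directed enumerable joins. Where you genuinely diverge is the finite-union step. The paper applies Vickers's inclusion--exclusion principle in the modular monoid (Theorem \ref{th:inclusion-exclusion}) to pin down $\mu(\bigcup_i (U_i \times V_i))$ as the unique solution, bounded below by the values $\mu(U_I \times V_I)$, of a single equation, cancelling all $2^n - 1$ terms at once with the $n$-ary cancellation Lemma \ref{lem:Rl_cancellation}. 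You instead induct on the number of rectangles, peeling one off with the binary modularity law (noting correctly that $U \cap V$ is again a union of $n - 1$ rectangles, since intersections of rectangles are rectangles), so that you only ever need the binary cancellation statement --- which is exactly the $n = 1$ instance of Lemma \ref{lem:Rl_cancellation}, so you could simply cite it rather than reprove it. Your sketch of its proof is nevertheless sound and is essentially Vickers's argument: the decidable rational case split either yields $y_0 \geq p$ directly or produces an element of $c$ larger by the fixed margin $p' - p$, and the hypothesis $c \leq y$ is what terminates the iteration, since once the tracked element of $c$ exceeds $p$ one concludes $p \in y$ by lower-closedness (mere existence suffices throughout because $p \in y$ is a proposition). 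What your route buys is self-containedness and a weaker cancellation requirement; what the paper's buys is a one-shot closed-form determination of $\mu$ on finite unions and reuse of two off-the-shelf results. Two smaller points in your favor: you correctly arrange the passage from finite unions to all opens through \emph{directed} joins of finite unions (this is where closure of finite unions under binary union matters, since valuations preserve only directed enumerable joins), and you explicitly verify that $\mathcal{I} \blacktriangleright \mathcal{J}$ and $\mathcal{I} \blacktriangleleft \mathcal{J}$ are lower integrals, a step the paper leaves implicit in Definition and Proposition \ref{def-prop:giry}.
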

The proof of Theorem \ref{th:fubini} will occupy the remainder of Section \ref{sec:giry}.
It is a direct translation of the proof given by \citet{Vickers} for locales.
\begin{theorem}[Principle of inclusion and exclusion, \cite{Vickers}]
  \label{th:inclusion-exclusion}
  Let $L$ be a lattice with bottom element.
  Then for all $x_1, \dots, x_n \in L$ it holds in $M(L)$ that
  \begin{equation}
    (\bigvee_{i = 1}^n x_i) + \sum_{\substack{I \subseteq \{1, \dots, n\} \\ |I| \text{ is even}}} x_I = \sum_{\substack{|I| \subseteq \{1, \dots, n\} \\ |I| \text{ is odd}}} x_I
  \end{equation}
  where $x_I = \bigwedge_{i \in I} x_i$ for $I \subseteq \{ 1, \dots, n \}$ decidable.
\end{theorem}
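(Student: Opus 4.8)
The plan is to argue by induction on $n$, peeling off the last element $x_n$ and closing the induction with the defining relation $a + b = (a \wedge b) + (a \vee b)$ of $M(L)$ together with a \emph{second} application of the induction hypothesis to the ``shadow'' family $x_i \wedge x_n$. Write $O_n$ for $\sum x_I$ ranging over nonempty $I \subseteq \{1, \dots, n\}$ with $|I|$ odd, and $E_n$ for $\sum x_I$ over nonempty $I$ with $|I|$ even; the empty set is excluded so that the undefined empty meet $x_\emptyset$ never occurs, and odd-cardinality sets are automatically nonempty. The claim to be proved is then $(\bigvee_{i=1}^n x_i) + E_n = O_n$. For $n = 1$ the even sum is empty and both sides equal $x_1$, which settles the base case.

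For the inductive step, I would set $w = \bigvee_{i=1}^{n-1} x_i$ and $z_i = x_i \wedge x_n$ for $1 \le i \le n - 1$. Splitting the subsets $I \subseteq \{1, \dots, n\}$ according to whether $n \in I$, and using that $x_{J \cup \{n\}} = x_J \wedge x_n = z_J$ for nonempty $J$ while $x_{\{n\}} = x_n$, one records the two recurrences $E_n = E_{n-1} + O'_{n-1}$ and $O_n = O_{n-1} + x_n + E'_{n-1}$, where $E'_{n-1}$ and $O'_{n-1}$ denote the even and odd inclusion--exclusion sums built from the shadow family $z_1, \dots, z_{n-1}$. Adjoining $\{n\}$ to a set flips the parity of its cardinality, which is precisely why the even and odd shadow sums swap roles in the two recurrences. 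This combinatorial split is the heart of the argument and the step most prone to bookkeeping errors, so I would verify the parity accounting carefully.

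Two applications of the induction hypothesis then finish the proof: applied to $x_1, \dots, x_{n-1}$ it yields $O_{n-1} = w + E_{n-1}$, and applied to the shadow family $z_1, \dots, z_{n-1}$ it yields $O'_{n-1} = v + E'_{n-1}$ with $v = \bigvee_{i=1}^{n-1} z_i$. Substituting both into the goal $(w \vee x_n) + E_n = O_n$ via the recurrences above, the summand $E_{n-1} + E'_{n-1}$ appears verbatim on both sides, so the goal follows by adding this common term to the smaller identity $(w \vee x_n) + v = w + x_n$ (no cancellation law is needed in the commutative monoid). Here I would invoke distributivity to rewrite $v = \bigvee_i (x_i \wedge x_n) = (\bigvee_i x_i) \wedge x_n = w \wedge x_n$, after which $(w \vee x_n) + (w \wedge x_n) = w + x_n$ is exactly the defining modular relation of $M(L)$ applied to $w$ and $x_n$. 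This one use of distributivity is genuinely essential --- the statement fails for non-distributive lattices such as the diamond $M_3$, where a valuation separating $\bigvee_i x_i$ from the odd sum can be exhibited --- but it is harmless in the intended application, since $L = \mathcal{O}(A \times B)$ is a $\sigma$-frame and hence distributive.
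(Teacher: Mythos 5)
Your proof is correct. Note first that the paper gives no proof of this theorem at all --- it is quoted from Vickers with a citation --- so there is no internal argument to compare against; your induction on $n$ is the natural self-contained route and it checks out in every detail. The parity bookkeeping is right: adjoining $n$ flips the parity of $|I|$, $x_{J \cup \{n\}} = x_J \wedge x_n = z_J$ by associativity and idempotence of meet, and the exclusion of $I = \emptyset$ is the only sensible reading of the statement, since $L$ is assumed to have only a bottom and an empty meet would be a top. Your recurrences $E_n = E_{n-1} + O'_{n-1}$ and $O_n = O_{n-1} + x_n + E'_{n-1}$ are correct, and deriving the goal by \emph{adding} the common summand $E_{n-1} + E'_{n-1}$ to the identity $(w \vee x_n) + (w \wedge x_n) = w + x_n$, rather than cancelling, is exactly the right move in a bare commutative monoid.

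Your remark about distributivity is the most valuable part of the proposal, and it is correct: the theorem's hypothesis ``lattice with bottom element'' is too weak as literally printed. Distributivity enters your argument exactly once, at $\bigvee_{i} (x_i \wedge x_n) = (\bigvee_i x_i) \wedge x_n$, and it cannot be dropped: in the diamond $M_3$ (which is even modular, so modularity alone does not suffice) with atoms $a, b, c$, the function $\mu(\bot) = 0$, $\mu(a) = \mu(b) = \mu(c) = 1$, $\mu(\top) = 2$ satisfies $\mu(x) + \mu(y) = \mu(x \wedge y) + \mu(x \vee y)$ and hence induces a monoid homomorphism $M(M_3) \rightarrow \mathbb{N}$ sending the left side of the claimed identity for $x_1, x_2, x_3 = a, b, c$ to $2$ and the right side to $3$, confirming your claimed counterexample. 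This is a slip in the statement only, not in the paper's mathematics: the modular monoid $M(L)$ is defined in the paper only for \emph{distributive} lattices with bottom, and the one application, in the proof of Theorem \ref{th:fubini}, takes $L = \mathcal{O}(A \times B)$, a $\sigma$-frame, so distributivity is available there, exactly as you observe.
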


\begin{lemma}[\cite{Vickers}]
  \label{lem:Rl_cancellation}
  Let $u_1, \dots, u_n \in \mathbb{R}_l$ and $v$ be lower reals.
  Then the equation $\sum_{i = 1}^n u_i + x = v$ has at most one solution $x$ such that $u_i \leq x$ for all $i$.
\end{lemma}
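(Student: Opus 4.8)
The plan is to use antisymmetry of the order on $\mathbb{R}_l$ (Proposition \ref{prop:lower-real-sigma-frame}): if $x$ and $x'$ are two solutions of $\sum_{i=1}^n u_i + x = v$ with $u_i \leq x$ and $u_i \leq x'$ for all $i$, it suffices to prove $x \leq x'$, the reverse inequality following by the same argument with the roles of $x$ and $x'$ — and hence of the hypotheses $u_i \leq x$ and $u_i \leq x'$ — interchanged. I would unfold $\leq$ as inclusion of lower cuts and recall that $q \in a + b$ holds precisely when there merely exist $q_a \in a$ and $q_b \in b$ with $q_a + q_b = q$. So fix a rational $q \in x$; the goal is the proposition $q \in x'$. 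By roundedness there merely exists $q_1 \in x$ with $q_1 > q$, whence $\underline{q_1} \leq x$ and therefore
\[
  \underline{q_1} + \sum_{i=1}^n u_i \;\leq\; x + \sum_{i=1}^n u_i \;=\; v \;=\; \sum_{i=1}^n u_i + x'
\]
using monotonicity of addition (Corollary \ref{cor:lower-real-arithmetic}). Choosing a rational $s$ with $q < s < q_1$, this inequality says concretely that for every tuple $(t_i)_i$ with $t_i \in u_i$ there merely exist a tuple $(t_i')_i$ with $t_i' \in u_i$ and an element $r \in x'$ such that $\sum_i t_i' + r = s + \sum_i t_i$.

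The key idea is to iterate this \emph{one-step replacement} and average. Starting from an arbitrary tuple $(t_i^{(0)})_i \in \prod_i u_i$, I would build finitely many further tuples $(t_i^{(k)})_i \in \prod_i u_i$ together with reals $r_k \in x'$, for $0 \leq k < N$, satisfying $\sum_i t_i^{(k+1)} + r_k = s + \sum_i t_i^{(k)}$. Since only $N$ applications of a mere-existence statement are needed, and the final goal $q \in x'$ is itself a proposition, no choice principle is required to unpack the witnesses. Telescoping then yields
\[
  \sum_{k=0}^{N-1} r_k + \sum_{i=1}^n t_i^{(N)} \;=\; N s + \sum_{i=1}^n t_i^{(0)}.
\]
This is exactly where the hypothesis $u_i \leq x'$ enters: each $t_i^{(N)} \in u_i$ lies in $x'$, so the left-hand side is a sum of $N + n$ elements of $x'$. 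Consequently their maximum again lies in $x'$ and is at least the average $(N s + \sum_i t_i^{(0)})/(N + n)$.

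As $N \to \infty$ this average tends to $s > q$, so by the Archimedean property there merely exists $N$ for which it exceeds $q$; concretely, any $N$ with $N(s - q) > n q - \sum_i t_i^{(0)}$ works. For such an $N$ the cut $x'$ contains an element strictly greater than $q$, and lower-closedness gives $q \in x'$. This proves $x \leq x'$, and by symmetry $x = x'$.

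I expect the main obstacle to be precisely the point that the averaging trick circumvents: lower-real addition is \emph{not} cancellative, since $\infty = \mathbb{Q}$ is absorbing and a lower cut affords no accessible upper bound, so one cannot simply bound the replacement witnesses $t_i'$ in terms of the original $t_i$ and read off $r \geq q$ from a single step. The averaging argument dispatches the ``finite'' and ``infinite'' regimes uniformly — when some $u_i$ fails to be bounded, the telescoped witnesses $t_i^{(N)}$ themselves certify the required large element of $x'$ — and this is exactly what makes the domination hypotheses $u_i \leq x$ and $u_i \leq x'$ indispensable rather than merely convenient.
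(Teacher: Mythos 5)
Your proof is correct, and every ingredient it uses is available in the paper: the cut description of addition ($q \in L_1 + L_2$ iff there merely exist $q_1 \in L_1$, $q_2 \in L_2$ with $q_1 + q_2 = q$, stated after Corollary \ref{cor:lower-real-arithmetic}), roundedness and lower-closedness of cuts, inhabitedness of each $u_i$ (needed for the initial tuple $(t_i^{(0)})$, and guaranteed by the first axiom of lower reals), and monotonicity of addition. The telescoping identity $\sum_{k=0}^{N-1} r_k + \sum_i t_i^{(N)} = Ns + \sum_i t_i^{(0)}$ is exact rational arithmetic, the maximum of the $N+n$ witnesses is literally one of them and hence lies in $x'$, and your choice $N(s-q) > nq - \sum_i t_i^{(0)}$ makes the average exceed $q$, so lower-closedness gives $q \in x'$. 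Your handling of truncation is also right: since the goal $q \in x'$ is a proposition, the finitely many mere-existence eliminations (with $N$ fixed after unpacking $t^{(0)}$) need no choice principle. Note, however, that the paper itself contains no proof of this lemma: it cites \citet{Vickers} and merely remarks that the proof given there works for $\mathbb{S}$-valued lower reals, alternatively invoking the (order- and addition-preserving) embedding of the open lower reals into the standard lower reals to transfer uniqueness. Your exchange-and-average argument is essentially the standard proof from the cited source, so what your write-up adds over the paper is a self-contained internal verification in the $\mathbb{S}$-valued setting, an explicit check that no countable choice is needed, and a correct diagnosis of why the domination hypotheses $u_i \leq x$ and $u_i \leq x'$ are indispensable (each direction of the antisymmetry argument uses exactly one of them, and without them $\infty$ being absorbing defeats cancellation).
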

Note that Lemma \ref{lem:Rl_cancellation}, as stated in the reference refers to the standard lower reals, which are not required to be valued in $\mathbb{S}$.
However, the proof given there also works for our lower reals; moreover, the open lower reals embed into the standard lower reals so that uniqueness for the latter implies uniqueness for the former.

\begin{lemma}
  \label{lem:product-indicator}
  Let $U \in \mathcal{O}(A)$ and $V \in \mathcal{O}(B)$ be opens in sets $A, B$.
  Then for all $a \in A$ and $b \in B$ it holds that
  \begin{equation}
    \mathds{1}_{U \times V}(a, b) = \mathds{1}_U(a) \cdot \mathds{1}_V(b).
  \end{equation}
\end{lemma}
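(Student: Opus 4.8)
The plan is to reduce the statement to a pointwise identity in $\mathbb{S}$ and $\mathbb{R}_l^+$, and then prove that identity by induction over the free $\omega$-cpo structure of $\mathbb{S}$. First I would unfold the definitions: by definition of the product open, $(U \times V)(a, b) = U(a) \land V(b)$, where the meet is taken in $\mathbb{S}$ (using that the inclusion $\mathbb{S} \subseteq \Omega$ of Proposition \ref{def-prop:sierpinsky} preserves binary meets). Recalling that the indicator is $\mathds{1}_W = r \circ W$ for the unique $\omega$-continuous map $r : \mathbb{S} \rightarrow \mathbb{R}_l^+$ with $r(\bot) = 0$ and $r(\top) = 1$, the claim becomes $r(U(a) \land V(b)) = r(U(a)) \cdot r(V(b))$. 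Since $U(a)$ and $V(b)$ range over arbitrary elements of $\mathbb{S}$, it suffices to prove
\begin{equation}
  r(s \land t) = r(s) \cdot r(t)
\end{equation}
for all $s, t \in \mathbb{S}$.

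I would fix $t \in \mathbb{S}$ and prove this by induction on $s$, using the induction principle of $\mathbb{S} = \mathbb{B}_\omega$ as the free $\omega$-cpo over $\mathbb{B}$, exactly as in the proof of equation \eqref{eq:s-r-modularity} in Lemma \ref{lem:restriction-of-integral-is-valuation}. The base cases are immediate: if $s = \bot$ then $r(\bot \land t) = r(\bot) = 0 = 0 \cdot r(t)$, and if $s = \top$ then $r(\top \land t) = r(t) = 1 \cdot r(t)$. For the inductive step, suppose $s = \bigvee_n s_n$ is the join of an ascending chain in $\mathbb{S}$ for which the identity holds at each $s_n$. Using the $\sigma$-frame distributive law, the $\omega$-continuity of $r$, the induction hypothesis, and the $\omega$-continuity of multiplication on $\mathbb{R}_l^+$ (Corollary \ref{cor:lower-real-arithmetic}), I would compute
\begin{equation}
  r(s \land t) = r\bigl(\bigvee_n (s_n \land t)\bigr) = \bigvee_n r(s_n \land t) = \bigvee_n \bigl(r(s_n) \cdot r(t)\bigr) = \bigl(\bigvee_n r(s_n)\bigr) \cdot r(t) = r(s) \cdot r(t).
\end{equation}

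The argument is short, and the only point requiring care is that the induction is legitimate: the predicate $P(s) :\equiv \bigl(r(s \land t) = r(s) \cdot r(t)\bigr)$ must be closed under directed enumerable joins, which is precisely what the displayed computation establishes, given that all three operations involved — the meet $- \land t$, the map $r$, and multiplication by $r(t)$ — are $\omega$-continuous. The main ingredient I expect to lean on is therefore the $\omega$-continuity of multiplication on $\mathbb{R}_l^+$ from Corollary \ref{cor:lower-real-arithmetic}, together with the distributive $\sigma$-frame law for $\mathbb{S}$; I do not anticipate any genuine obstacle beyond assembling these facts.
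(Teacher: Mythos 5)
Your proposal is correct and follows essentially the same route as the paper: both reduce the lemma to the identity $r(s \land t) = r(s)\,r(t)$ for all $s, t \in \mathbb{S}$ and then conclude by the freeness of $\mathbb{S}$ over $\mathbb{B}$, the paper by noting that the two $\omega$-continuous maps $s \mapsto r(s \land t)$ and $s \mapsto r(s)\,r(t)$ agree on $\bot$ and $\top$ and hence coincide by the universal property, you by running the corresponding induction explicitly. Your spelled-out verification that the predicate is closed under directed enumerable joins is just the unfolding of the paper's one-line appeal to the universal property, so there is no substantive difference.
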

\begin{proof}
  By definition of indicator functions and $U \times V$, the lemma will follow if we can show
  \begin{equation}
    \label{eq:r-meet-vs-product}
    r(s \land t) = r(s) r(t)
  \end{equation}
  for all $s, t \in \mathbb{S}$, where $r : \mathbb{S} \rightarrow \mathbb{R}_l$ is the unique map of $\omega$-cpos satisfying $r(\bot) = 0$ and $r(\top) = 1$.
  For fixed $t$, the two $\omega$-continuous maps $s \mapsto r(s \land t)$ and $s \mapsto r(s) r(t)$ agree for $s = \bot$ and $s = \top$, so they agree by the universal property of $\mathbb{S} = \mathbb{B}_\bot$.
\end{proof}

\begin{proof}[Proof of Theorem \ref{th:fubini}.]
  For $U \in \mathcal{O}(A)$ and $V \in \mathcal{O}(B)$, we compute with Lemma \ref{lem:product-indicator}
  \begin{align}
    (\mathcal{I} \blacktriangleleft \mathcal{J})(\mathds{1}_{U \times V})
    ={} & \mathcal{J}(b \mapsto \mathcal{I}(a \mapsto \mathds{1}_U(a) \mathds{1}_V(b))) \\
    ={} & \mathcal{J}(\mathcal{I}(\mathds{1}_U) \mathds{1}_V) \\
    ={} & \mathcal{I}(\mathds{1}_U) \mathcal{J}(\mathds{1}_V)
  \end{align}
  and hence by symmetry
  \begin{equation}
    (\mathcal{I} \blacktriangleleft \mathcal{J})(\mathds{1}_{U \times V}) = (\mathcal{I} \blacktriangleright \mathcal{J})(\mathds{1}_{U \times V}) = \mathcal{I}(\mathds{1}_U) \mathcal{J}(\mathds{1}_V).
  \end{equation}
  Because integrals are uniquely determined by their restriction to valuations, it will be sufficient to show that a valuation $\mu$ on $A \times B$ is in turn uniquely determined by its restriction to opens of the form $U \times V$.
  $A \times B$ has the product topology, so $\mathcal{O}(A \times B)$ is the least set containing subsets of the form $U \times V$ with $U \subseteq A$ and $V \subseteq B$ open that is closed under enumerable unions.
  Equivalently, $\mathcal{O}(A \times B)$ is generated under \emph{directed} enumerable unions from opens of the form $U_1 \times V_1 \cup \dots \cup U_n \times V_n$ for $U_i \subseteq A$ open and $V_i \subseteq B$ open, $1 \leq i \leq n$.
  It will thus suffice to prove that $\mu$ is uniquely determined by its value on finite unions of products of opens.
  Applying the principle of inclusion and exclusion (Theorem \ref{th:inclusion-exclusion}), we obtain
  \begin{equation}
    \mu(\bigcup_{i = 1}^n \, (U_i \times V_i)) + \sum_{\substack{I \subseteq \{1, \dots, n\} \\ |I| \text{ is even}}} \mu(U_I \times V_I) = \sum_{\substack{|I| \subseteq \{1, \dots, n\} \\ |I| \text{ is odd}}} \mu(U_I \times V_I),
  \end{equation}
  where $U_I = \bigcap_{i \in I} U_i$ and $V_I = \bigcap_{i \in I} V_i$ (hence $\bigcap_{i \in I} \, (U_i \times V_i) = U_I \times V_I$).
  By monotonicity of $\mu$, it holds that $\mu(U_I \times V_I) \leq \mu(\bigcup_{i = 1}^n \, (U_i \times V_i))$, so by Lemma \ref{lem:Rl_cancellation} the values $\mu(U_I \times V_I)$ uniquely determine $\mu(\bigcup_{i = 1}^n \, (U_i \times V_i))$.
\end{proof}

\section{The Lebesgue valuation}
\label{sec:lebesgue}

Having studied valuations in general, we now turn to constructing a concrete valuation on a non-discrete space: The Lebesgue valuation on the reals.
For this we will need that the intrinsic topology of the Dedekind reals agrees with the topology that is induced by the Euclidean metric, i.e.\@ that $\mathbb{R}$ is \emph{metrizable} \citep{lesnikphd}.
We proceed by defining a $\sigma$-frame of formal real opens and state metrizability as an isomorphism between the formal and the intrinsic real opens.
The Lebesgue valuation can then be defined by a universal property.

\begin{definition}
  The partial order $L$ is the least suborder of $\mathcal{P}(\mathbb{Q})$ containing the sets $(a, b) = \{ x \in \mathbb{Q} \mid a < q < b \}$ for all $a, b \in \mathbb{Q}$ that is closed under binary unions.
\end{definition}
Every element $x \in L$ has a unique presentation as a disjoint union
\begin{equation}
  \label{eq:finite-union-of-rational-intervals}
  x = (a_1, b_1) \mathop{\dot\cup} \dots \mathop{\dot\cup} (a_n, b_n)
\end{equation}
for rational numbers $a_i, b_i$ such that $a_i < b_i \leq a_{i + 1}$ for $i = 1, \dots, n - 1$.
We refer to the elements $(a_i, b_i)$ as the \emph{connected components} of $x$.
The decomposition into connected components can be used to construct $L$ as a subset of lists of rational numbers, and this definition is purely combinatorial and does not use the subobject classifier $\Omega$.
It also follows from the decomposition that $L$ is a distributive lattice with bottom element, i.e.\@ that it has meets:
We have
\begin{equation}
  \bigg(\bigcup_{i = 1}^m (a_i, b_i)\bigg) \cap \bigg(\bigcup_{j = 1}^n (a'_j, b'_j)\bigg) = \bigcup_{i, j} \big((a_i, b_i) \cap (c_j, d_j)\big)
\end{equation}
and $(a_i, b_i) \cap (c_j, d_j) = (\mathrm{max}(a_i, c_j), \mathrm{min}(b_i, d_j))$ for all $i, j$, which is in $L$.

\begin{definition}
  \label{def:real-cover-relation}
  The \emph{cover relation on $L$} is generated by
  \begin{equation}
    \label{eq:formal-generating-covers}
    (a, b) \triangleleft \bigg\{ \bigcup_{j = 1}^n \, (a'_j, b'_j) \mid a < a'_j, b'_j < b \text { for } j \leq n \bigg\}
  \end{equation}
  for $a < b$ under binary unions.
\end{definition}
Thus $(\bigcup_{i = 1}^m (a_i, b_i)) \triangleleft U$ if $U$ is the set of elements $\bigcup_{j = 1}^n (a'_j, b'_j)$ such that for each $j$ there exists $i$ with $a_i < a'_j$ and $b'_j < b_i$.
This cover relation is stable under binary meets and, by definition, joins.
It follows that the $\omega$-cpo completion $L_\omega$ has enumerable joins and finite meets satisfying the distributivity law.
The bottom element of $\emptyset \in L$ is also a bottom element of $L_\omega$.
Finally, the subset of elements of $L_\omega$ which are bounded by $t = \bigvee_{n \in \mathbb{N}} (-n, n)$ contains the image of $L$ and is closed under joins, hence $t$ is a top element of $L_\omega$.
Thus $L_\omega$ is a $\sigma$-frame.

\begin{definition}
  \label{def:formal-real-opens}
  The \emph{$\sigma$-frame of formal real opens} $\mathcal{O}(\mathbb{R}_F)$ is given by the $\omega$-cpo completion of $L$ with respect to the covers of Definition \ref{def:real-cover-relation}.
\end{definition}

By definition, $L \subseteq \mathcal{P}(\mathbb{R})$, but in fact the rational intervals $(a, b)$ are open:
Given a Dedekind real $r = (\ell, u) \in \mathbb{R}$, we have $r \in (a, b)$ if and only if $\ell(a) \land u(b)$, which is a truth value in $\mathbb{S}$ because $\ell, u : \mathbb{Q} \rightarrow \mathbb{S}$.
It follows that $L \subseteq \mathcal{O}(\mathbb{R})$.
This inclusion is cover preserving because $(a, b) = \bigcup \{(a', b') \mid a < a' \leq b' < b \}$ as subsets of $\mathbb{R}$.
We obtain a morphism of $\omega$-cpos $\mathcal{O}(\mathbb{R}_F) \rightarrow \mathcal{O}(\mathbb{R})$.
It is not necessarily an isomorphism, but it will be assumed for the remainder of this section that it is:
\begin{assumption}
  \label{ass:reals-metrizable}
  The map $\mathcal{O}(\mathbb{R}_F) \rightarrow \mathcal{O}(\mathbb{R})$ is an isomorphism of partial orders.
\end{assumption}
\citet[Section 5.3]{lesnikphd} proves that if one assumes the intuitionistic principles function-function choice, the continuity principle (which is absurd in classical logic) and the fan principle (every decidable bar is uniform), then every complete metrically separable metric space is metrized.
In particular, every open $U \in \mathcal{O}(\mathbb{R})$ is a countable union of metric balls, from which our Assumption \ref{ass:reals-metrizable} follows.
Lesnik's assumptions hold in the K2 realizability topos and the big topos of topological spaces, so Assumption \ref{ass:reals-metrizable} holds in these models, too.

\begin{definition-proposition}
  The map $\lambda' : L \rightarrow \mathbb{Q}^+$ given by
  \begin{equation}
    \lambda'(\bigcup_{i = 1}^n (a_i, b_i)) = \sum_{i = 1}^n b_i - a_i;
  \end{equation}
  for $n \geq 0$ and rationals $a_i < b_i \leq a_{i + 1}$, $1 \leq i \leq n - 1$, is well-defined, monotone and, when coerced to a function $L \rightarrow \mathbb{R}_l^+$, cover-preserving.
  The induced map $\lambda : \mathcal{O}(\mathbb{R}) \cong L_\omega \rightarrow \mathbb{R}_l^+$ is a valuation, which we refer to as the \emph{Lebesgue valuation}.
\end{definition-proposition}
\begin{proof}
  $\lambda'$ is well-defined because decompositions into connected components are unique up to reordering.
  It is evidently monotone.
  If $(a, b) \triangleleft U$, then $(a + n^{-1}, b - n^{-1}) \in U$ for all $n > 0$, so that
  \begin{equation}
    \bigvee_{u \in U} \lambda'(u) \geq \bigvee_{n > 0} \lambda'((a + n^{-1}, b - n^{-1})) = b - a = \lambda'((a, b)).
  \end{equation}
  It follows that $\lambda'$ preserves general covers because we have $\lambda'(x \cup y) = \lambda'(x) + \lambda'(y)$ if $x$ and $y$ are disjoint.

  $\lambda$ preserves the bottom element because $\lambda'$ does, and it is $\omega$-continuous by construction.
  What remains to be proved is the modular law
  \begin{equation}
    \label{eq:lebesgue-modularity}
    \lambda(x \cup y) + \lambda(x \cap y) = \lambda(x) + \lambda(y)
  \end{equation}
  for all $x, y \in \mathcal{O}(\mathbb{R})$, but we immediately reduce to $x, y \in L$ by induction.
  In turn, we prove equation \eqref{eq:lebesgue-modularity} for $x, y \in L$ by induction over the total number of connected components of $x$ and $y$.
  It holds trivially if $x = \emptyset$ or $y = \emptyset$.
  If $x = (a, b)$ and $y = (c, d)$ are rational intervals, then
  \begin{align}
    & \lambda(x \cup y) + \lambda(x \cap y) \\
    ={}& \max(b, d) - \min(a, c) + \min(b, d) - \max(a, c) \\
    ={}& b + d - a - c \\
    ={}& \lambda(x) + \lambda(y)
  \end{align}
  so the equation holds in this case, too.

  In the induction step we are given disjoint unions $(a, b) \mathop{\dot\cup} x$ and $(c, d) \mathop{\dot\cup} y$ such that $b < r$ for all $r \in x$ and $d < s$ for all $d \in y$, at least after reordering the connected components if necessary.
  If $n$ is the number of connected components of $x$ and $m$ that for $y$, we may assume that \eqref{eq:lebesgue-modularity} holds for all pairs of elements of $L$  whose total number of connected components is at most $n + m + 1$.

  Suppose first that $(a, b)$ and $(c, d)$, are disjoint, wlog.\@ say $b \leq c$.
  Then $(a, b)$ is disjoint from all of $x$, $(c, d)$ and $y$, thus
  \begin{equation}
    \lambda((a, b) \cup x \cup (c, d) \cup y) = \lambda((a, b)) + \lambda(x \cup (c, d) \cup y).
  \end{equation}
  By the induction hypothesis,
  \begin{equation}
    \lambda(x \cup ((c, d) \cup y)) = \lambda(x) + \lambda((c, d) \cup y) - \lambda(x \cap ((c, d) \cup y)).
  \end{equation}
  Because $(a, b)$ is disjoint from $(c, d)$ and $y$, we have
  \begin{equation}
    ((a, b) \cup x) \cap ((c, d) \cup y) = x \cap ((c, d) \cup y),
  \end{equation}
  which combined with the previous equations yields the modular law for $(a, b) \cup x$ and $(c, d) \cup y$ if $(a, b)$ and $(c, d)$ are disjoint.

  Otherwise $(a, b)$ and $(c, d)$ intersect, so that $(a, b) \cup (c, d) = (e, f)$ with $e = \min(a, c)$ and $f = \max(b, d)$.
  Without loss of generality we may assume $f = d$, so that all of $(a, b)$, $(c, d)$ and $(e, f)$ are disjoint from $y$.
  Thus
  \begin{align}
    & \lambda((a, b) \cup x \cup (c, d) \cup y) \\
    ={}& \lambda(x \cup (e, f) \cup y) \\
    ={}& \lambda(x) + \lambda((e, f)) + \lambda(y) - \lambda(x \cap ((e, f) \cup y)
  \end{align}
  by the induction hypothesis for $x$ and $(e, f) \cup y$.
  The base case of two rational intervals was already proved, thus
  \begin{equation}
    \lambda((e, f)) = \lambda((a, b)) + \lambda((c, d)) - \lambda((a, b) \cap (c, d)).
  \end{equation}
  Because $(a, b)$ is disjoint from $y$ and $\lambda'$ maps disjoint unions to sums, we have
  \begin{equation}
    \lambda(((a, b) \cup x) \cap ((c, d) \cup y)) = \lambda((a, b) \cap (c, d)) + \lambda(x \cap ((c, d) \cup y)).
  \end{equation}
  Putting everything together, we obtain
  \begin{align}
    &\lambda((a, b) \cup x \cup (c, d) \cup y) \\
    ={}& \lambda(x) + \lambda((a, b)) + \lambda(y) + \lambda((c, d)) - \lambda(((a, b) \cup x) \cap ((c, d) \cup y))
  \end{align}
  as required.
\end{proof}

We can now define distributions on (subsets of) the real numbers for which there exists a density with respect to the Lebesgue valuation.
For example, the normal distribution $\mathcal{N}(\mu, \sigma^2)$ has density
\begin{equation}
  f(x) = \frac{1}{\sigma \sqrt{2 \pi}} \exp\Big(-\frac{1}{2} (\frac{x - \mu}{\sigma})^2\Big),
\end{equation}
and so $\mathcal{N}(\mu, \sigma^2) \in \mathfrak{G}_{\leq 1}(\mathbb{R})$ can be defined by
\begin{equation}
  \mathcal{N}(\mu, \sigma^2)(U) = \int_U f \, d\lambda = \int \mathds{1}_U f \, d\lambda
\end{equation}
on opens $U : \mathbb{R} \rightarrow \mathbb{S}$.

\section{Interpreting $\mathcal{R}$ml}
\label{sec:interpretation}

The sub-probability Giry monad $\mathfrak{G}_{\leq 1}$ is defined on the cartesian closed category of sets, and the sets of functions $A \rightarrow \mathfrak{G}_{\leq 1}(B)$ with the pointwise ordering form $\omega$-cpos with bottom elements.
Similarly to \citet{ALEA}, we obtain an interpretation of $\mathcal{R}$ml, a call-by-value PCF with recursion and probabilistic choice as effect.
Because $\mathfrak{G}_{\leq 1}$ is defined in terms of the intrinsic topology (as opposed to the discrete one), we obtain furthermore an interpretation of primitives for sampling from continuous distributions.

First recall the language PCF as in e.g.\@ \citet{plotkin2001adequacy}.
We consider the fragment that has as base types $N$ (natural numbers), $B$ (booleans) and $R$ (real numbers), and as type formers finite products and exponentials.
Thus the set of types $\sigma$ is inductively defined by
\begin{equation}
  \sigma \Coloneqq N \mathbin{|} B \mathbin{|} R \mathbin{|} 1 \mathbin{|} \sigma \times \sigma \mathbin{|} \sigma \rightarrow \sigma.
\end{equation}
Terms $M$ are given by
\begin{align}
  M \Coloneqq {} & 0 \mathbin{|} \textbf{zero}(M) \mathbin{|} \textbf{succ}(M) \mathbin{|} \textbf{pred}(M) \mathbin{|} \textbf{nat-to-real}(M) \mathbin{|} \\
  & \textbf{true} \mathbin{|} \textbf{false} \mathbin{|} \textbf{if } M \textbf{ then } M \textbf{ else } M \mathbin{|} \\
  & M < M \mathbin{|} M + M \mathbin{|} M \cdot M \mathbin{|} \mathrm{exp}(M) \mathbin{|} \mathrm{log}(M) \mathbin{|} \mathrm{sin}(M) \mathbin{|} \dots \mathbin{|} \\
  & \mathbf{bernoulli} \mathbin{|} \mathbf{uniform} \mathbin{|} \dots \mathbin{|} \\
  & \mathbf{*} \mathbin{|} \langle M, M \rangle \mathbin{|} \pi_1(M) \mathbin{|} \pi_2(M) \mathbin{|} \\
  & x \mathbin{|} \lambda x: \sigma. M \mathbin{|} M M \mathbin{|} \\
  & \mathbf{rec}(f : \sigma \rightarrow \tau, x: \sigma.M)
\end{align}
Here $f$ and $x$ can be any one of a fixed set of variables.

There are evident typing rules generating the well-typed terms $\Gamma \vdash M : \sigma$ in a context $\Gamma$ and rules generating equalities $\Gamma \vdash M_0 = M_1 : \sigma$.
(The reduction rules need not concern as here as we are interested in a denotational semantics.)
These rules are set up such that $\mathbf{zero}(M) : B$ is well-typed for natural number terms $M : N$ and equal to $\mathbf{true}$ if and only if $M = 0$.
$\mathbf{pred}$ is the predecessor function (with $\mathbf{pred}(0) = 0$), and $\textbf{nat-to-real}$ is the coercion of natural numbers to real numbers.
The operators $<, +, -, \cdot$ and $\mathrm{exp}, \mathrm{log}, \mathrm{sin}$ are defined on terms of type $R$.
The list of operators can be expanded as needed with more functions that are constructively definable.
Some operators such as the $\mathrm{log}$ function are partial; in these cases the semantics of the program is only defined when the operator is applied to a term that is guaranteed to be in the operator's domain.

The terms $\mathbf{bernoulli} : B$ and $\mathbf{uniform} : R$ are well-typed in every context $\Gamma$; they represent sampling from a fair coin flip and the uniform distribution on the unit interval, respectively.
They are typed like constants of type $B$ and $R$, respectively, and every usage of $\mathbf{bernoulli}$ and $\mathbf{uniform}$ samples a fresh value.
To reuse sampled values, one must thus bind them to variables.

Much like the list of real operators, the list of distributions can be extended as needed with more constructively definable distributions, for example the normal distribution.
Alternatively, many distributions can also be constructed in the language itself if their density with respect to one of the built-in distributions is definable.

The typing rule for the $\mathbf{rec}$ operator is as follows:
\begin{equation}
  \inferrule
  {\Gamma, f : \sigma \rightarrow \tau, x: \sigma \vdash M : \tau}
  {\Gamma \vdash \mathbf{rec}(f : \sigma \rightarrow \tau, x: \sigma.M) : \sigma \rightarrow \tau}
\end{equation}
$\mathbf{rec}$ is used for unbounded recursion and hence satisfies the equation
\begin{equation}
  \mathbf{rec}(f : \sigma \rightarrow \tau, x: \sigma.M) = \lambda x: \sigma.M[\mathbf{rec}(f : \sigma \rightarrow \tau, x: \sigma.M)/f].
\end{equation}

\citet{plotkin2001adequacy} show that cartesian closed categories $\mathcal{C}$ equipped with a suitable monad $T$ can serve as models for call-by-value PCF.
In our case, $\mathcal{C} = \mathrm{Set}$ is the category of sets of our ambient constructive logic, and our monad is $T = \mathfrak{G}_{\leq 1}$, the sub-probabilistic Giry monad.
Let us verify that $\mathrm{Set}$ and $\mathfrak{G}_{\leq 1}$ satisfy the conditions of \citet{plotkin2001adequacy}:
\begin{itemize}
  \item
    $\mathrm{Set}$ is a cartesian closed category enriched over $\omega$-cpos by considering the hom sets as discrete partial orders.
    It has coproducts (hence an object of booleans) and a natural numbers object.
  \item
    The Kleisli category of $\mathfrak{G}_{\leq 1}$ is enriched over $\omega$-cpos with bottom element via the pointwise ordering on maps $A \rightarrow \mathfrak{G}_{\leq 1}(B)$.
    Note that here it is crucial that we work with $\mathfrak{G}_{\leq 1}$ instead of $\mathfrak{G}_{= 1}$ as for the latter the hom sets need not have bottom elements.
  \item
    The strength of $\mathfrak{G}_{\leq 1}$ preserves bottom elements because the integral $\mathcal{I} \mathbin{\blacktriangleright} \mathcal{J}$ of equation \eqref{eq:giry-strength} vanishes if either $\mathcal{I}$ or $\mathcal{J}$ vanishes.
\end{itemize}
Types are now interpreted as follows: 
\begin{mathpar}
  \llbracket N \rrbracket = \mathbb{N} \and
  \llbracket B \rrbracket = \{ 0, 1 \} \and
  \llbracket R \rrbracket = \mathbb{R} \and
  \llbracket 1 \rrbracket = 1 \\
  \llbracket \sigma \times \tau \rrbracket = \llbracket \sigma \rrbracket \times \llbracket \tau \rrbracket \and
  \llbracket \sigma \rightarrow \tau \rrbracket = (\llbracket \sigma \rrbracket \rightarrow \mathfrak{G}_{\leq 1}(\llbracket \tau \rrbracket))
\end{mathpar}
Contexts $\Gamma = (x_1 : \sigma_1, \dots, x_n : \sigma_n)$ are interpreted as products $\llbracket \sigma_1 \rrbracket \times \dots \times \llbracket \sigma_n \rrbracket$.

The denotation of a term $\Gamma \vdash M : \sigma$ is a Kleisli arrow $\llbracket M \rrbracket : \llbracket \Gamma \rrbracket \rightarrow \mathfrak{G}_{\leq 1}(\llbracket \sigma \rrbracket)$; the complete set of clauses can be found in \citet{plotkin2001adequacy}.
For example, the denotation of a successor term $\mathbf{succ}(M)$ for some $\Gamma \vdash M : N$ is defined as composition
\begin{equation}
  \begin{tikzcd}[column sep=huge]
    \llbracket \Gamma \rrbracket \arrow[r, "{\llbracket M \rrbracket}"] & \mathfrak{G}_{\leq 1}(\mathbb{N}) \arrow[r, "\mathfrak{G}_{\leq 1}(\mathrm{succ})"] & \mathfrak{G}_{\leq 1}(\mathbb{N})
  \end{tikzcd}
\end{equation}
using the semantic successor function on the natural numbers object, and the tuple term $\langle M_1, M_2 \rangle$ for terms $\Gamma \vdash M_i : \sigma_i$ is defined as
\begin{equation}
  \begin{tikzcd}[column sep=huge]
    \llbracket \Gamma \rrbracket \arrow[r, "{\langle \llbracket M_1 \rrbracket , \llbracket M_2 \rrbracket \rangle}"] & \mathfrak{G}_{\leq 1}(\llbracket \sigma_1 \rrbracket) \times \mathfrak{G}_{\leq 1}(\llbracket \sigma_2) \rrbracket \arrow[r, "- \mathbin{\blacktriangleright} -"] & \mathfrak{G}_{\leq 1}(\llbracket \sigma_1 \times \sigma_2 \rrbracket)
  \end{tikzcd}
\end{equation}
using the strength of $\mathfrak{G}_{\leq 1}$.

A recursion term $\Gamma \vdash \mathbf{rec}(f : \sigma \rightarrow \tau, x: \sigma.M) : \sigma \rightarrow \tau$ for $\Gamma, f : \sigma \rightarrow \tau, x : \sigma \vdash M : \tau$ is interpreted as follows.
First we define a monotone endofunction $Y$ on the partial order of maps $k : \llbracket \Gamma, x : \sigma \rrbracket \rightarrow \mathfrak{G}_{\leq 1}(\llbracket \tau \rrbracket)$ as follows.
$k$ corresponds to a map $\bar k : \llbracket \Gamma \rrbracket \rightarrow (\llbracket \sigma \rrbracket \rightarrow \mathfrak{G}_{\leq 1}(\llbracket \tau \rrbracket))$.
Now
\begin{equation}
  Y(k) :
  \begin{tikzcd}[column sep=huge]
    \llbracket \Gamma, v : \sigma \rrbracket \arrow[r, "{\langle \pi_1, \bar k \circ \pi_1, \pi_2 \rangle}"]
    & \llbracket \Gamma, f : \sigma \rightarrow \tau, v : \sigma \rrbracket \arrow[r, "\llbracket M \rrbracket"]
    & \mathfrak{G}_{\leq 1}(\llbracket \tau \rrbracket).
  \end{tikzcd}
\end{equation}
We then define $\llbracket \mathbf{rec}(f : \sigma \rightarrow \tau, x: \sigma.M) \rrbracket$ as the join of the sequence $\bot \leq Y(\bot) \leq Y(Y(\bot)) \leq \dots$ of functions $\llbracket \Gamma, v: \sigma \rrbracket \rightarrow \mathfrak{G}_{\leq 1}(\llbracket \tau \rrbracket)$.

The denotations of real functions operators are the corresponding functions on $\mathbb{R}$.
The denotation of the comparison operator $<$ is an exception:
There is no constructively definable function $\mathbb{R} \times \mathbb{R} \rightarrow \{0, 1\}$ that decides the ordering of real numbers; in fact such a function would contradict Assumption \ref{ass:reals-metrizable}.
However, there is a function $l : \mathbb{R} \times \mathbb{R} \rightarrow \mathfrak{G}_{\leq 1}(\{0, 1\})$ such that
\begin{equation}
  l(x_0, x_1) =
    \begin{cases}
      \eta(1) & \text{ if } x_0 < x_1 \\
      \eta(0) & \text{ if } x_0 > x_1 \\
      \bot & \text{ if } x_0 = x_1
    \end{cases}
\end{equation}
and we take $l$ to be the denotation of the $<$ operator.
$l$ is can be constructed as follows:
As proved by \citet{Gilbert}, there is a map
\begin{equation}
  d : \{ (a, b) \in \mathbb{S} \times \mathbb{S} \mid a \land b = \bot \} \rightarrow \{0, 1\}_\bot
\end{equation}
such that $d((\top, \bot)) = 0$ and $d((\bot, \top)) = 1$.
Recall that a Dedekind real number $x$ is given by a pair $(L, U)$ of open sets of rational numbers $L, U : \mathbb{Q} \rightarrow \mathbb{S}$ satisfying conditions such that $L = \{ q \in \mathbb{Q} \mid q < x \}$ and $U = \{ q \in \mathbb{Q} \mid q > x \}$.
Thus given two real numbers $x_1 = (L_1, U_1)$ an $x_2 = (L_2, U_2)$, it holds that $L_1 \cap U_2$ is inhabited if and only if $x_1 > x_2$, and symmetrically that $L_2 \cap U_1$ is inhabited if and only if $x_1 < x_2$.
Inhabitation of open subsets $U \subseteq A$ of countable sets $A$ is an open proposition because it is equivalent to $\bigvee_{a \in A} U(a)$, which is a countable disjunction of open propositions.
It follows that inhabitation of $L_1 \cap U_2$ is a proposition $s_1 \in \mathbb{S}$, and similarly inhabitation of $L_2 \cap U_1$ is a proposition $s_2 \in \mathbb{S}$.
$s_1 \land s_2$ is contradictory.
We now set $l(x_1, x_2) = \phi(d(s_1, s_2))$, where $\phi$ is the canonical map $\{0, 1\}_\bot \rightarrow \mathfrak{G}_{\leq 1}(\{0, 1\})$.

The Bernoulli sampling constant $\Gamma \vdash \mathbf{bernoulli} : B$ is interpreted as the constant function $\llbracket \Gamma \rrbracket \rightarrow \mathfrak{G}_{\leq 1}(\llbracket B \rrbracket)$ with value the lower integral corresponding to the valuation $\nu(0) = \frac{1}{2}$ and $\nu(1) = \frac{1}{2}$.
Similarly, the $\mathbf{uniform}$ term is interpreted as the lower integral corresponding to $\lambda_{(0, 1)}(U) = \lambda(U \cap (0, 1))$ using the Lebesgue valuation of Section \ref{sec:lebesgue}.
(Note that we have to use the open interval $(0, 1)$ here instead of the closed interval $[0, 1]$ because $U \cap [0, 1]$ is not always open in $\mathbb{R}$, even if $U$ is open.)

Consider the problem of sampling from a standard normal distribution given only the $\mathbf{uniform}$ built-in.
One way to achieve this using the \emph{Marsaglia polar method}~\cite{Marsaglia} is as follows:
\begin{lstlisting}[language=ml]
  let rec normal =
    let
      x = 2 * uniform - 1;
      y = 2 * uniform - 1;
      s = x * x + y * y;
    in
      if s < 1 then x * sqrt ((-2) * ln s / s)
      else normal
\end{lstlisting}
Note that \texttt{normal} is defined by potentially infinite recursion but terminates with probability $1$, that is, $\llbracket \texttt{normal} \rrbracket(\mathbb{R}) = 1 \in \mathbb{R}_l$.

\section{Conclusion}
\label{sec:conclusion}

\emph{Contributions.}
This paper develops the foundations of integration theory in synthetic topology based on the initial $\sigma$-frame.
The initial $\sigma$-frame $\mathbb{S}$ is the $\omega$-cpo completion of the booleans.
We discuss several alternative constructions of free $\omega$-cpo completions and show how product $\omega$-cpos can be presented in terms of presentations of their factors.
It is shown that $\mathbb{S}$ is a dominance and hence suitable for synthetic topology.
Following \citet{EscardoKnapp} we show that the $\mathbb{S}$-partial map classifier of a set $A$ is given by its pointed $\omega$-cpo completion $A_\bot$.
A set of lower real numbers based on $\mathbb{S}$ is defined and shown to satisfy the universal property of the $\omega$-cpo completion of the rationals.
This set of lower reals is then used in definitions of valuations and lower integrals which take into account the intrinsic topology induced by $\mathbb{S}$.
The Riesz theorem relating valuations and lower integrals is proved and used to define the Giry monad.
The Fubini theorem is shown to hold for sets $A, B$ whose product has the product topology.
Finally, the Lebesgue valuation is defined under the assumption of metrizability of $\mathbb{R}$, which would be impossible if our valuations were based on discrete topologies.

\emph{Related work.}
Much of our approach to lower integrals is adapted from Steven Vickers's work \citep{Vickers,vickers2008localic} work on the same subject, but in the setting of synthetic topology instead of locale theory.
Lower integrals are better behaved on locales than in synthetic topology in certain aspects.
For example, the Fubini theorem holds without restriction for locales, making the Giry monad commutative, whereas we can only prove the Fubini theorem in synthetic topology on the assumption that the involved products are topologized correctly.
On the other hand, the category of locales is not cartesian closed, whereas the ambient category of sets in synthetic topology is even a elementary topos (or, predicatively, a $\Pi W$-pretopos).

\citet[Section 11]{Shulman:BFT} proves the Brouwer fixpoint theorem in homotopy
type theory using synthetic topology.
He uses modalities to mediate between the homotopical and topological circle and other spaces.
This spatial (modal) type theory is modelled in any local topos, for example Johnstone's topological topos \citep{johnstone1979topological}.
Fourman's big topos that models the intuitionistic principles outlined in Section \ref{sec:lebesgue} is also local.
This paper does not focus on homotopy theory, thus the methodology is different.

\citet{EscardoXu} use a similar big topos, but
restricted to compact spaces to model the fan-theorem in a simple type
theory. \citet{CoquandMannaa} provide a stack
model over Cantor space for univalent type theory. It is likely that 
our work model can be given a constructive treatment by these methods;
see \citet{Coquand:Birmingham}.


There is an interesting analogy with the semantics for higher order probabilistic programming described in \citet{Staton2016SemanticsFP,HeunenKSY17}.
Noting that the category of standard Borel spaces is not Cartesian closed, they embed it into a supercategory (of quasi-Borel spaces) which is closed under exponentials.
A similar problem exists in synthetic topology: The category of topological spaces is not
Cartesian closed. The common solution is to consider a convenient super-category.
\citet[Chapter 10]{Escard2004SyntheticTO} presents a number of subcategories of presheaves over the category of topological spaces
for this purpose. In our case, it is more natural to consider the sheaves for the
open cover topology.
In this light, one could consider our construction as first
embedding in a bigger category with (dependent) function types and then defining the monad on the bigger category.
One advantage of semantics in toposes is that they model all of constructive mathematics, including the principle of unique choice.
This enables use of a strong internal logic to simplify arguments, as is exemplified in this paper.
On the other hand, our Fubini theorem holds only conditionally, whereas it holds for arbitrary products of quasi-Borel spaces, making the Giry monad on quasi-Borel spaces commutative.

\emph{Acknowledgements.}
The questions in this paper originated from discussions with
Christine Paulin in 2014, when Spitters held a Digiteo chair at LRI, Inria. We
also benefited from Faissole’s internship with Paulin about lower reals in Coq.
We are grateful for both.
We thank Alex Kavvos for discussions on the interpretation of effectful programming and fixed points.

This research was partially supported by the Guarded homotopy type theory project, funded by the Villum Foundation, project number 12386, AFOSR project `Homotopy Type Theory and Probabilistic Computation', 12595060, and Digiteo.

\bibliographystyle{abbrvnat}
\setcitestyle{authoryear,open={(},close={)}}
\bibliography{main}

\end{document}